\newtheorem{theorem}{Theorem}[section]
\newtheorem{lemma}[theorem]{Lemma}
\newtheorem{meta-theorem}[theorem]{Meta-Theorem}
\newtheorem{claim}[theorem]{Claim}
\newtheorem{definition}[theorem]{Definition}
\definecolor{darkgreen}{rgb}{0,0.5,0}
\newcommand{\eps}{\varepsilon}
\newcommand{\important}[1]{\textbf{#1}}
\newcommand{\exclude}[1]{}
\newcommand{\STOConly}[1]{}
\newcommand{\noSTOC}[1]{#1}
\newcommand{\FullOrShort}{full}
  \newcommand{\fullOnly}[1]{#1}
	\newcommand{\tempfullOnly}[1]{#1}
  \newcommand{\shortOnly}[1]{}
    \newcommand{\fullOnly}[1]{}
		\newcommand{\tempfullOnly}[1]{{\color{red}#1}}
		\newcommand{\shortOnly}[1]{#1}
    \newcommand{\IncludePictures}[1]{}
\begin{document}

\date{}

\title{Synchronization Strings: Codes for Insertions and Deletions Approaching the Singleton Bound\footnote{Supported in part by the National Science Foundation through grants CCF-1527110 and CCF-1618280.}.}

\author{Bernhard Haeupler\\Carnegie Mellon University\\ \texttt{haeupler@cs.cmu.edu} \and
Amirbehshad Shahrasbi\\Carnegie Mellon University\\ \texttt{shahrasbi@cs.cmu.edu}}

\maketitle

\begin{abstract}
We introduce \emph{synchronization strings}, which provide \important{a novel way of efficiently dealing with \emph{synchronization errors}}, i.e., insertions and deletions. Synchronization errors are strictly more general and much harder to deal with than more commonly considered \emph{half-errors}, i.e., symbol corruptions and erasures. For every $\eps >0$, synchronization strings allow to index a sequence with an $\eps^{-O(1)}$ size alphabet such that one can \important{efficiently transform $\bm k$ synchronization errors into $\bm{(1 + \eps)k}$ half-errors}. 
This powerful new technique has many applications. In this paper, we focus on designing \emph{insdel codes}, i.e., error correcting block codes (ECCs) for insertion-deletion channels.

\smallskip

 While ECCs for both half-errors and synchronization errors have been intensely studied, the later has largely resisted progress. As Mitzenmacher puts it in his 2009 survey \cite{mitzenmacher2009survey}:
``\emph{Channels with synchronization errors \ldots are simply not adequately understood by current theory. Given the near-complete knowledge we have for channels with erasures and errors ... our lack of understanding about channels with synchronization errors is truly remarkable.}'' 
Indeed, it took until 1999 for the first insdel codes with constant rate, constant distance, and constant alphabet size to be constructed and only since 2016 are there constructions of constant rate insdel codes for asymptotically large noise rates. Even in the asymptotically large or small noise regime these codes are polynomially far from the optimal rate-distance tradeoff. This makes the understanding of insdel codes up to this work equivalent to what was known for regular ECCs after Forney introduced concatenated codes in his doctoral thesis 50 years ago. 

\smallskip

A straight forward application of our synchronization \allowbreak strings based indexing method gives a simple black-box construction which \important{transforms any ECC into an equally efficient insdel code} with only a small increase in the alphabet size. This instantly transfers much of the highly developed understanding for regular ECCs into the realm of insdel codes. Most notably, for the complete noise spectrum we obtain efficient ``near-MDS'' insdel codes which get arbitrarily close to the optimal rate-distance tradeoff given by the Singleton bound. 
In particular, for any $\delta \in (0,1)$ and $\eps >0$ we give insdel codes achieving a rate of $1 - \delta - \eps$ over a constant size alphabet that efficiently correct a $\delta$ fraction of insertions or deletions.
\end{abstract}
	
\setcounter{page}{0}
\thispagestyle{empty}

\sloppy

\newpage

\section{Introduction}

Since the fundamental works of Shannon, Hamming, and others the field of coding theory has advanced
our understanding of how to efficiently correct symbol corruptions and erasures. The practical and theoretical
impact of error correcting codes on technology and engineering as well as mathematics, theoretical computer
science, and other fields is hard to overestimate. The problem of coding for timing errors such as closely related insertion and deletion errors, however, while also studied intensely since the 60s, has largely resisted such progress and impact so far. An expert panel~\cite{golomb1963synchronization} in 1963 concluded: \emph{``There has been one glaring hole in [Shannon's] theory; viz., uncertainties in timing, which I will propose to call time noise, have not been encompassed \ldots. Our thesis here today is that the synchronization problem is not a mere engineering detail, but a fundamental communication problem as basic as detection itself!''} however as noted in a comprehensive survey~\cite{mercier2010survey} in 2010: \emph{``Unfortunately, although it has early and often been conjectured that error-correcting codes capable of correcting timing errors could improve the overall performance of communication systems, they are quite challenging to design, which partly explains why a large collection of synchronization techniques not based on coding were developed and implemented over the years.''} or as Mitzenmacher puts in his survey~\cite{mitzenmacher2009survey}: \emph{``Channels with synchronization errors, including both insertions and deletions as well as more general timing errors, are simply not adequately understood by current theory. Given the near-complete knowledge we have for channels with erasures and errors \ldots our lack of understanding about channels with synchronization errors is truly remarkable.''} We, too, believe that the current lack of good codes and general understanding of how to handle synchronization errors is the reason why systems today still spend significant resources and efforts on keeping very tight controls on synchronization while other noise is handled more efficiently using coding techniques. We are convinced that a better theoretical understanding together with practical code constructions will eventually lead to systems which naturally and more efficiently use coding techniques to address synchronization and noise issues jointly. In addition, we feel that better understanding the combinatorial structure underlying (codes for) insertions and deletions will have impact on other parts of mathematics and theoretical computer science.

\smallskip

In this paper, we introduce synchronization strings, a new combinatorial structure which allows efficient synchronization and indexing of streams under insertions and deletions. Synchronization strings and our indexing abstraction provide a powerful and novel way to deal with synchronization issues. They make progress on the issues raised above and have applications in a large variety of settings and problems. We already found applications to channel simulations, synchronization sequences~\cite{mercier2010survey}, interactive coding schemes~\cite{gelles2015coding,kol2013interactive,HaeuplerFOCS14p235,HaeuplerSTOC14p803,HaeuplerFOCS14p403,HaeuplerSODA15p1311}, edit distance tree codes~\cite{braverman2015coding}, and error correcting codes for insertion and deletions and suspect there will be many more. In this paper we focus on the last application, namely, designing efficient error correcting block codes over large alphabets for worst-case insertion-deletion channels. 

\smallskip

The knowledge on efficient error correcting block codes for insertions and deletions, also called \emph{insdel codes}, severely lacks behind what is known for codes for Hamming errors. While Levenshtein~\cite{Levenshtein65} introduced and pushed the study of such codes already in the 60s it took until 1999 for Schulman and Zuckerman~\cite{schulman1999asymptotically} to construct the first insdel codes with constant rate, constant distance, and constant alphabet size. Very recent work of Guruswami et al.~\cite{GW-random15,GL-isit16} in 2015 and 2016 gave the first  constant rate insdel codes for asymptotically large noise rates, via list decoding. These codes are however still polynomially far from optimal in their rate or decodable distance respectively.  In particular, they achieve a rate of $\Omega(\eps^5)$ for a relative distance of $1-\eps$ or a relative distance of $O(\eps^2)$ for a rate of $1 - \eps$, for asymptotically small $\eps > 0$ (see Section~\ref{sec:relatedwork} for a more detailed discussion of related work).

\smallskip

This paper essentially closes this line of work by designing efficient ``near-MDS'' insdel codes which approach the optimal rate-distance trade-off given by the Singleton bound. We prove that for any $0 \leq \delta < 1$ and any constant $\eps > 0$, there is an efficient insdel code over a constant size alphabet with block length $n$ and rate $1 - \delta - \eps$ which can be uniquely and efficiently decoded from any $\delta n$ insertions and deletions. The code construction takes polynomial time; and encoding and decoding can be done in linear and quadratic time, respectively. 
More formally, let us define the edit distance of two given strings as the minimum number of insertions and deletions required to convert one of them to the other one.

\begin{theorem}\label{thm:main}
For any $\eps>0$ and $\delta \in (0,1)$ there exists an encoding map $E: \Sigma^k \rightarrow \Sigma^n$ and a decoding map $D: \Sigma^* \rightarrow \Sigma^k$ such that if $EditDistance(E(m),x) \leq \delta n$ then $D(x) = m$. Further $\frac{k}{n} > 1 - \delta - \eps$, $|\Sigma|=f(\eps)$, and $E$ and $D$ are explicit and can be computed in linear and quadratic time in $n$. 
\end{theorem}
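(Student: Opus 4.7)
The plan is to prove \Cref{thm:main} via a black-box reduction from insdel coding to standard (half-error) coding, using the synchronization-string indexing primitive advertised in the abstract. Concretely, I would start from any near-MDS error correcting code $C_0: \Sigma_0^k \to \Sigma_0^n$ that corrects a $\delta' := (1+\eps/4)\delta$ fraction of half-errors (insertions/deletions converted to erasures and corruptions) at rate $1-\delta'-\eps/4$. Such codes exist over a constant alphabet by concatenation of Reed--Solomon with a good inner code, or by algebraic-geometric constructions. In parallel, I would construct a \emph{synchronization string} $S = s_1 s_2 \cdots s_n$ over an alphabet $\Sigma_S$ of size $f_1(\eps) = \eps^{-O(1)}$, with the property that from any received word $\tilde{S}$ at edit distance at most $\delta n$ from $S$ one can efficiently reconstruct, for all but an $\eps\delta/4$ fraction of positions, the correct index in $\{1,\dots,n\}$ of each surviving symbol. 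The final codeword would be the symbol-wise pairing $E(m)_i = (C_0(m)_i, s_i) \in \Sigma_0 \times \Sigma_S$.

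Given a received string $x$ with $\mathrm{EditDistance}(E(m),x)\le \delta n$, the decoder $D$ would project $x$ onto its $\Sigma_S$ coordinate to obtain a corrupted version $\tilde S$ of $S$, run the synchronization-string decoder to assign an index (or ``erased'') label to each received position, and then populate a length-$n$ array $y \in (\Sigma_0 \cup \{\bot\})^n$: position $j$ is filled with the $\Sigma_0$-part of whichever received symbol was assigned index $j$, or $\bot$ if no such symbol exists or a conflict occurs. If the indexing subroutine is faithful on all but $\eps\delta n / 4$ of its outputs, a short accounting shows $y$ differs from $C_0(m)$ in at most $(1+\eps/2)\delta n$ half-errors (each insertion or deletion can cause at most one misplacement and an $O(\eps)$-fraction of additional mislabelings, by the indexing guarantee). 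Finally, $D$ outputs $C_0^{-1}(y)$ using the half-error decoder of $C_0$.

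For the rate, because $|\Sigma|=|\Sigma_0|\cdot|\Sigma_S|$ and both factors are $f(\eps)$, the rate of $E$ equals the rate of $C_0$, namely $1-\delta'-\eps/4 = 1 - \delta - \eps/4 - (\eps/4)\delta \ge 1-\delta-\eps$. The encoder runs in linear time since both $C_0$ and the construction of $S$ are linear-time; the decoder is quadratic provided the synchronization decoder and the underlying ECC decoder are quadratic, which is the typical cost. The alphabet is of constant size $f(\eps)$, as required.

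The main obstacle, and the heart of the paper, is the \emph{construction and decoding of synchronization strings}: one needs an explicit string $S$ over an $\eps^{-O(1)}$-size alphabet such that after up to $\delta n$ adversarial insertions/deletions, at most $O(\eps\delta n)$ positions are mis-indexed and this re-indexing can be computed efficiently. The naive approach of making every window unique (as in classical synchronization sequences) is too weak because a single deletion can shift many positions; the innovation is designing $S$ so that \emph{every substring is globally locatable up to a small error}, which requires controlling how substrings of $S$ relate under all edit patterns. I would expect the proof of existence to use a probabilistic or algebraic construction with carefully chosen ``rolling'' properties, and the decoding algorithm to rely on a majority/longest-common-subsequence style vote across overlapping windows; establishing that this yields at most an $\eps$-fraction of mislabelings per deletion is the key technical lemma on which \Cref{thm:main} ultimately rests.
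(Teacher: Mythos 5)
Your high-level plan is exactly the paper's: attach a synchronization string symbol by symbol to the codewords of a near-MDS half-error code, decode by first re-indexing the received symbols and then running the half-error decoder. The paper formalizes this as Theorem~\ref{thm:MisDecodeToHalfError} (which converts $k$ misdecodings into $n\delta+2k$ half-errors) and Theorem~\ref{thm:mainECC}. That part of your proposal is accurate, and you also correctly identify the construction and decoding of the synchronization string as the real technical content.

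Two concrete points are off, though. First, the indexing guarantee you rely on --- that after $\delta n$ adversarial insdel errors all but an $\eps\delta n/4$ fraction of positions are correctly re-indexed --- is \emph{not} achieved by any ``find the closest prefix'' style decoder in the paper for general insdel errors. The minimum relative suffix distance decoder gives about $\frac{2}{1-\eps}n\delta$ misdecodings (Theorem~\ref{thm:MinRSDMisDec}), and even the improved RSPD decoder still gives at most $(1+\eps)(3k_i+k_d)$ half-errors, both of which are too weak for Theorem~\ref{thm:main}; the paper explicitly leaves the existence of a sufficiently good minimum-distance indexing decoder open. What makes the theorem work is a genuinely \emph{global} (non-streaming) decoding algorithm (Section~\ref{sec:indelErrors}): iteratively peel off longest common subsequences between $S$ and the received string, and output an index only when all rounds agree. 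Its misdecoding bound is $O(\sqrt{\eps}\,n)$ --- independent of $\delta$ --- not $O(\eps\delta n)$; to hit the target you then set $\eps'\approx\eps^2$ in the synchronization string. The windowed majority/LCS intuition you gesture at is in the right direction, but the key combinatorial property is the $\eps$-synchronization condition $ED(S[i,j),S[j,k))>(1-\eps)(k-i)$, which translates into the $\eps$-self-matching property used to bound bad matches in each LCS round. Second, picking $C_0$ as a Reed--Solomon concatenation or an AG code does not give linear-time encoding, which Theorem~\ref{thm:main} claims; the paper instead uses the linear-time near-MDS expander codes of Guruswami--Indyk~\cite{guruswami2005linear} as the base ECC, and quadratic time on the decoding side comes from the iterated LCS computations (each LCS is $O(n^2)$ and there are $O(1/\sqrt{\eps})$ rounds). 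Correcting these two points --- using a global, LCS-based indexing decoder with a $\delta$-independent $O(\sqrt\eps n)$ misdecoding bound, and swapping in a linear-time near-MDS base code --- turns your outline into the paper's actual proof.
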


We obtain this code via a black-box construction which \important{transforms any ECC into an equally efficient insdel code} with only a small increase in the alphabet size. This transformation, which is a straight forward application of our new synchronization strings based indexing method, is so simple that it can be summarized in one sentence: 
\begin{center}
\bfseries
For \emph{any} efficient length $n$ ECC with alphabet bit size $\frac{\log \eps^{-1}}{\eps}$
, attaching to every codeword, symbol by symbol, a random or suitable pseudorandom string over an alphabet of bit size $\log \eps^{-1}$ results in an efficient insdel code with a rate and decodable distance that changed by at most $\eps$. 
\end{center}

Far beyond just implying Theorem~\ref{thm:main}, this allows to instantly transfer much of the highly developed understanding for regular ECCs into the realm of insdel codes.

\smallskip

Theorem~\ref{thm:main} is obtained by using the ``near-MDS'' expander codes of Guruswami and Indyk~\cite{guruswami2005linear} as a base ECC. These codes generalize the linear time codes of Spielman~\cite{spielman1995linear} and can be encoded and decoded in linear time. Our simple encoding strategy, as outlined above, introduces essentially no additional computational complexity during encoding. Our quadratic time decoding algorithm, however, is slower than the linear time decoding of the base codes from~\cite{guruswami2005linear} but still pretty fast. In particular, a quadratic time decoding for an insdel code is generally very good given that, in contrast to Hamming codes, even computing the distance between the received and the sent/decoded string is an edit distance computation. Edit distance computations in general do usually not run in sub-quadratic time, which is not surprising given the recent SETH-conditional lower bounds~\cite{backurs2015edit}. For the settings of for insertion-only and deletion-only errors we furthermore achieve analogs of Theorem~\ref{thm:main} with linear decoding complexities.

\smallskip

In terms of the dependence of the alphabet bit size on the parameter $\eps$, which characterizes how close a code is to achieving an optimal rate/distance pair summing to one, our transformation seem to inherently produce an alphabet bit size that is near linear in $\frac{1}{\eps}$. However, the same is true for the state of the art linear-time base ECCs~\cite{guruswami2005linear} which have an alphabet bit size of $\Theta(\frac{1}{\eps^2})$. Existentially it is known that an alphabet bit size logarithmic in $\frac{1}{\eps}$ is necessary and sufficient and ECCs based on algebraic geometry~\cite{tsfasman2013algebraic} achieving such a bound up to constants are known, but their encoding and decoding complexities are higher.

\subsection{High-level Overview, Intuition and Overall Organization}\label{sec:overview}

While extremely powerful, the concept and idea behind synchronization strings is easily demonstrated. In this section, we explain the high-level approach taken and provide intuition for the formal definitions and proofs to follow. This section also explains the overall organization of the rest of the paper. 

\subsubsection{Synchronization Errors and Half-Errors}\label{sec:synchandHE}

Consider a stream of symbols over a large but constant size alphabet $\Sigma$ in which some constant fraction $\delta$ of symbols is corrupted. 

There are two basic types of corruptions we will consider, half-errors and synchronization errors. 
 \emph{Half-errors} consist of \emph{erasures}, that is, a symbol being replaced with a special ``?'' symbol indicating the erasure, and \emph{symbol corruptions} in which a symbol is replaced with any other symbol in $\Sigma$. The wording half-error comes from the realization that when it comes to code distances erasures are half as bad as symbol corruptions. An erasure is thus counted as one half-error while a symbol corruption counts as two half-errors (see Section~\ref{sec:def} for more details). 
 \emph{Synchronization errors} consist of \emph{deletions}, that is, a symbol being removed without replacement, and \emph{insertions}, where a new symbol from $\Sigma$ is added anywhere. 

\smallskip
  
It is clear that \important{synchronization errors are strictly more general and harsher than half-errors}. In particular, any symbol corruption, worth two half-errors, can also be achieved via a deletion followed by an insertion. Any erasure can furthermore be interpreted as a deletion together with the often very helpful extra information where this deletion took place. This makes synchronization errors at least as hard as half-errors. The real problem that synchronization errors bring with them however is that they cause sending and receiving parties to become ``out of synch''. This easily changes how received symbols are interpreted and makes designing codes or other systems tolerant to synchronization errors an inherently difficult and significantly less well understood problem. 

\subsubsection{Indexing and Synchronization Strings: Reducing Synchronization Errors to Half-Errors}

There is a simple folklore strategy, which we call \emph{indexing}, that avoids these synchronization problems: Simply enhance any element with a time stamp or element count. More precisely, consecutively number the elements and attach this position count or \emph{index} to each stream element. Now, if we deal with only deletions it is clear that the position of any deletion is easily identified via a missing index, thus transforming it into an erasure. Insertions can be handled similarly by treating any stream index which is received more than once as erased. If both insertions and deletions are allowed one might still have elements with a spoofed or incorrectly received index position caused by a deletion of an indexed symbol which is then replaced by a different symbol with the same index. This however requires two insdel errors. Generally this \emph{trivial indexing strategy} can seen to successfully \emph{transform any $k$ synchronization errors into at most $k$ half-errors}. 

In many applications, however, this trivial indexing cannot be used, because having to attach a $\log n$ bit\footnote{Throughout this paper all logarithms are binary.} long index description to each element of an $n$ long stream is prohibitively costly. Consider for example an error correcting code of constant rate $R$ over some potentially large but nonetheless constant size alphabet $\Sigma$, which encodes $\frac{R n}{\log |\Sigma|}$ bits into $n$ symbols from $\Sigma$. Increasing $\Sigma$ by a factor of $n$ to allow each symbol to carry its $\log n$ bit index would destroy the desirable property of having an alphabet which is independent from the block length $n$ and would furthermore reduce the rate of the code from $R$ to $
\Theta(\frac{R}{\log n})$, which approaches zero for large block lengths. For streams of unknown or infinite length such problems become even more pronounced. 

This is where \emph{synchronization strings} come to the rescue. Essentially, synchronization strings allow to \important{index every element in an infinite stream using only a constant size alphabet} while achieving an arbitrarily good approximate reduction from synchronization errors to half-errors. In particular, using synchronization strings \important{$\bm k$ synchronization errors can be transformed into at most $\bm{ (1 + \eps)k}$ half-errors using an alphabet of size independent of the stream length} and in fact only polynomial in $\frac{1}{\eps}$. Moreover, these synchronization strings have simple constructions and fast and easy decoding procedures. 

Attaching our synchronization strings to the codewords of any efficient error correcting code, which efficiently tolerates the usual symbol corruptions and erasures, transforms any such code into an efficiently decodable insdel code while only requiring a negligible increasing in the alphabet size. This allows to use the decades of intense research in coding theory for Hamming-type errors to be transferred into the much harder and less well understood insertion-deletion setting.

\subsection{Synchronization Strings: Definition, Construction, and Decoding}\label{sec:introsynchstrings}

Next, we want to briefly motivate and explain how we arrive at a natural definition of these magical indexing sequences $S$ over a finite alphabet $\Sigma$ and what intuition lies behind their efficient constructions and decoding procedures.
 
Suppose a sender has attached some indexing sequence $S$ one-by-one to each element in a stream and consider a time $t$ at which a receiver has received a corrupted sequence of the first $t$ index descriptors, i.e., a corrupted version of the length $t$ prefix of $S$. When the receiver tries to guess or decode the current index it should naturally consider all indexing symbols received so far and find the ``best'' prefix of $S$. This suggests that the prefix of length $l$ of a synchronization string $S$ acts as a codeword for the index position $l$ and that one should think of the set of prefixes of $S$ as a code associated with the synchronization string $S$. Naturally one would want such a code to have good distance properties between any two codewords under some distance measure. While edit distance, i.e., the number of insertions and deletions needed to transform one string into another seems like the right notion of distance for insdel errors in general, the prefix nature of the codes under consideration will guarantee that codewords for indices $l$ and $l'>l$ will have edit distance exactly $l'-l$. This implies that even two very long codewords only have a tiny edit distance. On the one hand, this precludes synchronization codes with a large relative edit distance between its codewords. On the other hand, one should see this phenomenon as simply capturing the fact that at any time a simple insertion of an incorrect symbol carrying the correct next indexing symbol will lead to an unavoidable decoding error. Given this natural and unavoidable sensitivity of synchronization codes to recent corruptions, it makes sense to instead use a distance measure which captures the recent density of errors. In this spirit, we suggest the definition of a, to our knowledge, new string distance measure which we call \emph{relative suffix distance}, which intuitively measures the worst fraction of insdel errors to transform suffixes, i.e., recently sent parts of two strings, into each other. 
 This natural measure, in contrast to a similar measure defined in~\cite{braverman2015coding}, turns out to induce a metric space on any set of strings. 

\smallskip

With this natural definitions for an induced set of codewords and a natural distance metric associated with any such set the next task is to design a string $S$ for which the set of codewords has as large of a minimum pairwise distance as possible. When looking for (infinite) sequences that induce such a set of codewords and thus can be successfully used as synchronization strings it became apparent that one is looking for highly irregular and non-self-similar strings over a fixed alphabet $\Sigma$. It turns out that the correct definition to capture these desired properties, which we call $\eps$-synchronization property, states that any two neighboring intervals of $S$ with total length $l$ should require at least $(1-\eps) l$ insertions and deletions to transform one into the other, where $\eps \geq 0$. A one line calculation also shows that this clean  property also implies a large minimum relative suffix distance between any two codewords. Not surprisingly, random strings essentially satisfy this $\eps$-synchronization property, except for local imperfections of self-similarity, such as, symbols repeated twice in a row, which would naturally occur in random sequences about every $|\Sigma|$ positions. This allows us to use the probabilistic method and the general Lov\'{a}sz Local Lemma to prove the existence $\eps$-synchronization strings. This also leads to an efficient randomized construction. 

\smallskip

Finally, decoding any string to the closest codeword, i.e., the prefix of the synchronization string $S$ with the smallest relative suffix distance, can be easily done in polynomial time because the set of synchronization codewords is linear and not exponential in $n$ and (edit) distance computations (to each codeword individually) can be done via the classical Wagner-Fischer dynamic programming approach.

\subsection{More Sophisticated Decoding Procedures}\label{sec:introglobaldecoding}

All this provides an indexing solution which transforms any $k$ synchronization errors into at most $(5+\eps)k$ half-errors. This already leads to insdel codes which achieve a rate approaching $1 - 5\delta$ for any $\delta$ fraction of insdel errors with $\delta < \frac{1}{5}$. While this is already a drastic improvement over the previously best $1 - O(\sqrt{\delta})$ rate codes from \cite{GL-isit16}, which worked only for sufficiently small $\delta$, it is a far less strong result than the near-MDS codes we promised in Theorem~\ref{thm:main} for every $\delta \in (0,1)$.

\smallskip

We were able to improve upon the above strategy slightly by considering an alternative to the relative suffix distance measure, which we call relative suffix pseudo distance RSPD. RSPD was introduced in \cite{braverman2015coding} and while neither being symmetric nor satisfying the triangle inequality, can act as a pseudo distance in the minimum-distance decoder. For any set of $k=k_i+k_d$ insdel errors consisting of $k_i$ insertions and $k_d$ deletions this improved indexing solution leads to at most $(1 + \eps)(3k_i+k_d)$ half-errors which already implies near-MDS codes for deletion-only channels but still falls short for general insdel errors. We leave open the question whether an improved pseudo distance definition can achieve an indexing solution with negligible number of misdecodings for a minimum-distance decoder.

\smallskip 

In order to achieve our main theorem we developed an different strategy. Fortunately, it turned out that achieving a better indexing solution and the desired insdel codes does not require any changes to the definition of synchronization codes, the indexing approach itself, or the encoding scheme but solely required a very different decoding strategy. In particular, instead of decoding indices in a streaming manner we consider more global decoding algorithms. We provide several such decoding algorithms in Section~\ref{sec:improved_decoding}. In particular, we give a simple global decoding algorithm which for which the number of misdecodings goes to zero as the quality $\eps$ of the $\eps$-synchronization string used goes to zero, irrespectively of how many insdel errors are applied. 
	
\smallskip

Our global decoding algorithms crucially build on another key-property which we prove holds for any $\eps$-synchronization string $S$, namely that there is no monotone matching between $S$ and itself which mismatches more than a $\eps$ fraction of indices. Besides being used in our proofs, considering this $\eps$-self-matching property has another advantage. We show that this property is achieved easier than the full $\eps$-synchronization property and that indeed a random string satisfies it with good probability. This means that, in the context of error correcting codes, one can even use a simple uniformly random string as a ``synchronization string''. Lastly, we show that even a $n^{-O(1)}$-approximate $O\left(\frac{\log n}{\log \frac{1}{\eps}}\right)$-wise independent random strings satisfy the desired $\eps$-self-matching property which, using the celebrated small sample space constructions from~\cite{naor1993small} also leads to a deterministic polynomial time construction. 

\smallskip

Lastly, we provide simpler and faster global decoding algorithms for the setting of deletion-only and insertion-only corruptions. These algorithms are essentially greedy algorithms which run in linear time. They furthermore guarantee that their indexing decoding is error-free, i.e., they only output ``I don't know'' for some indices but never produce an incorrectly decoded index. Such decoding schemes have the advantage that one can use them in conjunction with error correcting codes that efficiently recover from erasures (and not necessarily also symbol corruptions). 
	
\subsection{Organization of this Paper}

The organization of this paper closely follows the flow of the high-level description above. 

We start by giving more details on related work in Section~\ref{sec:relatedwork} and introduce notation used in the paper in Section~\ref{sec:def} together with a formal introduction of the two different error types as well as (efficient) error correcting codes and insdel codes. In Section~\ref{sec:indexing}, we formalize the indexing problem and (approximate) solutions to it. Section~\ref{sec:codings} shows how any solution to the indexing problem can be used to transform any regular error correcting codes into an insdel code. Section~\ref{sec:sync} introduces the relative suffix distance and $\eps$-synchronization strings, proves the existence of $\eps$-synchronization strings and provides an efficient construction. Section~\ref{sec:sync_decoding} shows that the minimum suffix distance decoder is efficient and leads to a good indexing solution. We elaborate on the connection between $\eps$-synchronization strings and the $\eps$-self-matching property in Section~\ref{sec:SelfMatching} and provide our improved decoding algorithms in the remainder of Section~\ref{sec:improved_decoding}. 

\global\def\RelatedWork{
\subsection{Related Work}\label{sec:relatedwork} 

Shannon was the first to systematically study reliable
communication. He introduced random error channels, defined
information quantities, and gave probabilistic existence proofs of
good codes. Hamming was the first to look at worst-case errors and
code distances as introduced above. Simple counting arguments on the
volume of balls around codewords given in the 50's by Hamming and
Gilbert-Varshamov produce simple bounds on the rate of $q$-ary codes
with relative distance $\delta$. In particular, they show the
existence of codes with relative distance $\delta$ and rate at least
$1 - H_q(\delta)$ where $H_q(x) = x \log(q-1) - \frac{x \log x -
(1-x) \log (1-x)}{\log q}$ is the $q$-ary entropy function. This means
that for any $\delta < 1$ and $q = \omega(1/\delta)$ there exists codes with distance
$\delta$ and rate approaching $1 - \delta$. Concatenated codes and the
generalized minimum distance decoding procedure introduced by Forney
in 1966 led to the first codes which could recover from constant error fractions $\delta \in (0,1)$ while having polynomial time encoding and decoding procedures. The rate achieved by concatenated codes for large alphabets with sufficiently small distance $\delta$ comes out to be $1 - O(\sqrt{\delta})$. On the other hand, for $\delta$ sufficiently close to one, one can achieve a constant rate of $O(\delta^2)$. Algebraic geometry codes suggested by Goppa in 1975 later lead to error correcting codes which for every $\eps >0$ achieve the optimal rate of $1 - \delta - \eps$ with an alphabet size polynomial in $\eps$ while being able to efficiently correct for a $\delta$ fraction of half-errors~\cite{tsfasman2013algebraic}. 

While this answered the most basic questions, research since then has
developed a tremendously powerful toolbox and selection of explicit
codes. It attests to the importance of error correcting codes that
over the last several decades this research direction has developed into the
incredibly active field of coding theory with hundreds of researchers
studying and developing better codes. A small and highly incomplete
subset of important innovations include rateless codes, such as, LT
codes~\cite{LTcodes}, which do not require to fix a desired distance
at the time of encoding, explicit expander
codes~\cite{spielman1995linear,guruswami2005linear} which allow linear time encoding and
decoding, polar codes~\cite{guruswami2015polar,guruswami2015entropy} which
can approach Shannon's capacity polynomially fast, network
codes~\cite{li2003linear} which allow intermediate nodes in a network
to recombine codewords, and efficiently list decodable
codes~\cite{guruswami2008explicit} which allow to list-decode codes of
relative distance $\delta$ up to a fraction of about $\delta$ symbol corruptions.

While error correcting codes for insertions and deletions have also been intensely studied, our understanding of them is much less well developed. We refer to the 2002 survey by Sloan~\cite{sloane2002single} on single-deletion codes, the 2009 survey by Mitzenmacher~\cite{mitzenmacher2009survey} on codes for random deletions and the most general 2010 survey by Mercier et al.~\cite{mercier2010survey} for the extensive work done around codes for synchronization errors and only mention the results most closely related to Theorem~\ref{thm:main} here:  Insdel codes were first considered by Levenshtein~\cite{Levenshtein65} and since then many bounds and constructions for such codes have been given. However, while essentially the same volume and sphere packing arguments as for regular codes show that there exists insdel codes capable of correcting a fraction $\delta$ of insdel erros with rate $1 - \delta$, no efficient constructions anywhere close to this rate-distance tradeoff are known. Even the construction of efficient insdel codes over a constant alphabet with any (tiny) constant relative distance and any (tiny) constant rate had to wait until Schulman and Zuckerman gave the first such code in 1999~\cite{schulman1999asymptotically}. Over the last two years Guruswami et al. provided new codes improving over this state of the art the asymptotically small or large noise regime by giving the first codes which achieve a constant rate for noise rates going to one and codes which provide a rate going to one for an asymptotically small noise rate.  In particular, \cite{GW-random15} gave the first  efficient codes codes over fixed alphabets to correct a deletion  fraction approaching $1$, as well as efficient binary codes to correct a small constant fraction of deletions with rate approaching $1$. These codes could, however, only be efficiently decoded for deletions and not insertions. A follow-up work gave new and improved codes with similar rate-distance tradeoffs which can be efficiently decoded from insertions and deletions~\cite{GL-isit16}. In particular, these codes achieve a rate of $\Omega(\delta^5)$ and $1 - \tilde{O}(\sqrt{\delta})$ while being able to efficiently recover from a $\delta$ fraction of insertions and deletions. These works put the current state of the art for error correcting codes for insertions and deletions pretty much equal to what was known for regular error correcting codes 50 years ago, after Forney's 1965 doctoral thesis. 
}
\RelatedWork

\section{Definitions and Preliminaries}\label{sec:def}
In this section, we provide the notation and definitions we will use throughout the rest of the paper. 

\subsection{String Notation and Edit Distance}

\noindent \textbf{String Notation.} For two strings $S \in \Sigma^n$ and $S' \in \Sigma^{n'}$ be two strings over alphabet $\Sigma$. We define $S \cdot S' \in \Sigma^{n+n'}$ to be their concatenation. For any positive integer $k$ we define $S^k$ to equal $k$ copies of $S$ concatenated together. For $i,j \in \{1, \dots, n\}$, we denote the substring of $S$ from the $i^{th}$ index through and including the $j^{th}$ index as $S[i,j]$. Such a consecutive substring is also called a \emph{factor} of $S$. For $i < 1$ we define $S[i,j] = \bot^{-i+1} \cdot S[1,j]$ where $\bot$ is a special symbol not contained in $\Sigma$. We refer to the substring from the $i^{th}$ index through, but not including, the $j^{th}$ index as $S[i,j)$. The substrings $S(i,j]$ and $S[i,j]$ are similarly defined. Finally, $S[i]$ denotes the $i^{th}$ symbol of $S$ and $|S| = n$ is the length of $S$. Occasionally, the alphabets we use are the cross-product of several alphabets, i.e. $\Sigma = \Sigma_1 \times \cdots \times \Sigma_n$. If $T$ is a string over $\Sigma,$ then we write $T[i] = \left[a_1, \dots, a_n\right]$, where $a_i \in \Sigma_i$. 

\smallskip

\noindent \textbf{Edit Distance.} Throughout this work, we rely on the well-known \emph{edit distance} metric defined as follows.
\begin{definition}[Edit distance]
The \emph{edit distance} $ED(c,c')$ between two strings $c,c' \in \Sigma^*$ is the minimum number of insertions and deletions required to transform $c$ into $c'$.
\end{definition}
It is easy to see that edit distance is a metric on any set of strings and in particular is symmetric and satisfies the triangle inequality property. Furthermore, $ED\left(c,c'\right) = |c| + |c'| - 2\cdot LCS\left(c,c'\right)$, where $LCS\left(c,c'\right)$ is the longest common substring of $c$ and $c'$.

\smallskip

We also use some \emph{string matching} notation from ~\cite{braverman2015coding}:

\begin{definition}[String matching]
Suppose that $c$ \allowbreak and $c'$ are two strings in $\Sigma^*$, and suppose that $*$ is a symbol not in $\Sigma$. Next, suppose that there exist two strings $\tau_1$ and $\tau_2$ in $\left(\Sigma \cup \{*\}\right)^*$ such that $|\tau_1| = |\tau_2|$, $del\left(\tau_1\right) = c$, $del(\tau_2) = c'$, and $\tau_1[i] \approx \tau_2[i]$ for all $i \in \left\{1, \dots, |\tau_1|\right\}$. Here, $del$ is a function that deletes every $*$ in the input string and $a \approx b$ if $a = b$ or one of $a$ or $b$ is $*$. Then we say that $\tau = \left(\tau_1, \tau_2\right)$ is a \emph{string matching} between $c$ and $c'$ (denoted $\tau: c \to c'$). We furthermore denote with $sc\left(\tau_i\right)$ the number of $*$'s in $\tau_i$.
\end{definition}

Note that the \emph{edit distance} $ED(c,c')$ between strings $c,c, \in \Sigma^*$ is exactly equal to $\min_{\tau: c \to c'}\left\{sc\left(\tau_1\right) + sc\left(\tau_2\right)\right\}$.

\global\def\ECCPrelims{

\subsection{Error Correcting Codes}

Next we give a quick summary of the standard definitions and formalism around error correcting codes. This is mainly for completeness and we remark that readers already familiar with basic notions of error correcting codes might want to skip this part.

\paragraph{Codes, Distance, Rate, and Half-Errors}

\newcommand{\Sigmain}{{\Sigma'}}
\newcommand{\Sigmaout}{{\Sigma}}

An \emph{error correcting code} $C$ is an injective function which
takes an input string $s \in (\Sigmain)^{n'}$ over alphabet $\Sigmain$
of length $n'$ and generates a \emph{codeword} $C(s) \in \Sigmaout^n$
of length $n$ over alphabet $\Sigmaout$. The length $n$ of a codeword
is also called the \emph{block length}. The two most important
parameters of a code are its distance $\Delta$ and its rate
$R$. The \emph{rate} $R = \frac{n \log |\Sigmaout|}{n' \log
|\Sigmain|}$ measures what fraction of bits in the codewords produced
by $C$ carries non-redundant information about the
input. The \emph{code distance} $\Delta(C)
= \min_{s,s'} \Delta(C(s),C(s'))$ is simply the minimum Hamming
distance between any two codewords. The \emph{relative distance}
$\delta(C) = \frac{\Delta(C)}{n}$ measures what fraction of output
symbols need to be corrupted to transform one codeword into another.

It is easy to see that if a sender sends out a codeword $C(s)$ of code $C$ with relative distance $\delta$ a receiver can uniquely recover $s$ if she receives a codeword in which less than a $\delta$ fraction of symbols are affected by an \emph{erasure}, i.e., replaced by a special ``$?$'' symbol.
Similarly, a receiver can uniquely recover the input $s$ if less than
$\delta / 2$ \emph{symbol corruptions}, in which a symbol is replaced
by any other symbol from $\Sigmaout$, occurred. More generally it is
easy to see that a receiver can recover from any combination of
$k_{e}$ erasures and $k_{c}$ corruptions as long as $k_{e} + 2 k_{c}
< \delta n$. This motivates defining half-errors to incorporate both
erasures and symbol corruptions where an erasure is counted as a
single half-error and a symbol corruption is counted as two
half-errors. In summary, any code of distance $\delta$ can tolerate
any error pattern of less than $\delta n$ half-errors.

We remark that in addition to studying codes with decoding guarantees
for worst-case error pattern as above one can also look at more benign
error models which assume a distribution over error patterns, such as
errors occurring independently at random. In such a setting one looks
for codes which allow unique recovery for typical error patterns,
i.e., one wants to recover the input with probability tending to $1$
rapidly as the block length $n$ grows. While synchronization strings 
might have applications for such codes as well, this paper focuses
exclusively on codes with good distance guarantees which tolerate an
arbitrary (worst-case) error pattern.

\paragraph{Synchronization Errors}
In addition to half-errors, we study \emph{synchronization errors}
which consist of \emph{deletions}, that is, a symbol being removed
without replacement, and \emph{insertions}, where a new symbol from
$\Sigmaout$ is added anywhere. It is clear
that \important{synchronization errors are strictly more general and
harsh than half-errors} (see Section~\ref{sec:synchandHE}). The above formalism of codes, rate, and distance works equally well for synchronization errors if one replaces the Hamming distance with edit distance. Instead of measuring the number of symbol corruptions required to transform one string into another, \emph{edit distance}
measures the minimum number of insertions and deletions to do so. An insertion-deletion error correcting code, or \emph{insdel code} for short, of relative distance $\delta$ is a set of codewords for which at least $\delta n$ insertions and deletions are needed to transformed any codeword into another. Such a code can correct any combination of less than $\delta n/2$ insertions and deletions. We remark that it is possible for two codewords of length $n$ to have edit distance up to $2n$ putting the (minimum) relative edit distance between zero and two and allowing for constant rate codes which can tolerate $(1 - \eps)n$ insdel errors. 

\paragraph{Efficient Codes}
In addition to codes with a good minimum distance, one furthermore
wants efficient algorithms for the encoding and error-correction tasks
associated with the code. Throughout this paper we say a code is
efficient if it has encoding and decoding algorithms running in time
polynomial in the block length. While it is often not hard to show
that random codes exhibit a good rate and distance, designing codes
which can be decoded efficiently is much harder. We remark that most
codes which can efficiently correct for symbol corruptions are also
efficient for half-errors. For insdel codes the situation is slightly
different. While it remains true that any code that can uniquely be
decoded from any $\delta(C)$ fraction of deletions can also be decoded
from the same fraction of insertions and
deletions~\cite{Levenshtein65} doing so efficiently is often much
easier for the deletion-only setting than the fully general insdel
setting. 
}

\fullOnly{\ECCPrelims}
\shortOnly{\smallskip \tempfullOnly{Basic notations and definitions on error correcting codes can be found in Appendix~\ref{app:ECCPrelims}}
\tempfullOnly{\ECCPrelims}}.

\section{The Indexing Problem}\label{sec:indexing}
In this section, we formally define the indexing problem. In a nutshell, this problem is that of sending a suitably chosen string $S$ of length $n$ over an insertion-deletion channel such that the receiver will be able to figure out the indices of most of the symbols he receives correctly. This problem can be trivially solved by sending the string $S = 1,2,\ldots, n$ over the alphabet $\Sigma = \{1,\ldots,n\}$ of size $n$. Interesting solution to the indexing problem, however, do almost as well while using a finite size alphabet. While very intuitive and simple, the formalization of this problem and its solutions enables an easy use in many applications. 

To set up an $(n,\delta)$-indexing problem, we fix $n$, i.e., the number of symbols which are being sent, and the maximum fraction $\delta$ of symbols that can be inserted or deleted. We further call the string $S$ the \emph{synchronization string}. Lastly, we describe the influences of the $n\delta$ worst-case insertions and deletions which transform $S$ into the related string $S_\tau$ in terms of a string matching $\tau$. In particular, $\tau=(\tau_1, \tau_2)$ is the string matching from $S$ to $S_\tau$ such that $del(\tau_1) = S$, $del(\tau_2) = S_\tau$, and for every $k$
$$(\tau_1[k], \tau_2[k]) = \Bigg\{
\begin{tabular}{ll}
$(S[i], *)$ & if $S[i]$ is deleted\\
$(S[i], S_\tau[j])$ & if $S[i]$ is delivered as $S_\tau[j]$\\
$(*, S_\tau[j])$ & if $S_\tau[j]$ is inserted
\end{tabular}$$
where $i = |del(\tau_1[1, k])|$ and $j=|del(\tau_2[1, k])|$.

\begin{definition}[$(n,\delta)$-Indexing Algorithm]
The pair $(S, \mathcal{D}_S)$ consisting of a synchronization string $S\in \Sigma^n$ and an algorithm $\mathcal{D}_S$ is called a $(n,\delta)$-indexing algorithm over alphabet $\Sigma$ if for any set of $n \delta$ insertions and deletions represented by $\tau$ which alter $S$ to a string $S_\tau$, the algorithm $\mathcal{D}_S(S_\tau)$ outputs either $\bot$ or an index between $1$ and $n$ for every symbol in $S_{\tau}$. 
\end{definition}

The $\bot$ symbol here represents an ``I don't know'' response of the algorithm while an index $j$ output by $\mathcal{D}_S(S_\tau)$ for the $i^{th}$ symbol of $S_\tau$ should be interpreted as the $(n,\delta)$-indexing algorithm guessing that this was the $j^{th}$ symbol of $S$. One seeks algorithms that decode as many indices as possible correctly. Naturally, one can only \emph{correctly decode} indices that were \emph{correctly transmitted}. Next we give formal definitions of both notions:

\begin{definition}[Correctly Decoded Index]
An $(n,\delta)$ indexing algorithm $(S, \mathcal{D}_S)$ decodes index $j$ correctly under $\tau$ if $\mathcal{D}_S(S_\tau)$ outputs $i$ and there exists a $k$ such that $i = |del(\tau_1[1,k])|,~j = |del(\tau_2[1,k])|,~\tau_1[k] = S[i],~\tau_2[k] = S_\tau[j]$
\end{definition}

We remark that this definition counts any $\bot$ response as an incorrect decoding.

\fullOnly{\begin{definition}[Successfully Transmitted Symbol]}
\shortOnly{\begin{definition}[Successfully Transmitted]}
For string $S_\tau$, which was derived from a synchronization string $S$ via $\tau = (\tau_1,\tau_2)$, we call the $j^{th}$ symbol $S_\tau[j]$ successfully transmitted if it stems from a symbol coming from $S$, i.e., if there exists a $k$ such that $|del(\tau_2[1,k])|=j$ and $\tau_1[k] = \tau_2[k]$.
\end{definition}

We now define the quality of an $(n,\delta)$-indexing algorithm by counting the maximum number of misdecoded indices among those that were successfully transmitted. Note that the trivial indexing strategy with $S = 1,\ldots,n$ which outputs for each symbol the symbol itself has no misdecodings. One can therefore also interpret our quality definition as capturing how far from this ideal solution an algorithm is (stemming likely due to the smaller alphabet which is used for $S$).

\fullOnly{\begin{definition}[Misdecodings of an $(n,\delta)$-Indexing Algorithm]}
\shortOnly{\begin{definition}[Misdecodings]}
Let $(S, \mathcal{D}_S)$ be an $(n,\delta)$-indexing algorithm. We say this algorithm has at most $k$ misdecodings if for any $\tau$ corresponding to at most $n \delta$ insertions and deletions the number of correctly transmitted indices that are incorrectly decoded is at most $k$.
\end{definition}

\shortOnly{In Appendix~\ref{app:MoreOnIndexing}, we will define two important characteristics of the solutions of an indexing problem and provide a Table summarizing all solutions to the indexing problem which are going to be discussed in this paper.}

\global\def\FurtherOnIndexing
{

Now, we introduce two further useful properties that a $(n,\delta)$-indexing algorithm might have. 

\begin{definition}[Error-free Solution]
We call $(S, \mathcal{D}_S)$ an error-free $(n, \delta)$-indexing algorithm with respect to a set of deletion or insertion patterns if every index output is either $\bot$ or correctly decoded. In particular, the algorithm never outputs an incorrect index, even for indices which are not correctly transmitted. 
\end{definition}

It is noteworthy that error-free solutions are essentially only obtainable when dealing with the insertion-only or deletion-only setting. In both cases, the trivial solution with $S = 1, \cdots, n$ which decodes any index that was received exactly once is error-free. We later give some algorithms which preserve this nice property, even over a smaller alphabet, and show how error-freeness can be useful in the context of error correcting codes. 

Lastly, another very useful property of some $(n, \delta)$-indexing algorithms is that their decoding process operates in a streaming manner, i.e, the decoding algorithm decides the index output for $S_\tau[j]$ independently of $S_\tau[j']$ where $j' > j$. While this property is not particularly useful for the error correcting block code application put forward in this paper, it is an extremely important and strong property which is crucial in several applications we know of, such as, rateless error correcting codes, channel simulations, interactive coding, edit distance tree codes, and other settings. 

\begin{definition}[Streaming Solutions]
We call $(S, \mathcal{D}_S)$ a streaming solution if the decoded index for the $i$th element of the received string $S_\tau$ only depends on $S_\tau[1, i]$.
\end{definition}

Again, the trivial solution for $(n,\delta)$-index decoding problem over an alphabet of size $n$ with zero misdecodings can be made streaming by outputting for every received symbols the received symbol itself as an index. This solution is also error-free for the deletion-only setting but not error-free for the insertion-only setting. In fact, it is easy to show that an algorithm cannot be both streaming and error-free in any setting which allows insertions. 

\smallskip

Overall, the important characteristics of an $(n,\delta)$-indexing algorithm are (a) its alphabet size $|\Sigma|$, (b) the bound on the number of misdecodings, (c) the complexity of the decoding algorithm $\mathcal{D}$, (d) the preprocessing complexity of constructing the string $S$, (e) whether the algorithm works for the insertion-only, the deletion-only or the full insdel setting, and (f) whether the algorithm satisfies the streaming or error-freeness property. Table~\ref{tbl:indexingSolutions} gives a summary over the different solutions for the $(n,\delta)$-indexing problem we give in this paper. 

\begin{table*}[h]
\centering
 \begin{tabular}{|| c | c r c c c||} 
 \hline
 Algorithm & Type & Misdecodings & {Error-free} & Streaming & Complexity \\ [0.5ex] 
 \hline\hline
Section \ref{sec:sync_decoding} & ins/del & $(2 + \eps) \cdot n\delta$ &   & \checkmark & $O(n^4)$ \\
Section \ref{sec:indelErrors} & ins/del & $\sqrt{\eps} \cdot  n\,\ $   &  &  & $O\left(n^2/\sqrt{\eps}\right)$ \\
Section \ref{sec:delOnly} & del 		& $\eps \cdot n\delta$ &  & \checkmark & $O(n)$\\
Section \ref{sec:insErrors} & ins & $(1+\eps) \cdot n\delta$ & \checkmark &  & $O(n)$ \\
Section \ref{sec:insErrors} & del 		& $\eps \cdot  n\delta$ & \checkmark &  & $O(n)$ \\
Section \ref{sec:improvedStreaming} & ins/del & $(1+\eps) \cdot n\delta$ &  & \checkmark & $O(n^4)$ \\
 \hline
\end{tabular}
\caption{Properties and quality of $(n,\delta)$-indexing algorithms with $S$ being a $\eps$-synchronization string}\label{tbl:indexingSolutions}
\end{table*}
}

\tempfullOnly{\FurtherOnIndexing}

\section{Insdel Codes via Indexing Algorithms}\label{sec:codings}
Next, we show how a good $(n,\delta)$-indexing algorithms $(S, \mathcal{D}_S)$ over alphabet $\Sigma_S$ allows one to transform any regular ECC $\mathcal{C}$ with block length $n$ over alphabet $\Sigma_{\mathcal{C}}$ which can efficiently correct half-errors, i.e., symbol corruptions and erasures, into a good insdel code over alphabet $\Sigma = \Sigma_{\mathcal{C}} \times \Sigma_S$.

To this end, we simply attach $S$ symbol-by-symbol to every codeword of $\mathcal{C}$. On the decoding end, we first decode the indices of the symbols arrived using the indexing part of each received symbol and then interpret the message parts as if they have arrived in the decoded order. Indices where zero or multiple symbols are received get considered as erased. We will refer to this procedure as \emph{the indexing procedure}. Finally, the decoding algorithm $\mathcal{D}_{\mathcal{C}}$ for $\mathcal{C}$ is used. These two straight forward algorithms are formally described as Algorithm~\ref{alg:NewECCsEncoder} and Algorithm~\ref{alg:NewECCsDecoder}\shortOnly{ in Appendix~\ref{app:ECCviaIndexing}. The missing proofs in this section are available in Appendix~\ref{app:ECCviaIndexing}}.

\begin{theorem}\label{thm:MisDecodeToHalfError}
If $(S, \mathcal{D}_S)$ guarantees $k$ misdecodings for the $(n, \delta)$-index problem, then the indexing procedure recovers the codeword sent up to $n\delta + 2k$ half-errors, i.e., half-error distance of the sent codeword and the one recovered by the indexing procedure is at most $n\delta + 2k$. If $(S, \mathcal{D}_S)$ is error-free, the indexing procedure recovers the codeword sent up to $n\delta + k$ half-errors. 
\end{theorem}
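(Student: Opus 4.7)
My plan is to track, for each index $i \in [n]$ of the original codeword, how many decoded symbols land at $i$ after the indexing procedure, and then partition the $n$ positions by what the receiver actually writes down there. This lets me count half-errors (an erasure counting $1$, a corruption counting $2$) as a linear function of a few simple counts that are all controlled by the number of deletions $n_d$, the number of insertions $n_i$, and the number of misdecodings $k_1 \le k$ (with $n_d + n_i \le n\delta$).

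More precisely, for each $i$ let $m_i$ be the number of received symbols whose decoded index is $i$. I partition the indices into four types: $B_1$ where $m_i = 1$ and that unique symbol is the correctly transmitted and correctly decoded original (no error), $B_2$ where $m_i = 1$ but that symbol is not the correct original (corruption, $2$ half-errors), $A$ where $m_i = 0$ (erasure, $1$ half-error), and $C$ where $m_i \ge 2$ (collision, which the procedure records as an erasure, $1$ half-error). The total number of half-errors is therefore $A + 2B_2 + C$. Since at most $k_1$ of the $n - n_d$ successfully transmitted symbols are misdecoded, exactly $X := n - n_d - k_1$ correctly decoded symbols exist, each contributing $1$ to some $m_i$ with $m_i \ge 1$. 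Letting $y$ count how many of these correct decodings nevertheless fall into a $C$-index (i.e., collide), a short rearrangement gives $B_1 = X - y$, $A = n_d + k_1 + y - B_2 - C$, and hence the total half-error count simplifies to $n_d + k_1 + y + B_2$.

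The main step — and the one I expect to require the most care — is bounding $y + B_2$ by the total number of ``extras,'' where an extra is any decoded symbol that is not a correct decoding. Each $B_2$-index consumes at least one extra (its single decoded symbol is not the original), and each $y$-index also consumes at least one extra (the collision there is caused by some non-correct decoding in addition to the correct one). Since distinct extras are counted, $y + B_2$ is at most the total number of extras. In the general setting, extras come either from inserted symbols or from misdecoded originals whose decoded index is not $\bot$, yielding at most $n_i + k_1$ extras. Combining gives half-errors $\le n_d + n_i + 2k_1 \le n\delta + 2k$.

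For the error-free refinement, the assumption is that $\mathcal{D}_S$ never outputs an incorrect index, so every one of the $k_1$ misdecoded symbols is decoded to $\bot$ and contributes nothing to any $m_i$. Consequently all extras come from insertions, so $y + B_2 \le n_i$, and the bound tightens to $n_d + k_1 + n_i \le n\delta + k$, which is the second claim. The whole argument is purely combinatorial — it does not use any property of $\mathcal{D}_S$ beyond the definition of ``misdecoding'' and, for the second part, ``error-free'' — so no additional assumptions on the synchronization string $S$ are needed.
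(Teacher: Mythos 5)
Your proposal is correct and follows essentially the same accounting as the paper's proof: a baseline of $n_d + k_1$ erasures coming from deleted or misdecoded symbols, plus at most one additional half-error charged to each inserted or incorrectly decoded symbol (your ``extras''), which either creates a collision at a correctly decoded index or upgrades an erasure to a corruption. Your partition into the classes $A$, $B_1$, $B_2$, $C$ and the distinct-extras argument is just a more careful formalization of the paper's informal charging step, and your handling of the error-free case matches the paper's as well.
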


\global\def\ProofOfMisDecodeToHalfError{
\fullOnly{\begin{proof}}
\shortOnly{\begin{proof}[of Theorem~\ref{thm:MisDecodeToHalfError}]}
Consider a set insertions and deletions described by $\tau$ consisting of $D_\tau$ deletions and $I_\tau$ insertions. Note that among $n$ encoded symbols, at most $D_\tau$ were deleted and less than $k$ of are decoded incorrectly. Therefore, at least $n - D_\tau - k$ indices are decoded correctly. On the other hand at most $D_\tau + k$ of the symbols sent are not decoded correctly. Therefore, if the output only consisted of correctly decoded indices for successfully transmitted symbols, the output would have contained up to $D_\tau + k$ erasures and no symbol corruption, resulting into a total of $D_\tau + k$ half-errors. However, any symbol which is being incorrectly decoded or inserted may cause a correctly decoded index to become an erasure by making it appear multiple times or change one of original $I_\tau + k$ erasures into a corruption error by making the indexing procedure mistakenly decode an index. Overall, this can increase the number of half-errors by at most $I_\tau + k$ for a total of at most 
$D_\tau + k + I_\tau + k = D_\tau + I_\tau + 2k = n\delta + 2k$ half-errors. For error-free indexing algorithms, any misdecoding does not result in an incorrect index and the number of incorrect indices  is $I_\tau$ instead of $I_\tau + k$ leading to the reduced number of half-errors in this case. 
\end{proof}
}

\fullOnly{\ProofOfMisDecodeToHalfError}

This makes it clear that applying an ECC $\mathcal{C}$ which is resilient to $n\delta + 2k$ half-errors enables the receiver side to fully recover $m$. 

\global\def\EncDecAlgorithms{
\begin{algorithm}
\caption{Insertion Deletion Encoder}
\begin{algorithmic}[1]\label{alg:NewECCsEncoder}
\REQUIRE $n$, $m=m_1,\cdots,m_n$

\STATE $\tilde m = \mathcal{E}_{\mathcal{C}} (m)$
\FOR {$\texttt{i} = 1$ to  $n$}
\STATE $M_i = (m_i, S[i])$ 
\ENDFOR
\medskip
\ENSURE $M$
\end{algorithmic}
\end{algorithm}

\begin{algorithm}
\caption{Insertion Deletion Decoder}
\begin{algorithmic}[1]\label{alg:NewECCsDecoder}
\REQUIRE $n$, $M' = (\tilde m', S')$

\STATE $Dec \leftarrow \mathcal{D}_{S}(S')$
\FOR {$\texttt{i} = 1$ to  $n$}
\IF {there is a unique $j$ for which $Dec[j] = i$}
\STATE $m'_i = \tilde m'_j$ 
\ELSE
\STATE $m'_i = $ ?
\ENDIF
\ENDFOR

\STATE $m = \mathcal{D}_{\mathcal{C}} (m')$
\ENSURE $m$
\end{algorithmic}
\end{algorithm}
}
\fullOnly{\EncDecAlgorithms}

Next, we formally state how a good $(n,\delta)$-indexing algorithm $(S, \mathcal{D}_S)$ over alphabet $\Sigma_S$ allows one to transform any regular ECC $\mathcal{C}$ with block length $n$ over alphabet $\Sigma_{\mathcal{C}}$ which can efficiently correct half-errors, i.e., symbol corruptions and erasures, into a good insdel code over alphabet $\Sigma = \Sigma_{\mathcal{C}} \times \Sigma_S$. The following Theorem is a corollary of Theorem~\ref{thm:MisDecodeToHalfError} and the definition of the indexing procedure:

\begin{theorem}\label{thm:mainECC}
Given an (efficient) $(n, \delta)$-indexing algorithm $(S, \mathcal{D}_S)$ over alphabet $\Sigma_S$ with at most $k$ misdecodings, and decoding complexity $T_{\mathcal{D}_{S}}(n)$ and an (efficient) ECC $\mathcal{C}$ over alphabet $\Sigma_{\mathcal{C}}$ with rate $R_{\mathcal{C}}$, encoding complexity $T_{\mathcal{E}_{\mathcal{C}}}$, and decoding complexity $T_{\mathcal{D}_{\mathcal{C}}}$ that corrects up to $n\delta + 2k$ half-errors, one obtains an insdel code that can be (efficiently) decoded from up to $n\delta$ insertions and deletions. The rate of this code is
$$R_{\mathcal{C}} \cdot \left(1 - \frac{\log \Sigma_S}{\log \Sigma_{\mathcal{C}}}\right)$$
The encoding complexity remains $T_{\mathcal{E}_{\mathcal{C}}}$, the decoding complexity is $T_{\mathcal{D}_{\mathcal{C}}} + T_{\mathcal{D}_{S}}(n)$ and the preprocessing complexity of constructing the code is the complexity of constructing $\mathcal{C}$ and $S$.\\
Furthermore, if $(S, \mathcal{D}_S)$  is error-free, then choosing a $\mathcal{C}$ which can recover only from $n\delta + k$ erasures is sufficient to produce the same quality code. 
\end{theorem}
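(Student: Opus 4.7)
The plan is to assemble the theorem as a direct corollary of Theorem~\ref{thm:MisDecodeToHalfError}, with routine bookkeeping for rate and complexity. First I would set up the construction explicitly: encode a message $m$ by running $\mathcal{E}_{\mathcal{C}}$ to produce a codeword $\tilde{m} \in \Sigma_{\mathcal{C}}^n$, then form the output in $(\Sigma_{\mathcal{C}} \times \Sigma_S)^n$ by attaching $S[i]$ to the $i$-th coordinate, exactly as in Algorithm~\ref{alg:NewECCsEncoder}. Decoding is Algorithm~\ref{alg:NewECCsDecoder}: project the received word onto the $\Sigma_S$-coordinate, run $\mathcal{D}_S$ on this noisy copy of $S$ to obtain index guesses, fill an intermediate string $m' \in (\Sigma_{\mathcal{C}} \cup \{?\})^n$ by placing the received $\Sigma_{\mathcal{C}}$-symbol at its decoded index (using $?$ when no or multiple symbols claim an index), and finally invoke $\mathcal{D}_{\mathcal{C}}$ on $m'$.

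Next I would verify correctness under at most $n\delta$ insertions and deletions. Because insertions and deletions act coordinatewise on the combined symbol, the $\Sigma_S$-projection of the received word is precisely $S_\tau$ for some $\tau$ with at most $n\delta$ edits. By Theorem~\ref{thm:MisDecodeToHalfError}, the intermediate string $m'$ differs from $\tilde{m}$ by at most $n\delta + 2k$ half-errors, so the hypothesis on $\mathcal{C}$ guarantees $\mathcal{D}_{\mathcal{C}}(m') = m$. For the error-free variant, I would point out that a never-incorrect $\mathcal{D}_S$ can never cause a misrouted symbol to overwrite the content of another index, so every disagreement between $m'$ and $\tilde{m}$ is an erasure (a $?$ coming from a deleted or duplicate-index position); Theorem~\ref{thm:MisDecodeToHalfError} then bounds these erasures by $n\delta + k$, matching the weaker hypothesis in the ``furthermore'' clause.

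For the rate claim I would compute directly: the map takes an input of $R_{\mathcal{C}} n \log |\Sigma_{\mathcal{C}}|$ bits to an output of $n\log(|\Sigma_{\mathcal{C}}| \cdot |\Sigma_S|)$ bits, yielding rate
\[
\frac{R_{\mathcal{C}} \log |\Sigma_{\mathcal{C}}|}{\log |\Sigma_{\mathcal{C}}| + \log |\Sigma_S|} \;=\; \frac{R_{\mathcal{C}}}{1 + \log|\Sigma_S|/\log|\Sigma_{\mathcal{C}}|} \;\geq\; R_{\mathcal{C}}\left(1 - \frac{\log |\Sigma_S|}{\log |\Sigma_{\mathcal{C}}|}\right),
\]
using $1/(1+x) \geq 1-x$ for $x \geq 0$. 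Finally, for complexity: encoding is $\mathcal{E}_{\mathcal{C}}$ plus an $O(n)$ coordinatewise attachment that is subsumed into $T_{\mathcal{E}_{\mathcal{C}}}$; decoding is one call to $\mathcal{D}_S$ of cost $T_{\mathcal{D}_S}(n)$, one linear-time pass to populate $m'$, and one call to $\mathcal{D}_{\mathcal{C}}$; preprocessing is whatever is needed to build $\mathcal{C}$ and $S$.

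I do not anticipate any real obstacle beyond careful bookkeeping. The single place where care is needed is the error-free case: one must justify that no erroneous index output by $\mathcal{D}_S$ can convert an erasure into a symbol corruption, which is immediate from the definition of error-free (every output is either correct or $\bot$) combined with the fact that only indices that are output, not $\bot$'d, can place a symbol into $m'$. Everything else is a mechanical transcription of Theorem~\ref{thm:MisDecodeToHalfError} and the standard rate computation above.
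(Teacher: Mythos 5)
Your proposal is correct and takes the same route the paper intends: the paper itself presents Theorem~\ref{thm:mainECC} as a direct corollary of Theorem~\ref{thm:MisDecodeToHalfError} combined with the indexing-procedure construction (Algorithms~\ref{alg:NewECCsEncoder} and~\ref{alg:NewECCsDecoder}), and your write-up simply fills in the bookkeeping. One small point you handled more carefully than the theorem statement: the actual rate is $R_{\mathcal{C}}\cdot\frac{\log|\Sigma_{\mathcal{C}}|}{\log|\Sigma_{\mathcal{C}}|+\log|\Sigma_S|}$, which you correctly observe is only bounded below by the stated expression $R_{\mathcal{C}}\left(1-\frac{\log|\Sigma_S|}{\log|\Sigma_{\mathcal{C}}|}\right)$, and your reasoning that error-freeness forces every disagreement in $m'$ to be an erasure (since an error-free decoder outputs either $\bot$ or the unique correct index, so no position can be claimed by a wrong symbol) is exactly the right justification for the ``furthermore'' clause.
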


Note that if one chooses $\Sigma_{\mathcal{C}}$ such that $\frac{\log \Sigma_S}{\log \Sigma_{\mathcal{C}}} = o(\delta)$, the rate loss due to the attached symbols will be negligible. With all this in place one can obtain Theorem~\ref{thm:main} as a consequence of Theorem~\ref{thm:mainECC}.

\fullOnly{\begin{proof}[Proof of Theorem~\ref{thm:main}]}
\shortOnly{\begin{proof}[of Theorem~\ref{thm:main}]}
Given the $\delta$ and $\eps$ from the statement of Theorem~\ref{thm:main} we choose $\eps' = O\left(\left(\frac{\eps}{6}\right)^2\right)$ and use Theorem~\ref{thm:detepsselfconstruction} to construct a string $S$ of length $n$ over alphabet $\Sigma_S$ of size $\eps^{-O(1)}$ with the $\eps'$-self-matching property. We then use the $(n, \delta)$-indexing algorithm $(S, \mathcal{D}_S)$ where given in Section~\ref{sec:indelErrors} and line 2 of Table~\ref{tbl:indexingSolutions} which guarantees that it has at most $\sqrt{\eps'} = \frac{\eps}{3}$ misdecodings. Finally, we choose a near-MDS expander code~\cite{guruswami2005linear} $\mathcal{C}$ which can efficiently correct up to $\delta_{\mathcal{C}} = \delta + \frac{\eps}{3}$ half-errors and has a rate of $R_{\mathcal{C}} > 1 - \delta_{\mathcal{C}} - \frac{\eps}{3}$ over an alphabet $\Sigma_{\mathcal{C}} = \exp(\eps^{-O(1)})$ such that $\log |\Sigma_{\mathcal{C}}|\geq \frac{3 \log |\Sigma_S|}{\eps}$. This ensures that the final rate is indeed at least $R_{\mathcal{C}} - \frac{\log \Sigma_S}{\log \Sigma_{\mathcal{C}}} = 1 - \delta - 3 \frac{\eps}{3}$ and the number of insdel errors that can be efficiently corrected is $\delta_{\mathcal{C}} - 2 \frac{\eps}{3} \geq \delta$. The encoding and decoding complexities are furthermore straight forward and as is the polynomial time preprocessing time given Theorem~\ref{thm:detepsselfconstruction} and \cite{guruswami2005linear}.
\end{proof}

\section{Synchronization Strings}\label{sec:sync}

In this section, we formally define and develop $\eps$-synchronization strings, which can be used as our base synchronization string $S$ in our $(n,\delta)$-indexing algorithms.

As explained in Section~\ref{sec:introsynchstrings} it makes sense to think of the prefixes $S[1,l]$ of a synchronization string $S$ as \emph{codewords} encoding their length $l$, as the prefix $S[1,l]$, or a corrupted version of it, will be exactly all the indexing information that has been received by the time the $l^{th}$ symbol is communicated:

\shortOnly{\begin{definition}[Associated Codewords]}
\fullOnly{\begin{definition}[Codewords Associated with a Synchronization String]}
Given any synchronization string $S$ we define the set of codewords associated with $S$ to be the set of prefixes of $S$, i.e., $\{ S[1,l] \  |  \ 1 \leq l \leq |S|\}$.
\end{definition}

Next, we define a distance metric on any set of strings, which will be useful in quantifying how good a synchronization string $S$ and its associated set of codewords is:

\begin{definition}[Relative Suffix Distance]
For any two strings $S, S' \in \Sigma^*$ we define their relative suffix distance $RSD$ as follows:
$$RSD(S,S') = \max_{k > 0} \frac{ED\left(S(|S|-k,|S|],S'(|S'|-k,|S'|]\right)}{2k}$$
\end{definition}

Next we show that RSD is indeed a distance which satisfies all properties of a metric for any set of strings.
To our knowledge, this metric is new. It is, however, similar in spirit to the suffix ``distance'' defined in~\cite{braverman2015coding}, which unfortunately is non-symmetric and does not satisfy the triangle inequality but can otherwise be used in a similar manner as RSD in the specific context here (see also Section~\ref{sec:improvedStreaming}).

\begin{lemma}\label{lem:RSDMetricProperties}
For any strings $S_1,S_2,S_3$ we have 	
\begin{itemize}
	\item {\bfseries Symmetry:} $RSD(S_1,S_2) = RSD(S_2,S_1)$,
	\item {\bfseries Non-Negativity and Normalization:} $0 \leq RSD(S_1,S_2) \leq 1$,
	\item {\bfseries Identity of Indiscernibles:} $RSD(S_1,S_2) = 0 \Leftrightarrow S_1 = S_2$, and
	\item {\bfseries Triangle Inequality:} $RSD(S_1,S_3) \leq RSD(S_1,S_2) + RSD(S_2,S_3)$.
\end{itemize}
In particular, RSD defines a metric on any set of strings. 
\end{lemma}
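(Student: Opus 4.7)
The plan is to verify the four listed properties one by one, with each reducing to a known property of edit distance applied uniformly over every suffix length $k$. Throughout, I will write $\mathrm{suf}_k(S) := S(|S|-k,|S|]$ for brevity, recalling the convention from the notation section that when $k>|S|$ this quantity is $\bot^{k-|S|}\cdot S[1,|S|]$, so the suffix operator is always well-defined and produces a string of length exactly $k$.

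\textbf{Symmetry and non-negativity/normalization.} Symmetry is immediate: for every $k$, $ED(\mathrm{suf}_k(S_1),\mathrm{suf}_k(S_2))=ED(\mathrm{suf}_k(S_2),\mathrm{suf}_k(S_1))$ because edit distance is symmetric, so the $\max_{k>0}$ is the same from either side. Non-negativity follows from $ED\geq 0$. For the upper bound, I will use the standard observation that for any two strings $a,b$, $ED(a,b)\leq |a|+|b|$ (delete all of $a$ and insert all of $b$). Since $|\mathrm{suf}_k(S_i)|=k$ for each $i$, this yields $ED(\mathrm{suf}_k(S_1),\mathrm{suf}_k(S_2))\leq 2k$, hence each term in the supremum lies in $[0,1]$.

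\textbf{Identity of indiscernibles.} If $S_1=S_2$, then for every $k$, $\mathrm{suf}_k(S_1)=\mathrm{suf}_k(S_2)$, so every numerator vanishes and $RSD(S_1,S_2)=0$. For the converse, I will argue the contrapositive: if $S_1\neq S_2$, I exhibit some $k$ with $ED(\mathrm{suf}_k(S_1),\mathrm{suf}_k(S_2))>0$. If $|S_1|=|S_2|$, then pick $k=|S_1|$ and note that $\mathrm{suf}_k(S_i)=S_i$, so the two suffixes differ as strings, forcing positive edit distance. If instead $|S_1|\neq|S_2|$, pick $k>\max(|S_1|,|S_2|)$; then the two padded suffixes begin with a different number of $\bot$ symbols (which lie outside $\Sigma$), so their edit distance is at least the difference in the $\bot$-prefix lengths, hence strictly positive.

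\textbf{Triangle inequality.} This is the main conceptual step, but it reduces cleanly to the triangle inequality for $ED$ on strings (already noted as a metric in the excerpt). For every $k>0$, applying the triangle inequality of $ED$ to the triple $(\mathrm{suf}_k(S_1),\mathrm{suf}_k(S_2),\mathrm{suf}_k(S_3))$ and dividing by $2k$ gives
\[
\frac{ED(\mathrm{suf}_k(S_1),\mathrm{suf}_k(S_3))}{2k}\leq \frac{ED(\mathrm{suf}_k(S_1),\mathrm{suf}_k(S_2))}{2k}+\frac{ED(\mathrm{suf}_k(S_2),\mathrm{suf}_k(S_3))}{2k}\leq RSD(S_1,S_2)+RSD(S_2,S_3),
\]
where the second inequality uses that each ratio is bounded by the corresponding maximum. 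Taking the maximum over $k$ on the left-hand side yields $RSD(S_1,S_3)\leq RSD(S_1,S_2)+RSD(S_2,S_3)$.

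The only subtlety I anticipate is bookkeeping around the $\bot$-padding convention when $k$ exceeds the length of one of the strings; but since $\bot\notin\Sigma$ this padding cannot interact with genuine symbols in any edit-distance matching, and hence all three inequalities above go through without modification. No nontrivial obstacle arises beyond this careful handling of the padding convention.
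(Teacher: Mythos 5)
Your proof is correct and follows essentially the same route as the paper's: reduce each of the four properties to the corresponding property of edit distance applied uniformly over all suffix lengths $k$, with the triangle inequality obtained by applying the $ED$ triangle inequality suffix-by-suffix before taking the maximum. The only difference is cosmetic — you spell out the case analysis for the identity of indiscernibles (splitting on whether $|S_1|=|S_2|$ and explicitly invoking the $\bot$-padding convention), whereas the paper states that argument more tersely.
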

\shortOnly{The missing proofs in this section are available in Appendix~\ref{app:chapter3}.}

\global\def\ProofOfMetricPropertiesOfRSD{
\shortOnly{\begin{proof}[of Lemma~\ref{lem:RSDMetricProperties}]}
\fullOnly{\begin{proof}}
Symmetry and non-negativity follow directly from the symmetry and non-negativity of edit distance. Normalization follows from the fact that the edit distance between two length $k$ strings can be at most $2k$. To see the identity of indiscernibles note that $RSD(S_1,S_2) = 0$ if and only if for all $k$ the edit distance of the $k$ prefix of $S_1$ and $S_2$ is zero, i.e., if for every $k$ the $k$-prefix of $S_1$ and $S_2$ are identical. This is equivalent to $S_1$ and $S_2$ being equal. Lastly, the triangle inequality also essentially follows from the triangle inequality for edit distance. To see this let $\delta_1 = RSD(S_1,S_2)$ and $\delta_2 = RSD(S_2,S_3)$. By the definition of RSD this implies that for all $k$ the $k$-prefixes of $S_1$ and $S_2$ have edit distance at most $2 \delta_1 k$ and the $k$-prefixes of $S_2$ and $S_3$ have edit distance at most $2 \delta_2 k$. By the triangle inequality for edit distance, this implies that for every $k$ the $k$-prefix of $S_1$ and $S_3$ have edit distance at most $(\delta_1 + \delta_2) \cdot 2 k$ which implies that $RSD(S_1,S_3) \leq \delta_1 + \delta_2$.
\end{proof}}

\fullOnly{\ProofOfMetricPropertiesOfRSD}

With these definitions in place, it remains to find synchronization strings whose prefixes induce a set of codewords, i.e., prefixes, with large RSD distance. It is easy to see that the RSD distance for any two strings ending on a different symbol is one. This makes the trivial synchronization string, which uses each symbol in $\Sigma$ only once, induce an associated set of codewords of optimal minimum-RSD-distance one. Such trivial synchronization strings, however, are not interesting as they require an alphabet size linear in the length $n$. To find good synchronization strings over constant size alphabets, we give the following important definition of an $\eps$-synchronization string. The parameter $0<\eps<1$ should be thought of measuring how far a string is from the perfect synchronization string, i.e., a string of $n$ distinct symbols.

\begin{definition}[$\eps$-Synchronization String]\label{def:synCode}
String $S \in \Sigma^n$ is an $\eps$-synchronization string if for every $1 \leq i < j < k \leq n + 1$ we have that $ED\left(S[i, j),S[j, k)\right) > (1-\eps) (k-i)$. We call the set of prefixes of such a string an $\eps$-synchronization string. 
\end{definition}

The next lemma shows that the $\eps$-synchronization string property is strong enough to imply a good minimum RSD distance between any two codewords associated with it. 

\begin{lemma}\label{lem:codewordRSDdistance}
If $S$ is an $\eps$-synchronization string, then $RSD(S[1,i],S[1,j]) > 1-\eps$ for any $i < j$, i.e.,  any two codewords associated with $S$ have RSD distance of at least $1-\eps$. 
\end{lemma}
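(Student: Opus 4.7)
The plan is to exhibit a single suffix length $k$ that already witnesses $RSD(S[1,i], S[1,j]) > 1 - \eps$ inside the maximum defining $RSD$. The natural choice is $k = j - i$, because then the length-$k$ suffix of the longer prefix is exactly the factor $S(i, j] = S[i+1, j]$ --- precisely the sort of consecutive block that the $\eps$-synchronization property is designed to control.

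In the typical case $j - i \leq i$, the length-$k$ suffix of $S[1, i]$ is also an honest factor of $S$, namely $S[2i - j + 1,\, i]$, and it sits immediately to the left of $S[i+1, j]$ inside $S$. I would then invoke Definition~\ref{def:synCode} with $(a, b, c) = (2i - j + 1,\ i + 1,\ j + 1)$, whose inequalities $1 \leq a < b < c \leq n+1$ hold exactly under the assumptions $j \leq 2i$ and $j \leq n$. This yields
$$ED\bigl(S[2i-j+1,\, i],\ S[i+1,\, j]\bigr) > (1-\eps)(c-a) = 2(1-\eps)k,$$
and dividing by $2k$ gives the claimed bound $RSD > 1-\eps$.

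The main obstacle is the edge case $j - i > i$, where the length-$k$ suffix of $S[1, i]$ overshoots the start of the string. Here the $\bot$-padding convention makes the suffix equal to $\bot^{k-i} \cdot S[1, i]$, and since $\bot \notin \Sigma$ matches no symbol of $S[i+1, j]$, every padding symbol must be deleted, giving $ED \geq (k - i) + ED\bigl(S[1, i],\, S[i+1, j]\bigr)$. I would then apply the $\eps$-synchronization property with $(a, b, c) = (1,\ i+1,\ j+1)$ to obtain $ED(S[1,i], S[i+1,j]) > (1-\eps)\, j$, and a short arithmetic check shows that $(j - 2i) + (1-\eps)\,j$ still exceeds $2(1-\eps)(j - i)$: the slack simplifies to $\eps(j - 2i)$, which is strictly positive in exactly this regime. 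Combining the two cases gives $RSD(S[1,i], S[1,j]) > 1 - \eps$ for every $i < j$, as claimed.
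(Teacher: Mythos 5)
Your proof is correct and follows essentially the same route as the paper: choose the witness suffix length $k = j-i$ so that the two $k$-suffixes become adjacent blocks of $S$ of total length $2k$, and invoke the $\eps$-synchronization property (Definition~\ref{def:synCode}). The only difference is that you explicitly work out the $\bot$-padding case $j > 2i$, which the paper dispatches with the one-line remark ``this holds even if $i-k<1$''; your computation of the slack $\eps(j-2i)$ is a valid justification of that remark.
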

\begin{proof}
Let $k = j - i$. The $\eps$-synchronization string property of $S$ guarantees that 
\noSTOC{\[ED\left(S[i-k, i),S[i,j)\right) > (1-\eps) 2k.\]}
\STOConly{$ED\left(S[i-k, i),S[i,j)\right) > (1-\eps) 2k$.}
Note that this holds even if $i-k<1$. To finish the proof we note that the maximum in the definition of RSD includes the term $\frac{ED\left(S[i-k, i),S[i,j)\right)}{2k} > 1-\eps$, which implies that $RSD(S[1,i],S[1,j]) > 1-\eps$.
\end{proof}

\subsection{Existence and Construction}\label{sec:EpsSynchExistence}

The next important step is to show that the $\eps$-synchronization strings we just defined exist, particularly, over alphabets whose size is independent of the length $n$. We show the existence of $\eps$-synchronization strings of arbitrary length for any $\eps>0$ using an alphabet size which is only polynomially large in $1/\eps$. We remark that $\eps$-synchronization strings can be seen as a strong generalization of square-free sequences in which any two neighboring substrings $S[i, j)$ and $S[j, k)$ only have to be different and not also far from each other in edit distance. Thue \cite{thue1977uber} famously showed the existence of arbitrarily large square-free strings over a trinary alphabet. Thue's methods for constructing such strings however turns out to be fundamentally too weak to prove the existence of $\eps$-synchronization strings, for any constant $\eps<1$.

\global\def\StatementofLemLLL{
\begin{lemma}[General Lov\'{a}sz local lemma]\label{lem:LLL}
Let $A_1, \dots, A_n$ be a set of ``bad'' events. The directed graph $G(V, E)$ is called a dependency graph for this set of events if $V = \{1, \dots, n\}$ and each event $A_i$ is mutually independent of all the events $\{A_j: (i, j) \not\in E\}$. 

Now, if there exists $x_1, \dots, x_n \in [0,1)$ such that for all $i$ we have 
$$\mathbb{P}\left[A_i\right] \leq x_i \prod_{(i, j) \in E}\left(1-x_j\right)$$
 then there exists a way to avoid all events $A_i$ simultaneously and the probability for this to happen is bounded by 
 $$\mathbb{P}\left[ \bigwedge_{i = 1}^n \bar{A}_i\right] \geq \prod_{i = 1}^n \left(1 - x_i\right) > 0.$$
\end{lemma}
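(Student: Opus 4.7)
The plan is to establish the stronger inductive claim that for every index $i$ and every subset $S \subseteq \{1,\dots,n\} \setminus \{i\}$,
\[
\Pr\!\left[A_i \,\Big|\, \bigcap_{j \in S} \bar{A}_j\right] \leq x_i,
\]
from which both conclusions follow by a chain-rule expansion. The base case $S = \emptyset$ is immediate from the hypothesis, since $\prod_{(i,j) \in E}(1-x_j) \leq 1$. All the work happens in the inductive step, which I would carry out by induction on $|S|$.

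For the inductive step, the key move is to partition $S$ according to the dependency graph: let $S_1 = \{ j \in S : (i,j) \in E \}$ and $S_2 = S \setminus S_1$. Writing
\[
\Pr\!\left[A_i \,\Big|\, \bigcap_{j \in S} \bar{A}_j\right] \;=\; \frac{\Pr\!\left[A_i \cap \bigcap_{j \in S_1} \bar{A}_j \,\big|\, \bigcap_{j \in S_2} \bar{A}_j\right]}{\Pr\!\left[\bigcap_{j \in S_1} \bar{A}_j \,\big|\, \bigcap_{j \in S_2} \bar{A}_j\right]},
\]
I can bound the numerator using that $A_i$ is mutually independent of $\{A_j : j \in S_2\}$: dropping the intersection with $\bigcap_{j \in S_1} \bar{A}_j$ only increases it, so the numerator is at most $\Pr[A_i] \leq x_i \prod_{(i,j) \in E}(1-x_j)$. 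For the denominator I enumerate $S_1 = \{j_1,\dots,j_r\}$ and apply the chain rule
\[
\Pr\!\left[\bigcap_{\ell=1}^{r} \bar{A}_{j_\ell} \,\Big|\, \bigcap_{j \in S_2} \bar{A}_j\right] \;=\; \prod_{\ell=1}^{r} \Pr\!\left[\bar{A}_{j_\ell} \,\Big|\, \bigcap_{m < \ell} \bar{A}_{j_m} \cap \bigcap_{j \in S_2} \bar{A}_j\right],
\]
and invoke the inductive hypothesis on each factor (each conditioning set has size at most $|S|-1$) to get $\bar{A}_{j_\ell} \geq 1 - x_{j_\ell}$ conditionally, whence the denominator is at least $\prod_{j \in S_1}(1-x_j)$. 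Dividing yields the claim, since $\prod_{(i,j) \in E}(1-x_j)$ already includes all factors with $j \in S_1$.

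Once the inductive claim is established, the final probability bound follows from another chain-rule expansion:
\[
\Pr\!\left[\bigcap_{i=1}^n \bar{A}_i\right] \;=\; \prod_{i=1}^{n} \Pr\!\left[\bar{A}_i \,\Big|\, \bigcap_{j < i} \bar{A}_j\right] \;\geq\; \prod_{i=1}^{n} (1 - x_i) \;>\; 0,
\]
which in particular shows the avoidance is possible. The main obstacle I anticipate is bookkeeping: one has to be careful that the split $S = S_1 \cup S_2$ is nonempty-preserving so that induction applies in the denominator, and that the mutual independence of $A_i$ from $\{A_j : (i,j) \notin E\}$ is used with conditioning on \emph{intersections} rather than single events — formally, mutual independence of $A_i$ from a set of events gives independence from any Boolean combination of them, which is the clean fact needed to justify the numerator bound. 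Everything else is standard probabilistic manipulation.
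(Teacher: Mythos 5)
The paper does not prove this lemma at all: it is stated and cited as a known classical result (the symmetric/general Lov\'{a}sz Local Lemma, in the form due to Spencer following Erd\H{o}s--Lov\'{a}sz), and is used as a black box in the proof of Theorem~\ref{thm:existence}. So there is no internal proof to compare against; what you have written is the standard textbook argument, and it is essentially correct.

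One point you flag as ``bookkeeping'' deserves to be stated more carefully, because it is where a na\"{i}ve version of the induction becomes circular. All the conditional probabilities you write down, beginning with the identity
\[
\Pr\!\left[A_i \,\Big|\, \bigcap_{j \in S} \bar{A}_j\right]
= \frac{\Pr\!\left[A_i \cap \bigcap_{j \in S_1} \bar{A}_j \,\big|\, \bigcap_{j \in S_2} \bar{A}_j\right]}{\Pr\!\left[\bigcap_{j \in S_1} \bar{A}_j \,\big|\, \bigcap_{j \in S_2} \bar{A}_j\right]},
\]
require that the conditioning events have positive probability. But positivity of $\Pr\!\left[\bigcap_{j\in S}\bar{A}_j\right]$ is exactly what the lemma is trying to show (in the case $S=\{1,\dots,n\}$), so it cannot be assumed for free. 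The clean fix is to strengthen the induction: prove simultaneously, by induction on $|S|$, that (i) $\Pr\!\left[\bigcap_{j\in S}\bar{A}_j\right]>0$ and (ii) $\Pr\!\left[A_i\mid\bigcap_{j\in S}\bar{A}_j\right]\le x_i$ for every $i\notin S$. Claim (i) for $|S|=m$ follows from claim (ii) for all strictly smaller sets via the chain rule, since each factor $\Pr\!\left[\bar{A}_{j_\ell}\mid\cdots\right]\ge 1-x_{j_\ell}>0$ because $x_{j_\ell}\in[0,1)$; and claim (i) legitimizes all the conditionals needed in your derivation of (ii). With that strengthening in place, the rest of your argument---the $S_1/S_2$ split, the numerator bound via mutual independence from Boolean combinations of $\{A_j: j\in S_2\}$, the denominator bound via the chain rule, and the cancellation since $S_1\subseteq\{j:(i,j)\in E\}$---is correct and is the canonical proof.
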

}

\fullOnly{\noSTOC{\smallskip Our existence proof requires the general Lov\'{a}sz local lemma which we recall here first:\StatementofLemLLL}}

\begin{theorem}\label{thm:existence}
For any $\eps\in (0,1)$, $n \geq 1$, there exists an $\eps$-synchronization string of length $n$ over an alphabet of size $\Theta(1/\eps^4)$.
\end{theorem}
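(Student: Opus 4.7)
The plan is to use the probabilistic method, picking each symbol of $S$ independently and uniformly at random from an alphabet $\Sigma$ of size $q = \Theta(1/\eps^4)$, and then applying the general Lov\'{a}sz Local Lemma to show that all ``bad events'' can be simultaneously avoided with positive probability. Concretely, for each triple $(i,j,k)$ with $1 \leq i < j < k \leq n+1$, I define the bad event $A_{i,j,k}$ that $ED(S[i,j), S[j,k)) \leq (1-\eps)(k-i)$, so that avoiding all of them yields exactly an $\eps$-synchronization string.

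The first step is to bound $\Pr[A_{i,j,k}]$. Setting $\ell = k-i$ and using the identity $ED(c,c') = |c| + |c'| - 2 \cdot LCS(c,c')$ from the preliminaries, the event $A_{i,j,k}$ is equivalent to $LCS(S[i,j), S[j,k)) \geq \eps \ell / 2$. Since the $\ell$ random symbols $S[i], \dots, S[k-1]$ are mutually independent, a union bound over all possible monotone matchings of length $m = \lceil \eps \ell / 2 \rceil$ gives
$$\Pr[A_{i,j,k}] \leq \binom{|S[i,j)|}{m}\binom{|S[j,k)|}{m} q^{-m} \leq \binom{\ell/2}{m}^2 q^{-m} \leq \left(\frac{e}{\eps}\right)^{\eps \ell} q^{-\eps \ell / 2},$$
using $\binom{a}{m}\binom{\ell-a}{m} \leq \binom{\ell/2}{m}^2$ and the standard bound $\binom{n}{k}\leq (ne/k)^k$.

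Next I set up the dependency graph for LLL: two events $A_{i,j,k}$ and $A_{i',j',k'}$ are potentially dependent iff the random variables they consult overlap, i.e., $[i,k) \cap [i',k') \neq \emptyset$. For a fixed $A_{i,j,k}$ and fixed length $\ell' = k' - i'$, the starting point $i'$ lies in an interval of length at most $\ell + \ell' - 1$ and $j' \in (i',k')$ has $\ell' - 1$ choices, so the number of dependent events of length $\ell'$ is at most $(\ell + \ell' - 1)(\ell' - 1)$. I then apply the general LLL with asymmetric weights $x_{i,j,k}$ chosen as a function of $\ell$ alone (so that the dependency product factors by length). Good candidates to try are $x_{i,j,k} = \mu^\ell$ with $\mu \in (0,1)$ constant, or a power-law $x_{i,j,k} = c_1 / \ell^{c_2}$; in either case one computes $\sum_{\ell'} (\ell + \ell') \ell' \cdot x(\ell')$ in closed form (geometric/polylog sums) to obtain a bound $\prod_{\text{dep}}(1 - x_{i',j',k'}) \geq \exp(-\alpha \ell - \beta)$. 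The LLL condition reduces to verifying a single inequality in $\ell$, $\eps$, and $q$, which must hold for $q = \Theta(1/\eps^4)$.

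The main obstacle will be step three: choosing the weights and verifying the LLL inequality so that the alphabet size comes out as $\Theta(1/\eps^4)$ rather than something exponential in $1/\eps$. The quadratic growth in $\ell$ of the number of dependent events at each length $\ell'$ conflicts with the only mildly-decaying probability bound $(e^2/(\eps^2 q))^{\eps \ell/2}$, so the weights must simultaneously (i) decay fast enough in $\ell'$ to make the dependency sums converge, (ii) decay slowly enough in $\ell$ that $P/x$ is bounded by the product, and (iii) be compatible with $\eps^2 q = \Theta(1/\eps^2)$. Once suitable weights are identified and the inequality verified, the LLL guarantees $\Pr[\bigcap_{i,j,k} \overline{A_{i,j,k}}] > 0$, which produces an $\eps$-synchronization string of length $n$ over $\Sigma$ and proves the theorem. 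As an aside, the same setup combined with the Moser–Tardos algorithm would yield an efficient randomized construction, matching the construction claims made elsewhere in the paper.
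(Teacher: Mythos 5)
Your high-level plan (probabilistic method plus the general Lov\'{a}sz Local Lemma over bad triples) matches the paper's, and your single-triple probability bound is essentially the one the paper uses. But the distribution you sample from --- each symbol i.i.d.\ uniform over an alphabet of size $q=\Theta(1/\eps^4)$ --- makes your step three not merely ``the main obstacle'' but provably impossible, and you can see this without choosing any weights. A triple $(i,j,k)$ with $k-i=\ell\le 2/\eps$ is bad as soon as $S[i,j)$ and $S[j,k)$ share a single symbol (your $m=1$ case), which for a uniform string happens with probability $\Theta\bigl((j-i)(k-j)/q\bigr)$. Summing over all such triples that overlap a fixed triple $A$ of length $\ell^*=2/\eps$ gives $\sum_{B\sim A}\Pr[B]=\Omega\bigl(\tfrac{1}{q}\sum_{\ell'\le 2/\eps}\ell^*\,\ell'^3\bigr)=\Omega\bigl(1/(q\eps^5)\bigr)=\Omega(1/\eps)$. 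Since any admissible LLL weights must satisfy $x_B\ge\Pr[B]$, this forces $\prod_{B\sim A}(1-x_B)\le e^{-\sum_{B\sim A}x_B}\le e^{-\Omega(1/\eps)}$, and the LLL condition for $A$ itself then demands $x_A\ge\Pr[A]\,e^{\Omega(1/\eps)}>1$ once $\eps$ is small --- no choice of weights (geometric, power-law, or otherwise) can satisfy the hypotheses. This is not an artifact of loose bounds: every $\eps$-synchronization string must have pairwise-distinct symbols within every window of length about $2/\eps$, and an i.i.d.\ uniform string over a $\poly(1/\eps)$-size alphabet violates this in roughly an $\eps^3$ fraction of windows, so for large $n$ it is almost surely not an $\eps$-synchronization string.

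The missing idea is to kill the short bad triples deterministically before invoking the LLL. The paper sets $S[i]=(i\bmod t,\,R[i])$ where $t=\Theta(1/\eps^2)$ and $R$ is uniform over a $\Theta(1/\eps^2)$-size alphabet: the periodic coordinate forces all symbols in any window of length $t$ to be distinct, so no triple with $k-i<t$ can be bad, and the only surviving bad events have length $\ell\ge t=\omega\bigl(\log(1/\eps)/\eps\bigr)$ and probability at most $\bigl(e/(\eps\sqrt{|\Sigma|})\bigr)^{\eps\ell}$, which is super-polynomially small in $\eps$. With the dependency sums starting at $\ell'=t$, the weights $x_{p,q}=D^{-\eps(q-p)}$ for an absolute constant $D$ (the paper takes $D=7$) verify the LLL condition with $|\Sigma|=\Theta(1/\eps^4)$. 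If you graft this two-coordinate construction onto your write-up and restrict your dependency sums to lengths at least $t$, your calculation goes through essentially as you outlined it.
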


\global\def\ProofSketchofThmExistence{
\begin{proof}[Sketch]
Let $S$ be a string of length $n$ obtained by concatenating two strings $T$ and $R$, where $T$ is simply the repetition of $0, \dots, t-1$ for $t = \Theta(1/\eps^2)$, and $R$ is a uniformly random string of length $n$ over alphabet $\Sigma$. In particular, $S_i = \left(i \bmod t, R_i\right)$. We use the probabilistic method to prove that with positive probability $S$ does not contain \emph{bad triple}, where $(x, y, z)$ is a bad triple if $ED(S[x, y), S[y, z)) \le (1-\eps) (x-z)$. In particular, triples with $x-z<t$ are never bad because of $T$ and a counting argument enumerating all possible insertion-deletion patterns together with a Chernoff bound assessing the likelihood of a fixed such pattern to be valid shows that the probability for a fixed triple with $x-z>t$ to be bad is exponentially small in $x-z$. It is furthermore easy to see that the probability of two triples $(x,y,z)$ and $(x',y',z')$ being bad is independent unless they overlap. This shows a sparse dependency structure between the bad events associated which each triple being bad which is strong enough to apply the general Lov\'{a}sz local lemma to show the desired positive probability of avoiding all violations of the $\eps$-synchronization property. The complete proof and a statement of the general Lov\'{a}sz local lemma can be found in Appendix~\ref{app:chapter3}. 
\end{proof}
}

\global\def\ProofofThmExistence{
\fullOnly{\begin{proof}}
\shortOnly{\begin{proof}[of Theorem~\ref{thm:existence}]}
Let $S$ be a string of length $n$ obtained by concatenating two strings $T$ and $R$, where $T$ is simply the repetition of $0, \dots, t-1$ for $t = \Theta\left(\frac{1}{\eps^2}\right)$, and $R$ is a uniformly random string of length $n$ over alphabet $\Sigma$. In particular, $S_i = \left(i \bmod t, R_i\right)$.

We prove that $S$ is an $\eps$-synchronization string by showing that there is a positive probability that $S$ contains no \emph{bad triple}, where $(x, y, z)$ is a bad triple if $ED(S[x, y), S[y, z)) \le (1-\eps) (z-x)$.

First, note that a triple $(x, y, z)$ for which $z - x < t$ cannot be a bad triple as it consists of completely distinct symbols by courtesy of $T$. Therefore, it suffices to show that there is no bad triple $(x, y, z)$ in $R$ for $x, y, z$ such that $z - x > t$. 

Let $(x, y, z)$ be a bad triple and let $a_1 a_2 \cdots a_k$ be the longest common subsequence of $R[x, y)$ and $R[y, z)$. It is straightforward to see that $ED(R[x, y), R[y, z)) = (y - x) + (z - y) - 2k = z - x - 2k.$ Since $(x, y, z)$ is a bad triple, we have that $z - x - 2k \le (1-\eps) (z-x)$, which means that $k \ge \frac{\eps}{2} (z-x)$. With this observation in mind, we say that $R[x, z)$ is a \emph{bad interval} if it contains a subsequence $a_1 a_2 \cdots a_k a_1 a_2 \cdots a_k$ such that $k \ge \frac{\eps}{2} (z-x)$.

To prove the theorem, it suffices to show that a randomly generated string does not contain any bad intervals with a non-zero probability. 
We first upper bound the probability that an interval of length $l$ is bad:
\noSTOC{
\begin{eqnarray*}
\Pr_{I \sim \Sigma^l}[I\text{ is bad}] &\le& {l \choose {\eps l}} {|\Sigma|}^{-\frac{\eps l}{2}}\\
&\le& \left(\frac{el}{\eps l}\right)^{\eps l} {|\Sigma|}^{-\frac{\eps l}{2}}\\
&=& \left(\frac{e}{\eps \sqrt{|\Sigma|}}\right)^{\eps l},
\end{eqnarray*}}
\STOConly{
\begin{eqnarray*}
\Pr_{I \sim \Sigma^l}[I\text{ is bad}] &\le& {l \choose {\eps l}} {|\Sigma|}^{-\frac{\eps l}{2}}
\le \left(\frac{el}{\eps l}\right)^{\eps l} {|\Sigma|}^{-\frac{\eps l}{2}}
= \left(\frac{e}{\eps \sqrt{|\Sigma|}}\right)^{\eps l},
\end{eqnarray*}}
where the first inequality holds because if an interval of length $l$ is bad, then it must contain a repeating subsequence of length $\frac{l\eps}{2}$. Any such sequence can be specified via $\eps l$ positions in the $l$ long interval and the probability that a given fixed sequence is valid for a random string is ${|\Sigma|}^{-\frac{\eps l}{2}}$. The second inequality comes from the fact that ${n \choose k} < \left(\frac{ne}{k}\right)^k$.

The resulting inequality shows that the probability of an interval of length $l$ being 	bad is bounded above by $C^{-\eps l}$, where $C$ can be made arbitrarily large by taking a sufficiently large alphabet size $|\Sigma|$.

To show that there is a non-zero probability that the uniformly random string $R$ contains no bad interval $I$ of size $t$ or larger, we use the general Lov\'{a}sz local lemma\STOConly{.}\noSTOC{ stated in Lemma~\ref{lem:LLL}.}  Note that the badness of interval $I$ is mutually independent of the badness of all intervals that do not intersect $I$. We need to find real numbers $x_{p, q}\in[0,1)$ corresponding to intervals $R[p, q)$ for which $$\\Pr\left[\text{Interval }R[p, q)\text{ is bad}\right] \le x_{p, q} \prod_{R[p, q)\cap R[p', q')\not = \emptyset} (1-x_{p', q'}).$$

We have seen that the left-hand side can be upper bounded by $C^{-\eps\left|R[p, q)\right|} = C^{\eps(p-q)}$. 
Furthermore, any interval of length $l'$ intersects at most $l + l'$ intervals of length $l$. 
We propose $x_{p, q} = D^{-\eps\left|R[p, q)\right|} = D^{\eps(p-q)}$ for some constant $D > 1$.
This means that it suffices to find a constant $D$ that for all substrings $R[p, q)$ satisfies
$$C^{\eps(p-q)} \le D^{\eps(p-q)} \prod_{l=t}^n \left(1-D^{-\eps l}\right)^{l + (q-p)},$$
or more clearly, for all $l' \in \{1,\cdots,n\}$,
$$C^{-l'} \le D^{-l'} \prod_{l=t}^n \left(1-D^{-\eps l}\right)^\frac{l + l'}{\eps},$$
which means that
\begin{equation}\label{eqn:LLL1}
C \ge \frac{D}{\prod_{l=t}^n \left(1-D^{-\eps l}\right)^{\frac{1+l/l'}{\eps}} }.
\end{equation}

For $D>1$, the right-hand side of Equation~\eqref{eqn:LLL1} is maximized when $n = \infty$ and $l' = 1$, and since we want Equation \eqref{eqn:LLL1} to hold for all $n$ and all $l' \in\{1,\cdots, n\}$, it suffices to find a $D$ such that
\begin{equation}\nonumber
C \ge \frac{D}{\prod_{l=t}^\infty \left(1-D^{-\eps l}\right)^\frac{l+1}{\eps} }.
\end{equation}

To this end, let
$$L = \min_{D>1} \left\{   \frac{D}{\prod_{l=t}^\infty \left(1-D^{-\eps l}\right)^\frac{l+1}{\eps} }    \right\}.$$

Then, it suffices to have $\Sigma$ large enough so that
\[C = \frac{\eps\sqrt{|\Sigma|}}{e} \ge L,\] which means that $|\Sigma| \ge \frac{e^2L^2}{\eps^2} $ suffices to allow us to use the Lov\'{a}sz local lemma. We claim that $L = \Theta(1)$, 
which will complete the proof.
Since $t = \omega\left(\frac{\log\frac{1}{\eps}}{\eps}\right)$, 
$$\forall l \ge t \qquad D^{-\eps l} \cdot \frac{l+1}{\eps} \ll 1.$$

Therefore, we can use the fact that $(1-x)^k > 1 - xk$ to show that:
\begin{eqnarray}
\frac{D}{\prod_{l=t}^\infty \left(1-D^{-\eps l}\right)^\frac{l+1}{\eps} } &<& \frac{D}{\prod_{l=t}^\infty \left(1-\frac{l+1}{\eps}\cdot D^{-\eps l}\right) }\\
&<&\frac{D}{1-\sum_{l=t}^\infty  \frac{l+1}{\eps}\cdot D^{-\eps l} }\label{eqn:linearize}\\
&=&\frac{D}{1-\frac{1}{\eps}\sum_{l=t}^\infty  (l+1)\cdot \left(D^{-\eps}\right)^l }\\
&=&\frac{D}{1-\frac{1}{\eps} \frac{2t\left(D^{-\eps}\right)^{t}}{(1-D^{-\eps})^2}}\label{eqn:PowerSeries}\\
&=&\frac{D}{1- \frac{2}{\eps^3} \frac{D^{-\frac{1}{\eps}}}{(1-D^{-\eps})^2}}.\label{eqn:upperbound}
\end{eqnarray}

Equation \eqref{eqn:linearize} is derived using the fact that $\prod_{i=1}^\infty(1-x_i) \ge 1-\sum_{i=1}^\infty x_i$ and Equation \eqref{eqn:PowerSeries} is a result of the following equality for $x<1$:
$$\sum_{l=t}^{\infty} (l+1)x^l = \frac{x^t (1+t-t x)}{(1-x)^2} < \frac{2t x^t}{(1-x)^2}.$$	

One can see that for $D=7	$, $\max_\eps \left\{\frac{2}{\eps^3} \frac{D^{-\frac{1}{\eps}}}{(1-D^{-\eps})^2}\right\} < 0.9$, and therefore step \eqref{eqn:linearize} is legal and \eqref{eqn:upperbound} can be upper-bounded by a constant. Hence, $L=\Theta(1)$ and the proof is complete. \end{proof}
}

\fullOnly{\ProofofThmExistence}
\shortOnly{\ProofSketchofThmExistence}

\global\def\AlphabetSizeRemarks
{
\medskip
\noindent{\bfseries Remarks on the alphabet size:} 
Theorem~\ref{thm:existence} shows that for any $\eps >0$ there exists an $\eps$-synchronization string over alphabets of size $O(\eps^{-4})$. A polynomial dependence on $\eps$ is also necessary. In particular, there do not exist any $\eps$-synchronization string over alphabets of size smaller than $\eps^{-1}$. In fact, any consecutive substring of size $\eps^{-1}$ of an $\eps$-synchronization string has to contain completely distinct elements. This can be easily proven as follows: For sake of contradiction let $S[i, i+\eps^{-1})$ be a substring of an $\eps$-synchronization string where $S[j] = S[j']$ for $i\leq j<j'< i+\eps^{-1}$. Then, $ED\left(S[j], S[j+1, j'+1))\right) = j'- j - 1 = (j'+ 1 - j) - 2 \leq (j'+ 1 - j) (1 - 2\eps)$. We believe that using the Lov\'{a}sz Local Lemma together with a more sophisticated non-uniform probability space, which avoids any repeated symbols within a small distance, allows avoiding the use of the string $T$ in our proof and improving the alphabet size to $O(\eps^{-2})$. It seems much harder to improved the alphabet size to $o(\eps^{-2})$ and we are not convinced that it is possible. This work thus leaves open the interesting question of closing the quadratic gap between $O(\eps^{-2})$ and $\Omega(\eps^{-1})$ from either side.
\medskip
}

\fullOnly{\AlphabetSizeRemarks}

Theorem~\ref{thm:existence} also implies an efficient randomized construction. 

\begin{lemma}\label{lem:randomizedpolytimeconstruction}
There exists a randomized algorithm which for any $\eps > 0$ constructs a $\eps$-synchronization string of length $n$ over an alphabet of size $O(\eps^{-4})$ in expected time $O(n^5)$. 
\end{lemma}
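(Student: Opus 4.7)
The plan is to convert the probabilistic existence argument of Theorem~\ref{thm:existence} into an algorithm via the constructive Lov\'asz Local Lemma of Moser and Tardos. Keep exactly the same setup: the underlying random variables are $R[1], \ldots, R[n]$ drawn independently and uniformly from an alphabet of size $\Theta(\eps^{-4})$, the bad events are the triples $(x,y,z)$ with $z-x \ge t$ for which $ED(R[x,y), R[y,z)) \le (1-\eps)(z-x)$, and the LLL weights $x_{p,q} = D^{-\eps(q-p)}$ are the same values already shown to satisfy the Lov\'asz condition with slack inside the proof of Theorem~\ref{thm:existence}. The algorithm is then the standard one: sample $R$ uniformly; while some triple is violated, pick one and resample every $R[i]$ in its range; finally output $S$ with $S[i] = (i \bmod t, R[i])$. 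Correctness at termination is immediate from the definition of the bad events, and termination with probability one follows from Moser--Tardos.

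For the running time I would argue in two parts. First, by the Moser--Tardos theorem, the expected number of resampling steps is bounded by $\sum_{(x,y,z)} x_{x,z}/(1-x_{x,z})$, and the same geometric-series calculation used to bound $L$ in the proof of Theorem~\ref{thm:existence} shows this sum is $O(n/\eps)$, hence $O(n)$ for any fixed $\eps$. Second, I would maintain a queue of currently violated triples. Initialization computes the edit distance of all $O(n^3)$ triples via the Wagner--Fischer dynamic program; each triple of total length $\ell$ costs $O(\ell^2)$, and summing over all triples gives an initialization cost of $O(n^5)$. After each resample of an interval of length $\ell$, only triples overlapping a resampled position can change their violation status, and a careful accounting of how many such triples there are together with their $O(\ell^2)$ per-computation cost, amortized against the $O(n)$ expected-resample bound, keeps the total update cost within $O(n^5)$ as well.

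The main obstacle is the bookkeeping: a naive implementation that re-scans every triple after every resample bloats the runtime to $O(n^6)$. Avoiding that factor requires three ingredients working together, namely (i) localizing re-verifications to triples that actually intersect the resampled interval, (ii) using the Moser--Tardos expectation bound on the number of resamples rather than a worst-case count, and (iii) exploiting that the Wagner--Fischer cost for a triple of length $\ell$ is only $O(\ell^2)$, so that the rare long triples produced by the LLL weighting are individually cheap. Once these are in place, adding the $O(n^5)$ initialization cost to the $O(n^5)$ amortized update cost yields the claimed expected $O(n^5)$ total running time.
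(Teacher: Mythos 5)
Your overall strategy is exactly the paper's: run Moser--Tardos on the LLL instance already set up in the proof of Theorem~\ref{thm:existence}, reuse the weights $x_{p,q}=D^{-\eps(q-p)}$ to conclude an $O(n)$ expected number of resamplings, and re-verify violations with Wagner--Fischer. The correctness part of your argument is fine. The gap is in the running-time accounting, and it comes from bookkeeping over the $\Theta(n^3)$ \emph{triples} rather than the $O(n^2)$ \emph{intervals} that the existence proof already reduced to.

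Concretely: your localized update step does not cost $O(n^5)$ in total; it costs $\Theta(n^5)$ \emph{per resampling}. After resampling even a single position $p$, every triple $(x,y,z)$ with $x\le p<z$ can change its violation status and must be re-checked to maintain your queue. For $p$ near the middle of the string there are $\Theta(n^2)$ choices of $(x,z)$ with $z-x=\Theta(n)$, hence $\Theta(n^3)$ such long triples, each costing $\Theta(n^2)$ to re-verify, for $\Theta(n^5)$ work that is incurred regardless of how short the resampled interval is. Your ingredient (iii) addresses the rarity of \emph{resampling} long events, but the bottleneck is re-checking the long triples that \emph{overlap} a (typically short) resampled event, and localization does not help there. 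Combined with the $\Theta(n)$ expected resamplings this gives $\Theta(n^6)$, which is precisely the bound you were trying to avoid; the amortization you allude to is not carried out and I do not see how it could be. The paper sidesteps this entirely by using the reformulation already present in the proof of Theorem~\ref{thm:existence}: a violation of the $\eps$-synchronization property at $(x,y,z)$ forces the single interval $[x,z)$ to be ``bad'' (to contain a doubled subsequence of length $\frac{\eps}{2}(z-x)$), so the algorithm only monitors the $O(n^2)$ intervals, checks each one in $O(n^2)$ time, and simply re-scans all of them from scratch after every resampling: $O(n)$ expected resamplings times $O(n^4)$ per full scan gives $O(n^5)$ with no queue, no localization, and no amortization. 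To repair your proof, either switch to the interval events or supply a genuinely new argument bounding the total re-verification work over all triples.
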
	
\begin{proof}
Using the algorithmic framework for the Lov\'{a}sz local lemma given by Moser and Tardos~\cite{MoserTardos2010} and the extensions by Haeupler et al.\cite{HaeuplerJACM12p28} one can get such a randomized algorithm from the proof in Theorem~\ref{thm:existence}. The algorithm starts with a random string over any alphabet $\Sigma$ of size $\eps^{-C}$ for some sufficiently large $C$. It then checks all $O(n^2)$ intervals for a violation of the $\eps$-synchronization string property. For every interval this is an edit distance computation which can be done in $O(n^2)$ time using the classical Wagner-Fischer dynamic programming algorithm. If a violating interval is found the symbols in this interval are assigned fresh random values. This is repeated until no more violations are found. \cite{HaeuplerJACM12p28} shows that this algorithm performs only $O(n)$ expected number of re-samplings. This gives an expected running time of $O(n^5)$ overall, as claimed.
\end{proof}

\smallskip

Lastly, since synchronization strings can be encoded and decoded in a streaming fashion they have many important applications in which the length of the required synchronization string is not known in advance. In such a setting it is advantageous to have an infinite synchronization string over a fixed alphabet. In particular, since every consecutive substring of an $\eps$-synchronization string is also an $\eps$-synchronization string by definition, having an infinite $\eps$-synchronization string also implies the existence for every length $n$, i.e., Theorem~\ref{thm:existence}. Interestingly, a simple argument shows that the converse is true as well, i.e., the existence of an $\eps$-synchronization string for every length $n$ implies the existence of an infinite $\eps$-synchronization string over the same alphabet:

\begin{lemma}\label{lem:infinitesynchcodeexistence}
For any $\eps\in (0,1)$ there exists an infinite $\eps$-synchronization string over an alphabet of size $\Theta(1/\eps^4)$.
\end{lemma}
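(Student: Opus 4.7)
The plan is to use a standard compactness (König's lemma) argument to lift the family of finite $\eps$-synchronization strings guaranteed by Theorem~\ref{thm:existence} into a single infinite one over the same alphabet. The crucial observation that makes this work is that the $\eps$-synchronization property is \emph{hereditary} under taking consecutive substrings: if $S$ satisfies the property, then so does any factor $S[a,b)$, because the definition quantifies over \emph{all} triples $i<j<k$ within $S$. In particular every prefix of an $\eps$-synchronization string is again one, and by Theorem~\ref{thm:existence} there is a fixed alphabet $\Sigma$ of size $\Theta(1/\eps^4)$ over which $\eps$-synchronization strings of every finite length $n$ exist.

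First I would build a rooted tree $T$ whose nodes at depth $n \geq 0$ are exactly the $\eps$-synchronization strings in $\Sigma^n$ (with the empty string at the root), and where the parent of a string $S \in \Sigma^n$ is its length-$(n-1)$ prefix. By heredity this is a well-defined tree. Since $|\Sigma| < \infty$, every node has finite out-degree (at most $|\Sigma|$). By Theorem~\ref{thm:existence}, for every $n$ the tree contains at least one node at depth $n$, so $T$ is infinite. König's lemma then provides an infinite root-to-leaf path; reading off the symbols along the path yields an infinite sequence $S^\infty \in \Sigma^{\mathbb{N}}$ all of whose finite prefixes are $\eps$-synchronization strings.

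Finally I would verify that $S^\infty$ itself is an $\eps$-synchronization string. Any triple $1 \le i < j < k$ is contained in the finite prefix $S^\infty[1,k]$, which by construction is an $\eps$-synchronization string, so $ED(S^\infty[i,j), S^\infty[j,k)) > (1-\eps)(k-i)$, as required. If one prefers an elementary argument over König's lemma, the same conclusion can be obtained by a diagonal pigeonhole: among the strings from Theorem~\ref{thm:existence} of lengths $1,2,3,\ldots$, infinitely many share a common first symbol $c_1 \in \Sigma$; among those, infinitely many share a second symbol $c_2$; and so on. The sequence $c_1 c_2 c_3 \cdots$ has every finite prefix equal to the prefix of some finite $\eps$-synchronization string, hence is an infinite $\eps$-synchronization string over $\Sigma$. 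No step here is a real obstacle; the only subtlety worth double-checking is the heredity remark, which follows immediately from the definition since the condition is universally quantified over all sub-triples.
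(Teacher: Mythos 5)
Your proposal is correct and is essentially the paper's own argument: the paper constructs the infinite string by the inductive pigeonhole you mention as an alternative (at each step taking the lexicographically first symbol continued by infinitely many finite $\eps$-synchronization strings), and verifies the property on each prefix exactly as you do. Your König's-lemma tree is just a more abstract phrasing of the same compactness argument, and the heredity observation you flag is indeed the one implicitly used in the paper.
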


\global\def\ProofOfInfiniteSynchExistence
{
\fullOnly{\begin{proof}[Proof of Lemma~\ref{lem:infinitesynchcodeexistence}]}
\shortOnly{\begin{proof}[of Lemma~\ref{lem:infinitesynchcodeexistence}]}
Fix any $\eps \in (0, 1)$. According to Theorem~\ref{thm:existence} there exist an alphabet $\Sigma$ of size $O(1/\eps^4)$ such that there exists an at least one $\eps$-synchronization strings over $\Sigma$ for every length $n \in \mathbb{N}$. We will define a synchronization string $S = s_1 \cdot s_2 \cdot s_3 \ldots$ with $s_i \in \Sigma$ for any $i \in \mathbb{N}$ for which the $\eps$-synchronization property holds for any $i,j,k \in \mathbb{N}$. We define this string inductively. In particular, we fix an ordering on $\Sigma$ and define $s_1 \in \Sigma$ to be the first symbol in this ordering such that an infinite number of $\eps$-synchronization strings over $\Sigma$ starts with $s_1$. Given that there is an infinite number of $\eps$-synchronization over $\Sigma$ such an $s_1$ exists. Furthermore, the set of $\eps$-synchronization strings over $\Sigma$ which start with $s_1$ remains infinite by definition, allowing us to define $s_2 \in \Sigma$ to be the lexicographically first symbol in $\Sigma$ such there exists an infinite number of $\eps$-synchronization strings over $\Sigma$ starting with $s_1 \cdot s_2$. In the same manner, we inductively define $s_i$ to be the lexicographically first symbol in $\Sigma$ for which there exists and infinite number of $\eps$-synchronization strings over $\Sigma$ starting with $s_1 \cdot s_2 \cdot \ldots \cdot s_i$. To see that the infinite string defined in this manner does indeed satisfy the edit distance requirement of the $\eps$-synchronization property defined in Definition~\ref{def:synCode}, we note that for every $i<j<k$ with $i,j,k \in \mathbb{N}$ there exists, by definition, an $\eps$-synchronization string, and in fact an infinite number of them, which contains $S[1,k]$ and thus also $S[i,k]$ as a consecutive substring implying that indeed $ED\left(S[i, j),S[j, k)\right) > (1-\eps) (k-i)$ as required. Our definition thus produces the unique lexicographically first infinite $\eps$-synchronization string over $\Sigma$. 
\end{proof}

}

\fullOnly{\ProofOfInfiniteSynchExistence}

\smallskip

We remark that any string produced by the randomized construction of Lemma~\ref{lem:randomizedpolytimeconstruction} is guaranteed to be a correct $\eps$-synchronization string (not just with probability one). This randomized synchronization string construction is furthermore only needed once as a pre-processing step. The encoder or decoder of any resulting error correcting codes do not require any randomization. Furthermore, in Section~\ref{sec:improved_decoding} we will provide a deterministic polynomial time construction of a relaxed version of $\eps$-synchronization strings that can still be used as a basis for good $(n,\delta)$-indexing algorithms thus leading to insdel codes with a deterministic polynomial time code construction as well. 

\smallskip

It nonetheless remains interesting to obtain fast deterministic constructions of finite and infinite $\eps$-synchronization strings. In a subsequent work we achieve such efficient deterministic constructions for $\eps$-synchronization strings.  Our constructions even produce the infinite $\eps$-synchronization string $S$ proven to exist by Lemma~\ref{lem:infinitesynchcodeexistence}, which is much less explicit: While for any $n$ and any $\eps$ an $\eps$-synchronization string of length $n$ can in principle be found using an exponential time enumeration there is no straight forward algorithm which follows the proof of Lemma~\ref{lem:infinitesynchcodeexistence} and given an $i \in \mathbb{N}$ produces the $i^{th}$ symbol of such an $S$ in a finite amount of time (bounded by some function in $i$).
Our constructions require significantly more work but in the end lead to an explicit deterministic construction of an infinite $\eps$-synchronization string for any $\eps>0$ for which the $i^{th}$ symbol can be computed in only $O(\log i)$ time -- thus satisfying one of the strongest notions of constructiveness that can be achieved.

\subsection{Decoding}\label{sec:sync_decoding}

We now provide an algorithm for decoding synchronization strings, i.e., an algorithm that can form a solution to the indexing problem along with $\eps$-synchronization strings. In the beginning of Section~\ref{sec:sync}, we introduced the notion of relative suffix distance between two strings. 
Theorem~\ref{lem:codewordRSDdistance} stated a lower bound of $1-\eps$ for relative suffix distance between any two distinct codewords associated with an $\eps$-synchronization string, i.e., its prefixes. Hence, a natural decoding scheme for detecting the index of a received symbol would be finding the prefix with the closest relative suffix distance to the string received thus far. We call this algorithm \emph{the minimum relative suffix distance decoding algorithm}.

We define the notion of \emph{relative suffix error density at index $i$} which presents the maximized density of errors taken place over suffixes of $S[1, i]$.  We will introduce a very natural decoding approach for synchronization strings that simply works by decoding a received string by finding the codeword of a synchronization string $S$ (prefix of synchronization string) with minimum distance to the received string. We will show that this decoding procedure works correctly as long as the relative suffix error density is not larger than $\frac{1-\eps}{2}$. Then, we will show that if adversary is allowed to perform $c$ many insertions or deletions, the relative suffix distance may exceed $\frac{1-\eps}{2}$ upon arrival of at most $\frac{2c}{1-\eps}$ many successfully transmitted symbols. Finally, we will deduce that this decoding scheme decodes indices of received symbols correctly for all but $\frac{2c}{1-\eps}$ many of successfully transmitted symbols. Formally, we claim that:
\begin{theorem}\label{thm:MinRSDMisDec}
Any $\eps$-synchronization string of length $n$ along with the minimum relative suffix distance decoding algorithm form a solution to $(n, \delta)$-indexing problem that guarantees $\frac{2}{1-\eps}n\delta$ or less misdecodings. This decoding algorithm is streaming and can be implemented so that it works in $O(n^4)$ time.
\end{theorem}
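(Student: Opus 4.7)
The plan is to follow the three-step roadmap sketched in the preamble: introduce a pointwise quantity, a \emph{relative suffix error density} $RSED(j)$, show that the minimum-$RSD$ decoder is correct at every index $j$ where $RSED(j) \leq (1-\eps)/2$, and then bound the number of successfully transmitted indices where $RSED(j)$ exceeds this threshold. Concretely, for a successfully transmitted $j$ with true sent index $i_j$ I would define $RSED(j) = \max_{k \geq 1} r(k,j)/(2k)$, where $r(k,j)$ counts the deletions with sent position in $(i_j - k, i_j]$ plus the insertions with received position in $(j-k, j]$. The key structural observation is that restricting $\tau$ to these two suffix windows yields a valid string matching between $S(i_j-k,i_j]$ and $S_\tau(j-k,j]$ of cost $r(k,j)$, so $ED(S(i_j-k,i_j], S_\tau(j-k,j]) \leq r(k,j)$ and therefore $RSD(S[1,i_j], S_\tau[1,j]) \leq RSED(j)$.

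The correctness step then follows directly from the triangle inequality (Lemma~\ref{lem:RSDMetricProperties}) combined with the codeword separation of Lemma~\ref{lem:codewordRSDdistance}. Whenever $RSED(j) \leq (1-\eps)/2$, every prefix $S[1,i']$ with $i' \neq i_j$ satisfies $RSD(S[1,i'], S_\tau[1,j]) \geq RSD(S[1,i'], S[1,i_j]) - RSD(S[1,i_j], S_\tau[1,j]) > (1-\eps) - (1-\eps)/2 = (1-\eps)/2 \geq RSD(S[1,i_j], S_\tau[1,j])$, so the minimum-$RSD$ decoder uniquely outputs $i_j$.

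The crux is then upper bounding by $\frac{2 n\delta}{1-\eps}$ the number of \emph{bad} successfully transmitted indices, meaning those with $RSED(j) > (1-\eps)/2$. For each such $j$ I would pick the smallest witness $k_j$ with $r(k_j,j) > (1-\eps) k_j$; note that $k_j \geq 2$ since at $k=1$ the two windows each consist solely of the matched pair $(S[i_j], S_\tau[j])$, giving $r(1,j)=0$. I would then sweep the bad indices in decreasing order of $j$ and greedily accept into a set $J^\star$ those whose received-side window $(j-k_j,j]$ is disjoint from all previously accepted windows. Disjointness on the received side ensures that the insertions contributing to the different $r(k_t,t)$ are counted at most once, and a parallel amortization on the sent side, based on the fact that the shift $i_j - j$ changes by $\pm 1$ only at errors of $\tau$, is intended to yield $\sum_{t \in J^\star} k_t \leq \frac{n\delta}{1-\eps}$. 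Since every non-selected bad index necessarily falls into some $(j-k_j,j]$ with $j \in J^\star$, and each such received window contains at most $k_j - 1$ other bad indices, the non-selected bad indices contribute at most another $\frac{n\delta}{1-\eps}$. The principal obstacle is executing this dual accounting cleanly so that the constant comes out to exactly $2/(1-\eps)$: the asymmetry between the truly disjoint received windows and the potentially overlapping sent windows must be carefully balanced, for instance by charging each unit of $k_t$ either to a unique insertion in the received window or to a deletion along the sent path whose contribution is amortized via the shift.

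The streaming property of the decoder is immediate from its definition, since the index output at received position $j$ only inspects $S_\tau[1,j]$ and the $n$ candidate codewords $\{S[1,i]\}_{i=1}^{n}$. To compute each $RSD(S[1,i], S_\tau[1,j])$ efficiently, I would run the Wagner--Fischer dynamic program on the two reversed prefixes, which produces the edit distance of every pair of length-$k$ suffixes in $O(n^2)$ time, after which maximizing $ED/(2k)$ over $k$ is trivial. Across the $n$ candidate codewords, each received symbol requires $O(n^3)$ work, and summing over the $O(n)$ received positions gives the claimed $O(n^4)$ bound on the overall runtime.
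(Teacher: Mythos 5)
Your high-level structure matches the paper's proof exactly: introduce a pointwise error-density quantity, show $RSD$ is bounded by it, conclude correctness via the triangle inequality (Lemma~\ref{lem:RSDMetricProperties}) together with the $(1-\eps)$ codeword separation (Lemma~\ref{lem:codewordRSDdistance}), and then bound the number of indices where the density is large. The streaming observation and the $O(n^4)$ runtime analysis via Wagner--Fischer on reversed prefixes are also correct.

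There are, however, two genuine gaps. First, the claim that restricting $\tau$ to the two suffix windows yields a string matching between $S(i_j-k,i_j]$ and $S_\tau(j-k,j]$ of cost $r(k,j)$, and hence $ED \leq r(k,j)$, is false. The restriction generically leaves a boundary position unmatched: a sent symbol in $(i_j-k,i_j]$ whose matched partner lies at a received position outside $(j-k,j]$ (or vice versa) becomes an artificial deletion (or insertion) in the restricted matching. A single insertion inside the received window, for example, already shifts the sent window so that one genuine sent symbol is cut off, producing $ED = 2$ while $r(k,j)=1$. The correct inequality is $ED \leq 2\,r(k,j)$, so with your $2k$ denominator you only get $RSD \leq 2\cdot RSED(j)$, and your threshold would have to shrink to $(1-\eps)/4$, doubling the misdecoding bound. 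The paper sidesteps this by defining the relative suffix error density with denominator $k$ rather than $2k$ and explicitly charging each error in a suffix window twice (the factor of two appears inside Lemma~\ref{lem:SDDensity}).

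Second, and more consequentially, the key counting step, bounding by $\frac{2n\delta}{1-\eps}$ the number of successfully transmitted indices with density above threshold, is not actually proven. You propose greedily selecting witness windows that are disjoint on the received side and then amortizing on the sent side, but you explicitly flag this "dual accounting" as the principal obstacle and leave it open. The difficulty is real: the sent windows $(i_t - k_t, i_t]$ can overlap arbitrarily whenever insertions outnumber deletions between consecutive received windows, so a single deletion may be charged in many $r(k_t,t)$ and it is not clear $\sum k_t$ is bounded as claimed; moreover your own bookkeeping already conflates $|J^\star| + \sum(k_t-1)$ with two separate contributions of $\frac{n\delta}{1-\eps}$ each. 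The paper instead proves this cleanly with a potential function $\Phi(i) = \max_{1 \leq s \leq i}\bigl\{\frac{t\cdot\mathcal{E}(i-s,i)}{1-\eps} - s\bigr\}$ (Lemma~\ref{lem:NumberofBadSuffixDistances}): an error-free step decreases $\Phi$ by one or holds it at zero, a step with $k$ errors increases it by at most $\frac{tk}{1-\eps}-1$, and $\Phi(i)=0$ certifies a density below $\frac{1-\eps}{t}$; summing the total increase against the unit decrease rate bounds the number of indices with $\Phi>0$ by $\frac{t(c_i+c_d)}{1-\eps}$, and $t=2$ gives the theorem. You would need either to complete the amortization rigorously or to adopt such a potential-function argument to close the proof.
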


\global\def\RSDMinDistanceDecoding{
\fullOnly{Before proceeding to the formal statement and the proofs of the claims above, we first provide the following useful definitions.}
\shortOnly{Before proceeding to the formal statement and proof of claims mentioned in Section~\ref{sec:EpsSynchExistence}, we first provide the following useful definitions.}

\begin{definition}[Error Count Function]
Let $S$ be a string sent over an insertion-deletion channel. We denote the \emph{error count from index $i$ to index $j$} with $\mathcal{E}(i, j)$ and define it to be the number of insdels applied to $S$ from the moment $S[i]$ is sent until the moment $S[j]$ is sent. $\mathcal{E}(i, j)$ counts the potential deletion of $S[j]$. However, it does \emph{not} count the potential deletion of $S[i]$.
\end{definition}

\begin{definition}[Relative Suffix Error Density]
Let string $S$ be sent over an insertion-deletion channel and let $\mathcal{E}$ denote the corresponding error count function. We define the \emph{relative suffix error density} of the communication as:
$$\max_{i\ge1} \frac{\mathcal{E}\left(|S|-i, |S|\right)}{i}$$
\end{definition}

The following lemma relates the suffix distance of the message being sent by sender and the message being received by the receiver at any point of a communication over an insertion-deletion channel to the relative suffix error density of the communication at that point.

\begin{lemma}\label{lem:SDDensity}
Let string $S$ be sent over an insertion-deletion channel and the corrupted message $S'$ be received on the other end. The relative suffix distance $RSD(S, S')$ between the string $S$ that was sent and the string $S'$ which was received is at most the relative suffix error density of the communication.
\end{lemma}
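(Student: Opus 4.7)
The plan is to fix an arbitrary $k\geq 1$, bound $ED\bigl(S(|S|-k,|S|], S'(|S'|-k,|S'|]\bigr)$ by a small multiple of $\mathcal{E}(|S|-k,|S|)$, and then conclude by taking the maximum over $k$.

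First, I would view the channel's string matching $\tau=(\tau_1,\tau_2)$ as a time-ordered log of the insdel events (so that the moment $S[\ell]$ is sent is exactly the $\tau$-position of the $\ell$-th non-$*$ entry in $\tau_1$). Let $a$ be the $\tau$-position immediately after the sending of $S[|S|-k]$, i.e.\ one past the index of the $(|S|-k)$-th non-$*$ entry of $\tau_1$. By construction $\tau_1[a,|\tau|]$ has exactly $k$ non-$*$ entries, which are precisely the symbols of $S(|S|-k,|S|]$, and $sc(\tau_1[a,|\tau|]) + sc(\tau_2[a,|\tau|])$ counts exactly those insdels that occur after $S[|S|-k]$ has been sent, hence equals $\mathcal{E}(|S|-k,|S|)$.

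Next, I would observe that $\bigl(\tau_1[a,|\tau|],\tau_2[a,|\tau|]\bigr)$ is itself a valid string matching, from $S(|S|-k,|S|]$ to $S'(|S'|-m,|S'|]$, where $m$ is the number of non-$*$ entries in $\tau_2[a,|\tau|]$. Setting $I = sc(\tau_1[a,|\tau|])$ (insertions in the suffix) and $D = sc(\tau_2[a,|\tau|])$ (deletions in the suffix), counting matched pairs two ways gives $k-D = m-I$, i.e.\ $m = k+I-D$. The matching bound in the definition of $ED$ then yields $ED\bigl(S(|S|-k,|S|], S'(|S'|-m,|S'|]\bigr) \le I+D$, while trivially $ED\bigl(S'(|S'|-m,|S'|], S'(|S'|-k,|S'|]\bigr) \le |m-k| = |I-D|$. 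The triangle inequality therefore gives
\[
ED\bigl(S(|S|-k,|S|], S'(|S'|-k,|S'|]\bigr) \le (I+D) + |I-D| = 2\max(I,D) \le 2\mathcal{E}(|S|-k,|S|).
\]
Dividing by $2k$ and taking the maximum over $k>0$ yields $RSD(S,S') \le \max_{k>0} \mathcal{E}(|S|-k,|S|)/k$, and the right-hand side is by definition at most the relative suffix error density.

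The main obstacle is purely bookkeeping: one must verify that the restriction of $\tau$ to the suffix starting at $a$ is a genuine string matching in the sense of the paper, and that with the natural convention on insdel events occurring after $S[|S|]$ has been sent one gets the equality $sc(\tau_1[a,|\tau|])+sc(\tau_2[a,|\tau|]) = \mathcal{E}(|S|-k,|S|)$. The identity $m = k+I-D$ comes from a single double-count of matched positions, and the rest is a straight-forward application of the triangle inequality for $ED$ together with the observation that $2\max(I,D) \le 2(I+D)$.
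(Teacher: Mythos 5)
Your proof is correct. It reaches the same final inequality as the paper but by a slightly different, and more explicit, decomposition. The paper also restricts the channel's string matching $\tilde\tau$ to the suffix $S(|S|-k,|S|]$ and claims $ED\bigl(S(|S|-k,|S|],S'(|S'|-k,|S'|]\bigr)\le 2\bigl(sc(\tilde\tau'_1)+sc(\tilde\tau'_2)\bigr)$, but its justification is an informal per-insdel argument (``one single insertion or deletion on the $k$-element suffix \ldots edit distance two''). You instead introduce the intermediate string $S'(|S'|-m,|S'|]$ --- the actual image of the $k$-suffix of $S$ under $\tau$ --- and split via the triangle inequality: the restricted matching is a genuine string matching and directly gives $ED\bigl(S(|S|-k,|S|],S'(|S'|-m,|S'|]\bigr)\le I+D$, and the length correction contributes exactly $|m-k|=|I-D|$. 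This is a cleaner account of where the factor $2$ comes from, and it in fact yields the slightly sharper bound $2\max(I,D)$ in place of the paper's $2(I+D)$; both are then weakened to $2\,\mathcal{E}(|S|-k,|S|)$ and divided by $2k$, after which the lemma follows identically. The only point needing care (as you flag yourself) is the convention that $\mathcal{E}(|S|-k,|S|)$ also counts insdels occurring after $S[|S|]$ is sent, so that $sc(\tau_1[a,|\tau|])+sc(\tau_2[a,|\tau|])$ equals it exactly; with that convention understood, your argument is complete and, if anything, tighter and more rigorous than the paper's.
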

\begin{proof}
Let $\tilde\tau=(\tilde\tau_1, \tilde\tau_2)$ be the string matching from $S$ to $S'$ that 
characterizes insdels that have turned $S$ into $S'$.
Then:
\begin{eqnarray}
\shortOnly{&&}
RSD(S, S')
\shortOnly{\nonumber\\}
&=&  \max_{k > 0} \frac{ED(S(|S|-k,|S|],S'(|S'|-k,|S'|])}{2k}\label{eq:RES-RSE:step1}\\
&=& \max_{k > 0} \frac{\min_{\tau:S(|S|-k,|S|]\rightarrow S'(|S'|-k,|S'|]}\{sc(\tau_1) + sc(\tau_2)\}}{2k}\label{eq:RES-RSE:step2}\\
&\le& \max_{k > 0} \frac{2(sc(\tau'_1) + sc(\tau'_2))}{2k} 
\shortOnly{\\&}
\le \shortOnly{&}\text{Relative Suffix Error Density}\label{eq:RES-RSE:step3}
\end{eqnarray}
where $\tau'$ is $\tilde\tau$ limited to its suffix corresponding to $S(|S|-k, |S|])$. Note that Steps~\eqref{eq:RES-RSE:step1} and \eqref{eq:RES-RSE:step2} follow from the definitions of edit distance and relative suffix distance. Moreover, to see Step~\eqref{eq:RES-RSE:step3}, one has to note that one single insertion or deletion on the $k$-element suffix of a string may result into a string with $k$-element suffix of edit distance two of the original string's $k$-element suffix; one stemming from the inserted/deleted symbol and the other one stemming from a symbol appearing/disappearing at the beginning of the suffix in order to keep the size of suffix $k$.
\end{proof}

A key consequence of Lemma~\ref{lem:SDDensity} is that if an $\eps$-synchronization string is being sent over an insertion-deletion channel and at some step the relative suffix error density corresponding to corruptions is smaller than $\frac{1-\eps}{2}$, the relative suffix distance of the sent string and the received one at that point is smaller than $\frac{1-\eps}{2}$; therefore, as RSD of all pairs of codewords associated with an $\eps$-synchronization string are greater than $1-\eps$, the receiver can correctly decode the index of the corrupted codeword he received by simply finding the codeword with minimum relative suffix distance.

The following lemma states that such a guarantee holds most of the time during transmission of a synchronization string:
\begin{lemma}\label{lem:NumberofBadSuffixDistances}
Let $\eps$-synchronization string $S$ be sent over an insertion-channel channel and corrupted string $S'$ be received on the other end. If there are $c_i$ symbols inserted and $c_d$ symbols deleted, then, for any integer $t$, the relative suffix error density is smaller than $\frac{1-\eps}{t}$ upon arrival of all but $\frac{t(c_i+c_d)}{1-\eps} - c_d$ many of the successfully transmitted symbols.
\end{lemma}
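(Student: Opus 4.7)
The plan is to fix $t$, let $B$ denote the set of sender-side indices $j$ such that $S[j]$ is successfully transmitted \emph{and} the relative suffix error density at the moment of its arrival is at least $\frac{1-\eps}{t}$, and directly upper bound $|B|$ by $\frac{t(c_i+c_d)}{1-\eps} - c_d$. For each $j \in B$ the definition of relative suffix error density provides a witness $i_j \ge 1$ with $\mathcal{E}(j-i_j,j) \ge \frac{(1-\eps)i_j}{t}$; think of $I(j) := (j-i_j,j]$ as a sender-side interval that is ``dense in errors''. The whole proof is then a careful greedy covering of $B$ by disjoint such intervals together with two complementary bookkeeping inequalities, one bounding $|B|$ in terms of interval lengths and deletions inside, and one using the bad-density witness to bound interval lengths in terms of errors.

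First I would enumerate $B$ in decreasing order $j_1 > j_2 > \cdots > j_{|B|}$ and greedily pick a subsequence $j_{k_1} > j_{k_2} > \cdots$ by setting $k_1 = 1$ and taking $k_{l+1}$ to be the smallest index with $j_{k_{l+1}} \le j_{k_l} - i_{j_{k_l}}$. By construction the selected intervals $I_l := I(j_{k_l})$ are pairwise disjoint on the sender's timeline, and every $j \in B$ lies in some $I_l$: if $j_r$ was skipped at step $l+1$, then $j_r > j_{k_l} - i_{j_{k_l}}$ and $j_r \le j_{k_l}$, hence $j_r \in I_l$. Consequently, writing $d_l$ for the number of deletions that land in $I_l$, the number of bad positions inside $I_l$ is at most the number of non-deleted sender positions there, namely $|I_l| - d_l = i_{j_{k_l}} - d_l$.

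Next, let $c_i^l$ denote the insertions charged to $I_l$. The witness inequality gives $c_i^l + d_l \ge \frac{(1-\eps)\, i_{j_{k_l}}}{t}$, so $i_{j_{k_l}} \le \frac{t(c_i^l + d_l)}{1-\eps}$. Summing the per-interval inequality $|B \cap I_l| \le i_{j_{k_l}} - d_l$ and using disjointness (which gives $\sum_l c_i^l \le c_i$ and $\sum_l d_l \le c_d$) produces
$$|B| \;\le\; \sum_l\left(i_{j_{k_l}} - d_l\right) \;\le\; \frac{t}{1-\eps}\sum_l\left(c_i^l + d_l\right) - \sum_l d_l \;\le\; \frac{t\, c_i}{1-\eps} + \left(\frac{t}{1-\eps}-1\right)\sum_l d_l.$$
Since $t \ge 1 > 1-\eps$ the coefficient of $\sum_l d_l$ is nonnegative, so plugging in $\sum_l d_l \le c_d$ yields $|B| \le \frac{t(c_i+c_d)}{1-\eps} - c_d$, as desired.

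The delicate step is the greedy covering in the second paragraph: the same argument without it would only bound the total length $\sum_l i_{j_{k_l}}$ of error-dense intervals and would miss the $-c_d$ correction. The two key ideas that make the accounting tight are (i) choosing intervals anchored at the \emph{rightmost} not-yet-covered bad index so that disjointness is automatic while the intervals still cover all of $B$, and (ii) separating deletions inside the covered intervals into a term that simultaneously reduces $|B \cap I_l|$ and contributes to the denominator of the density bound, which is precisely how the $-c_d$ saving materializes.
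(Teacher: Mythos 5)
Your proof is correct, but it follows a genuinely different route from the paper's. The paper proves this lemma with an amortized, online-style argument: it defines a potential $\Phi(i) = \max_{1\le s\le i}\bigl\{\tfrac{t\,\mathcal{E}(i-s,i)}{1-\eps} - s\bigr\}$, shows that $\Phi$ decreases by one on error-free steps, increases by at most $\tfrac{tk}{1-\eps}-1$ when $k$ errors occur, and equals zero exactly when the suffix error density is acceptable; the bound then follows because the total potential injected is at most $\tfrac{t(c_i+c_d)}{1-\eps}$, each bad index consumes one unit, and the $c_d$ deleted indices (which necessarily have positive potential) are subtracted off. You instead give a global, offline covering argument: extract a density witness interval for each bad index, greedily select a disjoint subfamily anchored at rightmost uncovered bad indices that still covers all of $B$, and then do the counting interval by interval, with the $-c_d$ correction emerging because each deletion inside a selected interval both feeds the density witness and removes a candidate from $B\cap I_l$. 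The two proofs perform the same accounting — your per-interval inequality $|B\cap I_l| \le \tfrac{t(c_i^l+d_l)}{1-\eps} - d_l$ is precisely the ``integrated'' form of the paper's per-step potential changes — but yours avoids the case analysis and the somewhat opaque potential definition at the cost of having to verify disjointness and coverage of the greedy selection; the paper's version has the advantage of being naturally streaming, which matches how the minimum-suffix-distance decoder is actually used. One small point worth making explicit in your write-up: when $i_j > j$ the witness interval would run off the left end of the string, but one can always replace such a witness by $i_j = j$ without decreasing the density, so all selected intervals may be assumed to lie in $[1,n]$.
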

\begin{proof}
Let $\mathcal{E}$ denote the error count function of the communication. We define the potential function $\Phi$ over $\{0, 1,\cdots,n\}$ as follows:
$$\Phi(i) = \max_{1\le s\le i} \left\{\frac{t\cdot \mathcal{E}(i-s, i)}{1-\eps} - s\right\}$$
Also, set $\Phi(0) = 0$. We prove the theorem by showing the correctness of the following claims:
\begin{enumerate}
\item If $\mathcal{E}(i-1, i) = 0$, i.e., the adversary does not insert or delete any symbols in the interval starting right after the moment $S[i-1]$ is sent and ending at when $S[i]$ is sent, then the value of $\Phi$ drops by 1 or becomes/stays zero, i.e., $\Phi(i) = \max\left\{0, \Phi(i-1)-1\right\}$.
\item If $\mathcal{E}(i-1, i) = k$, i.e., adversary inserts or deletes $k$ symbols in the interval starting right after the moment $S[i-1]$ is sent and ending at when $S[i]$ is sent, then the value of $\Phi$ increases by at most $\frac{tk}{1-\eps} - 1$, i.e., $\Phi(i) \le \Phi(i-1) +\frac{tk}{1-\eps}-1$.
\item If $\Phi(i) = 0$, then 
the relative suffix error density of the string that is received when $S[i]$ arrives at the receiving side is not larger than $\frac{1-\eps}{t}$.
\end{enumerate}

Given the correctness of claims made above, the lemma can be proved as follows. As adversary can apply at most $c_i+c_d$ insertions or deletions, $\Phi$ can gain a total increase of $\frac{t\cdot(c_i+c_d)}{1-\eps}$. Therefore, the value of $\Phi$ can be non-zero for at most $\frac{t\cdot (c_i+c_d)}{1-\eps}$ many inputs. As value of $\Phi(i)$ is non-zero for all $i$'s where $S[i]$ has been removed by adversary, there are at most $\frac{t\cdot (c_i+c_d)}{1-\eps} - c_d$ indices $i$ where $\Phi(i)$ is non-zero and $i$ is successfully transmitted. Hence, at most $\frac{t\cdot (c_i+c_d)}{1-\eps}-c_d$ many of correctly transmitted symbols can possibly be decoded incorrectly. 

We now proceed to the proof of each of the above claims to finish the proof:
\begin{enumerate}
\item In this case, $\mathcal{E}(i-s, i) = \mathcal{E}(i-s, i-1)$. So,
\begin{eqnarray*}
\Phi(i) &=& \max_{1\le s\le i} \left\{\frac{t\cdot\mathcal{E}(i-s, i)}{1-\eps} - s\right\} \\
&=& \max_{1\le s\le i} \left\{\frac{t\cdot\mathcal{E}(i-s, i-1)}{1-\eps} - s\right\} \\
&=& \max\left\{0, 
\max_{2\le s\le i} \left\{\frac{t\cdot\mathcal{E}(i-s, i-1)}{1-\eps} - s\right\}  
\right\} \\
&=& \max\left\{0, 
\max_{1\le s\le i-1} \left\{\frac{t\cdot\mathcal{E}(i-1-s, i-1)}{1-\eps} - s-1\right\}  
\right\} \\
&=& \max\left\{0, \Phi(i-1)-1\right\}
\end{eqnarray*}
\item In this case, $\mathcal{E}(i-s, i) = \mathcal{E}(i-s, i-1) + k$. So,
\begin{eqnarray*}
\Phi(i) &=& \max_{1\le s\le i} \left\{\frac{t\cdot\mathcal{E}(i-s, i)}{1-\eps} - s\right\} \\
&=& \max\left\{\frac{tk}{1-\eps}-1, 
\max_{2\le s\le i} \left\{\frac{t\cdot\mathcal{E}(i-s, i-1)+tk}{1-\eps} - s\right\}  
\right\} \\
&=& \max\left\{\frac{tk}{1-\eps}-1, 
\frac{tk}{1-\eps} + \max_{1\le s\le i-1} \left\{\frac{t\cdot\mathcal{E}(i-1-s, i-1)}{1-\eps} - s-1\right\}  
\right\} \\
&=& \frac{tk}{1-\eps} -1 + \max\left\{0, 
\max_{1\le s\le i-1} \left\{\frac{t\cdot\mathcal{E}(i-1-s, i-1)}{1-\eps} - s\right\}  
\right\}\\
&=& \frac{tk}{1-\eps} -1 + \max\left\{0, 
\Phi(i-1)\right\}\\
&=&\Phi(i-1) + \frac{tk}{1-\eps} -1
\end{eqnarray*}
\item 
\begin{eqnarray*}
&&\Phi(i) = \max_{1\le s\le i} \left\{\frac{t\cdot\mathcal{E}(i-s, i)}{1-\eps} - s\right\} = 0\\
&\Rightarrow &  \forall 1\le s \le i:~\frac{t\cdot\mathcal{E}(i-s, i)}{1-\eps} - s \le 0\\
&\Rightarrow &  \forall 1\le s \le i:~t\cdot\mathcal{E}(i-s, i)\le s(1-\eps)\\
&\Rightarrow &  \forall 1\le s \le i:~\frac{\mathcal{E}(i-s, i)}{s}\le \frac{\eps}{t}\\
&\Rightarrow &  \text{Relative Suffix Error Density } = \max_{1\le s \le i}\left\{\frac{\mathcal{E}(i-s, i)}{s}\right\}\le \frac{1-\eps}{t}\\
\end{eqnarray*}
\end{enumerate}
These finish the proof of the lemma.
\end{proof}

Now, we have all necessary tools to analyze the performance of the minimum relative suffix distance decoding algorithm: 

\fullOnly{\begin{proof}[Proof of Theorem~\ref{thm:MinRSDMisDec}]}
\shortOnly{\begin{proof}[of Theorem~\ref{thm:MinRSDMisDec}]}
As adversary is allowed to insert or delete up $n\delta$ symbols, by Lemma~\ref{lem:NumberofBadSuffixDistances}, there are at most $\frac{2n\delta}{1-\eps}$ successfully transmitted symbols during the arrival of which at the receiving side, the relative suffix error density is greater than $\frac{1-\eps}{2}$; Hence, by Lemma~\ref{lem:SDDensity}, there are at most $\frac{2n\delta}{1-\eps}$ misdecoded successfully transmitted symbols.

Further, we remark that this algorithm can be implemented in $O(n^4)$ as follows:	 
Using dynamic programming, we can pre-process the edit distance of any consecutive substring of $S$, like $S[i, j]$ to any  consecutive substring of $S'$, like $S'[i',j']$, in $O(n^4)$. Then, for each symbol of the received string, like $S'[l']$, we can find the codeword with minimum relative suffix distance to $S'[1,l']$ by calculating the relative suffix distance of it to all $n$ codewords. Finding suffix distance of $S'[1,l']$ and a codeword like $S[1,l]$ can also be simply done by minimizing $\frac{ED(S(l-k, l], S'(l'-k, l'])}{k}$ over $k$ which can be done in $O(n)$. With a $O(n^4)$ pre-process and a $O(n^3)$ computation as mentioned above, we have shown that the decoding process can be implemented in $O(n^4)$.
\end{proof}
}

\fullOnly{\RSDMinDistanceDecoding}
\shortOnly{
The details of the proof of Theorem~\ref{thm:MinRSDMisDec} are available in Appendix~\ref{app:MinRSDDecoding}.
\smallskip
}

We remark that by taking $\eps = o(1)$, one can obtain a solution to the $(n,\delta)$-indexing problem with a misdecoding guarantee of $2n\delta(1+o(1))$ which, using Theorem~\ref{thm:MisDecodeToHalfError}, results into a translation of $n\delta$ insertions and deletions into $n\delta(5+o(1))$ half-errors. As explained in Section~\ref{sec:introglobaldecoding}, such a guarantee however falls short of giving Theorem~\ref{thm:main}. 
In Section~\ref{sec:improvedStreaming}, we show that this guarantee of the min-distance-decoder can be slightly improved to work beyond an RSD distance of $\frac{1-\eps}{2}$, at the cost of some simplicity, by considering an alternative distance measure. In particular, the relative suffix pseudo distance RSPD, which was introduced in \cite{braverman2015coding}, can act as a metric stand-in for the minimum-distance decoder and lead to slightly improved decoding guarantees, despite neither being symmetric nor satisfying the triangle inequality. For any set of $k=k_i+k_d$ insdel errors consisting of $k_i$ insertions and $k_d$ deletions the RSPD based indexing solution leads to at most $(1 + \eps)(3k_i+k_d)$ half-errors which does imply ``near-MDS'' codes for deletion-only channels but still falls short for general insdel errors.

This leaves open the intriguing question whether a further improved (pseudo) distance definition can achieve an indexing solution with negligible number of misdecodings for the minimum-distance decoder.

\section{More Advanced Global Decoding Algorithms}\label{sec:improved_decoding}

Thus far, we have introduced $\eps$-synchronization strings as fitting solutions to the indexing problem. In Section~\ref{sec:sync_decoding}, we provided an algorithm to solve the indexing problem along with synchronization strings with an asymptotic guarantee of $2n\delta$ misdecodings. As explained in Section~\ref{sec:introglobaldecoding}, such a guarantee falls short of giving Theorem~\ref{thm:main}. In this section, we thus provide a variety of more advanced decoding algorithms that provide a better decoding guarantees, in particular achieve a misdecoding fraction which goes to zero as $\eps$ goes to zero. 

We start by pointing out a very useful property of $\eps$-synchronization strings in Section~\ref{sec:SelfMatching}. We define a monotone matching between two strings as a common subsequence of them. We will next show that in a monotone matching between an $\eps$-synchronization string and itself, the number of matches that both correspond to the same element of the string is fairly large. We will refer to this property as $\eps$-self-matching property. We show that one can very formally think of this $\eps$-self-matching property as a robust global guarantee in contrast to the factor-closed strong local requirements of the $\eps$-synchronization property. One advantage of this relaxed notion of $\eps$-self-matching is that one can show that a random string over alphabets polynomially large in $\eps^{-1}$ satisfies this property (Section~\ref{sec:PolynomialConstructionOfEpsSelfMatching}). This leads to a particularly simple generation process for $S$. Finally, showing that this property even holds for approximately $\log n$-wise independent strings directly leads to a deterministic polynomial time algorithm generating such strings as well. 

In Section~\ref{sec:indelErrors}, we propose a decoding algorithm for insdel errors that basically works by finding monotone matchings between the received string and the synchronization string. Using the $\eps$-self-matching property we show that this algorithm guarantees $O\left(n\sqrt\eps\right)$ misdecodings. This algorithm works in time $O(n^2/\sqrt\eps)$ and is exactly what we need to prove our main theorem.

Lastly, in Sections~\ref{sec:delOnly} and \ref{sec:insErrors} we provide two simpler linear time algorithms that solve the indexing problem under the assumptions that the adversary can only delete symbols or only insert new symbols. These algorithms not only guarantee asymptotically optimal $\frac{\eps}{1-\eps}n\delta$ misdecodings but are also error-free. Table~\ref{tbl:indexingSolutions} provides a break down of the decoding schemes presented in this paper, describing the type of error they work under, the number of misdecodings they guarantee, whether they are error-free or streaming, and their decoding complexity.

\shortOnly{Proves of the theorems stated in this section can be found in Appendix~\ref{app:improved_decoding}.}

\subsection{Monotone Matchings and the $\eps$-Self Matching Property}\label{sec:SelfMatching}

Before proceeding to the main results of this section, we start by defining \emph{monotone matchings} which provide a formal way to refer to common substrings of two strings:

\begin{definition}[Monotone Matchings]
A monotone matching between $S$ and $S'$ is a set of pairs of indices like:
$$M=\{(a_1, b_1), \cdots, (a_m, b_m)\}$$
where $a_1 < \cdots < a_m$, $b_1 < \cdots < b_m$, and $S[a_i] = S'[b_i]$.
\end{definition}

We now point out a key property of synchronization strings that will be broadly used in our decoding algorithms. Basically, Theorem~\ref{thm:property} states that two similar subsequences of an $\eps$-synchronization string cannot disagree on many positions.
More formally, let $M = \left\{(a_1, b_1), \cdots, (a_m, b_m)\right\}$ be a monotone matching between $S$ and itself. We call the pair $(a_i, b_i)$ a \emph{good pair} if $a_i = b_i$ and a \emph{bad pair} otherwise. Then:
\begin{theorem}\label{thm:property}
Let $S$ be an $\eps$-synchronization string of size $n$ and 
$M=\left\{(a_1, b_1), \cdots, (a_m, b_m)\right\}$
be a monotone matching of size $m$ from $S$ to itself containing $g$ good pairs and $b$ bad pairs. Then,
\STOConly{$b \le \eps(n - g)$.}
\noSTOC{$$b \le \eps(n - g)$$}
\end{theorem}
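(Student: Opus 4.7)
My plan is a two-step argument. First, I reduce to the case where $M$ has no good pairs. The good pairs $(g_1, g_1), \ldots, (g_g, g_g)$ of $M$ partition $[n]$ into $g+1$ consecutive open intervals $I_k = (g_k, g_{k+1})$ (with $g_0 = 0$, $g_{g+1} = n+1$), whose lengths sum to $n - g$. By the monotonicity of $M$, every bad pair $(a_i, b_i)$ is forced to have both endpoints in the same interval $I_k$: if the pair lies between consecutive good pairs at positions $g_k$ and $g_{k+1}$ in $M$'s monotone order, then $g_k < a_i, b_i < g_{k+1}$. Restricting $M$ to each $I_k$ yields a monotone self-matching of the substring $S[I_k]$ (which is itself an $\eps$-synchronization string, the property being inherited by consecutive substrings) with no good pairs. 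Proving the bound $b_k \leq \eps |I_k|$ for each restriction then gives the general claim by summing: $b = \sum_k b_k \leq \eps(n - g)$.

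Second, I handle the zero-good-pair case. Writing $T$ for an arbitrary $\eps$-synchronization string of length $L$, I need to show that every monotone self-matching of $T$ with all pairs bad contains at most $\eps L$ pairs. The essential tool provided by $\eps$-synchronization is the cross-matching bound: for any $1 \leq i < j \leq k \leq L+1$, any monotone matching between subsets of $T[i, j)$ and $T[j, k)$ has size strictly less than $\eps(k-i)/2$, obtained from the identity $LCS(T[i, j), T[j, k)) = ((k-i) - ED(T[i, j), T[j, k)))/2 < \eps(k-i)/2$. Applied at any cut $c \in [L]$, the pairs straddling $c$ (right-straddling pairs with $a_i < c \leq b_i$, or left-straddling pairs with $b_i < c \leq a_i$, each forming a monotone cross-matching after re-sorting the left-straddling pairs) together total fewer than $\eps L$. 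The crux of the proof is to choose a small collection of cuts whose straddling sets jointly cover every bad pair with bounded overcounting, yielding the total bad count at most $\eps L$.

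The main obstacle is that no single cut necessarily straddles all bad pairs, while a naive divide-and-conquer recursion over cuts loses a $\log L$ factor in the bound. To overcome this, I would exploit the symmetry $M \leftrightarrow \{(b_i, a_i)\}$ (also a valid monotone self-matching with the same bad count), which lets me treat right-moving and left-moving bad pairs symmetrically, and then greedily partition the right-moving bad pairs (and separately the left-moving ones) into chains that each admit a common straddling cut. Applying the cross-matching bound to each chain using its tightest enclosing interval bounds the chain size by $\eps/2$ times the chain extent, and the careful greedy construction—starting each new chain at the first $b$-coordinate of a pair not yet assigned—together with the fact that monotonicity forces any two right-moving arcs $[a_i, b_i]$ to be either disjoint or crossing (never nested) should yield a bound on the sum of chain extents of at most $2L$, completing the argument. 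I also expect the stride bound $|a_i - b_i| > 1/\eps$ (obtained by applying $\eps$-synchronization to the length-$2|a_i - b_i|$ interval enclosing the pair, in which the pair contributes at least $1$ to the relevant LCS) to be needed in verifying that chains cannot be too tightly packed.
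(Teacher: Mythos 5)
Your skeleton is the paper's own: greedily decompose the bad pairs into blocks (``chains''), each of which forms a common subsequence between two adjacent intervals of $S$, and bound each block by $\tfrac{\eps}{2}$ times its extent via the LCS form of the synchronization property, $LCS(S[i,j),S[j,k))\le \tfrac{1}{2}\bigl((k-i)-ED(S[i,j),S[j,k))\bigr)<\tfrac{\eps}{2}(k-i)$. Your reduction to the zero-good-pair case by cutting $[n]$ at the good pairs is correct and is a clean substitute for the paper's observation that no good pair can occur among the indices covered by a chain's two intervals. The stride bound $|a_i-b_i|>1/\eps$ is not needed anywhere.

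The gap sits exactly at what you yourself flag as the crux: the bound of $2L$ on the total chain extent. As you set things up --- chaining the right-moving and the left-moving bad pairs \emph{separately} --- the natural accounting gives $2L$ \emph{per direction} (within one direction, the before-cut intervals of the chains are pairwise disjoint, as are the after-cut intervals, each family summing to at most $L$), hence $4L$ in total and only $b\le 2\eps(n-g)$; neither the never-nested observation nor the $M\leftrightarrow\{(b_i,a_i)\}$ symmetry recovers the lost factor of $2$. Two ways to close it. (i) Note that monotonicity forces the enclosing interval $[\min(a_i,b_i),\max(a_i,b_i)]$ of any right-moving pair to be disjoint from that of any left-moving pair (whichever pair comes later in $M$ has both coordinates beyond the far end of the earlier pair's interval), so the two directions occupy disjoint position sets $U_R,U_L$ with $|U_R|+|U_L|\le L$, and the two per-direction bounds $2|U_R|$ and $2|U_L|$ share a common budget of $L$. (ii) Do what the paper does: a single greedy pass over all bad pairs in index order, where each chain is the maximal run of subsequent bad pairs whose smaller coordinate precedes the larger coordinate of the chain's \emph{first} pair; then \emph{all} the before-cut intervals $T_1,T_2,\dots$ (across both directions) are mutually disjoint and avoid good-pair indices, and likewise all the after-cut intervals $T'_1,T'_2,\dots$, giving $\sum_i(|T_i|+|T'_i|)\le 2(n-g)$ and hence $b\le\tfrac{\eps}{2}\cdot 2(n-g)=\eps(n-g)$ directly, without your step-1 reduction.
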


\global\def\SynchToSelfMatchingProperty{
\shortOnly{\begin{proof}[of Theorem~\ref{thm:property}]}
\fullOnly{\begin{proof}}
Let $(a'_1, a'_2), \cdots, (a'_{m'}, b'_{m'})$ indicate the set of bad pairs in $M$ indexed as 
$a'_1<\cdots<a'_{m'}$ and $b'_1<\cdots<b'_{m'}$.
Without loss of generality, assume that $a'_1 < b'_1$. Let $k_1$ be the largest integer such that $a'_{k_1} < b'_1$. Then, the pairs $(a'_1, a'_2), \cdots, (a'_{k_1}, b'_{k_1})$ form a common substring of size $k_1$ between $T_1 = S[a'_1, b'_1)$ and $T'_1 = S[b'_1, b'_{k_1}]$. Now, the synchronization string guarantee implies that:
\noSTOC{\begin{eqnarray*}
k_1 &\le& LCS(T_1, T'_1)\\
&\le& \frac{|T_1| + |T'_1| - ED\left(T_1, T'_1\right)}{2}\\
&\le&\frac{\eps(|T_1| + |T'_1|)}{2}
\end{eqnarray*}}
\STOConly{
$$k_1 \le LCS(T_1, T'_1)
\le \frac{|T_1| + |T'_1| - ED\left(T_1, T'_1\right)}{2}
\le\frac{\eps(|T_1| + |T'_1|)}{2}
$$}

Note that the monotonicity of the matching guarantees that there are no good matches occurring on indices covered by $T_1$ and $T'_1$, i.e., $a'_1, \cdots, b'_{k_1}$. One can repeat very same argument for the remaining bad matches to rule out bad matches $(a'_{k_1+1}, b'_{k_1+1}), \cdots, (a'_{k_1+k_2}, b'_{k_1+k_2})$ for some $k_2$ having the following inequality guaranteed:

\begin{eqnarray}
k_2 \le \frac{\eps(|T_2| + |T'_2|)}{2}\label{eq:piecewise}
\end{eqnarray}

where 
$$\Bigg\{
\begin{tabular}{cc}
$T_2 = [a'_{k_1+1}, b'_{k_1+1})$ and $T'_2 = [b'_{k_1+1}, b'_{k_1+k_2}]$ & $a'_{k_1+1} < b'_{k_1+1}$ \\[2mm]
$T_2 = [b'_{k_1+1}, a'_{k_1+1})$ and $T'_2 = [a'_{k_1+1}, a'_{k_1+k_2}]$ & $a'_{k_1+1} > b'_{k_1+1}$
\end{tabular}$$
For a pictorial representation see Figure~\ref{fig:property}.

\STOConly{
\begin{figure}
\centering
\begin{subfigure}{.24\textwidth}
  \centering
  \includegraphics[width=1\linewidth]{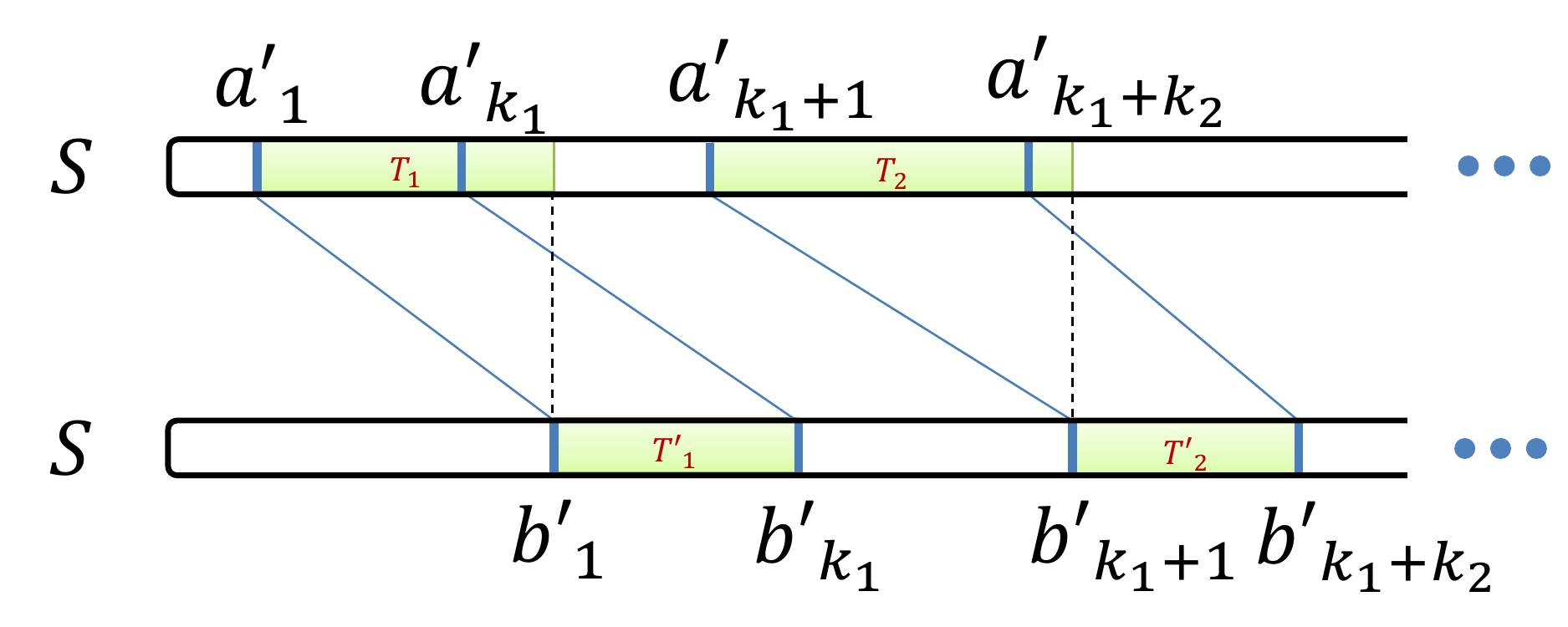}
  \caption{$a'_{k_1+1} < b'_{k_1+1}$}
  \label{fig:sub1}
\end{subfigure}%
\begin{subfigure}{.24\textwidth}
  \centering
  \includegraphics[width=1\linewidth]{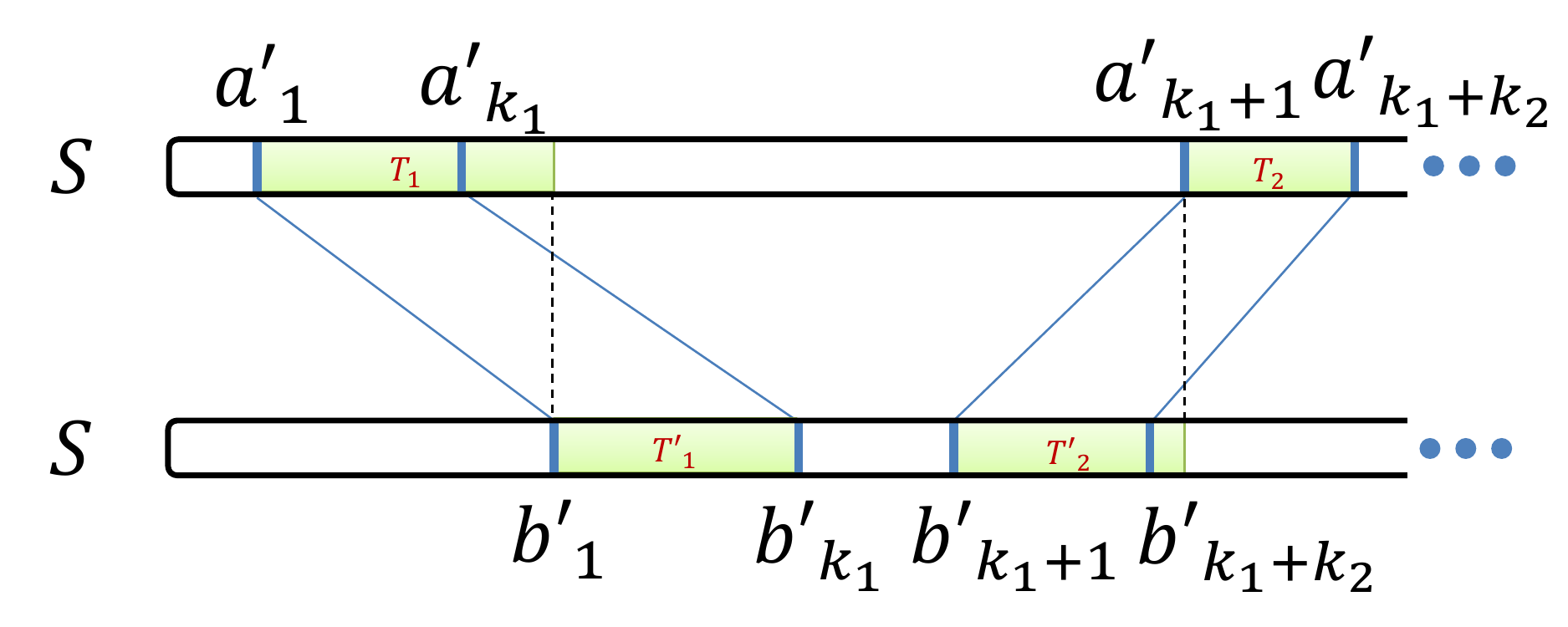}
  \caption{$a'_{k_1+1} > b'_{k_1+1}$}
  \label{fig:sub2}
\end{subfigure}
\caption{Pictorial representation of $T_2$ and $T'_2$}
\label{fig:property}
\end{figure}}

\noSTOC{
\begin{figure}
\centering
\begin{subfigure}{.5\textwidth}
  \centering
  \includegraphics[width=1\linewidth]{Monotone1.pdf}
  \caption{$a'_{k_1+1} < b'_{k_1+1}$}
  \label{fig:sub1}
\end{subfigure}%
\begin{subfigure}{.5\textwidth}
  \centering
  \includegraphics[width=1\linewidth]{Monotone2.pdf}
  \caption{$a'_{k_1+1} > b'_{k_1+1}$}
  \label{fig:sub2}
\end{subfigure}
\caption{Pictorial representation of $T_2$ and $T'_2$}
\label{fig:property}
\end{figure}}

Continuing the same procedure, one can find $k_1, \cdots, k_l$, $T_1, \cdots, T_l$, and $T'_1, \cdots, T'_l$ for some $l$. Summing up all inequalities of form \eqref{eq:piecewise}, we will have:
\begin{eqnarray}
\sum_{i=1}^l k_i \le \frac{\eps}{2}\cdot\left(\sum_{i=1}^l |T_i| + \sum_{i=1}^l |T'_i|\right)\label{eq:summation}
\end{eqnarray}

Note that $\sum_{i=1}^l k_i = u$ and $T_i$s are mutually exclusive and contain no indices where a good pair occurs at. Same holds for $T'_i$s. Hence, $\sum_{i=1}^l |T_i| \le n-g$ and $\sum_{i=1}^l |T'_i| \le n-g$. All these along with \eqref{eq:summation} give that:
\begin{eqnarray*}
u \le \frac{\eps}{2}\cdot2\left(n-g\right) = \eps(n-g) \Rightarrow n - g - b \ge (1-\eps)(n-g) \Rightarrow b \le \eps(n-g)
\end{eqnarray*}
\end{proof}
}
\fullOnly{\SynchToSelfMatchingProperty}

We define the $\eps$-self-matching property as follows:
\begin{definition}[$\eps$-self-matching property]
String $S$ satisfies $\eps$-self-matching property if any monotone matching between $S$ and itself contains less than $\eps|S|$ bad pairs.
\end{definition}

Note that $\eps$-synchronization property concerns all substrings of a string while the $\eps$-self-matching property only concerns the string itself. Granted that, we now show that $\eps$-synchronization property and satisfying $\eps$-self-matching property on all substrings are equivalent up to a factor of two:

\begin{theorem}\label{lemma:simplifiedProperty}
$\eps$-synchronization and $\eps$-self matching properties are related as follows:
\begin{itemize}
\item [a)] If $S$ is an $\eps$-synchronization string, then all substrings of $S$ satisfy $\eps$-self-matching property.
\item [b)] If all substrings of string $S$ satisfy the $\frac{\eps}{2}$-self-matching property, then $S$ is $\eps$-synchronization string.
\end{itemize}
\end{theorem}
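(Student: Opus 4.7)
Part (a) is essentially a restatement of Theorem~\ref{thm:property}. I would first note that by Definition~\ref{def:synCode}, every factor $S[i,k)$ of an $\eps$-synchronization string is itself an $\eps$-synchronization string, since the quantifier ``for every $1 \le i < j < k \le n+1$'' automatically descends to substrings. Applying Theorem~\ref{thm:property} to an arbitrary substring $T$ of $S$ and an arbitrary monotone self-matching of $T$, we get $b \le \eps(|T| - g) \le \eps|T|$ bad pairs, which is exactly the $\eps$-self-matching property.

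For part (b) I would argue by contrapositive. Suppose $S$ is \emph{not} an $\eps$-synchronization string, so there exist indices $i < j < k$ with
\[
ED\bigl(S[i,j),\, S[j,k)\bigr) \;\le\; (1-\eps)(k-i).
\]
Write $A = S[i,j)$ and $B = S[j,k)$, so $|A|+|B| = k-i$. Using the standard identity $ED(A,B) = |A| + |B| - 2\cdot LCS(A,B)$ recalled in Section~\ref{sec:def}, the failure of the synchronization inequality rearranges to
\[
LCS(A,B) \;\ge\; \tfrac{\eps}{2}(k-i).
\]

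The key step is then to convert this long common subsequence between the two neighboring factors $A$ and $B$ into a monotone self-matching of the single factor $T := S[i,k) = A \cdot B$ whose pairs are \emph{all} bad. Concretely, an LCS of $A$ and $B$ is witnessed by positions $a_1 < \cdots < a_m$ in $A$ and $b_1 < \cdots < b_m$ in $B$ with $A[a_\ell] = B[b_\ell]$; translating these into positions of $T$, each $a_\ell$ lies in $[1, j-i]$ while each $b_\ell$ lies in $(j-i, k-i]$, so $a_\ell \ne b_\ell$ for every $\ell$. This produces a monotone matching of $T$ to itself of size $m \ge \tfrac{\eps}{2}|T|$, every one of whose pairs is bad. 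Hence the substring $T$ of $S$ violates the $\tfrac{\eps}{2}$-self-matching property, contradicting the hypothesis of (b) and completing the proof.

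Neither direction requires a delicate argument; the only ``trick'' is recognizing that the concatenation $A\cdot B$ turns an $A$-to-$B$ common subsequence into a self-matching of $T$ in which monotonicity is preserved and every pair is forced to be bad (because $A$-positions strictly precede $B$-positions in $T$). The mild factor-of-two loss between (a) and (b) comes precisely from the edit-distance/LCS conversion $|A|+|B|-2\cdot LCS$; I do not see an obvious way to avoid it, and I would not attempt to tighten it here.
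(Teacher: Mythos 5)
Your proposal is correct and follows essentially the same route as the paper: part (a) is deduced from Theorem~\ref{thm:property} (using that every factor of an $\eps$-synchronization string is again an $\eps$-synchronization string), and part (b) is the same contrapositive argument converting a violating pair of neighboring intervals, via the identity $ED = |A|+|B|-2\,LCS$, into a monotone self-matching of size at least $\tfrac{\eps}{2}(k-i)$ consisting entirely of bad pairs. Your write-up is in fact slightly more explicit than the paper's about why every pair of the resulting matching is bad.
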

\shortOnly{\begin{proof}[of Theorem~\ref{lemma:simplifiedProperty} (a)]}
\fullOnly{\begin{proof}[Proof of Theorem~\ref{lemma:simplifiedProperty} (a)]}
This part is a straightforward consequence of Theorem~\ref{thm:property}.\end{proof}
\shortOnly{\begin{proof}[of Theorem~\ref{lemma:simplifiedProperty} (b)]}
\fullOnly{\begin{proof}[Proof of Theorem~\ref{lemma:simplifiedProperty} (b)]}
Assume by contradiction that there are $i < j < k$ such that $ED(S[i, j), S[j, k)) \le (1-\eps) (k-i)$. Then, $LCS(S[i, j), S[j, k)) \ge \frac{k - i - (1-\eps) (k-i)}{2} = \frac{\eps}{2} (k-i)$. The corresponding pairs of such longest common substring form a monotone matching of size $\frac{\eps}{2} (k-i)$ which contradicts $\frac{\eps}{2}$-self-matching property of $S$.
\end{proof}

As a matter of fact, the decoding algorithms we will propose for $\eps$-synchronization strings in Sections~\ref{sec:indelErrors}, \ref{sec:delOnly}, and~\ref{sec:insErrors} only make use of the $\eps$-self-matching property of the $\eps$-synchronization string. 

We now proceed to the definition of $\eps$-bad-indices which will enable us to show that $\eps$-self matching property, as opposed to the $\eps$-synchronization property, is robust against local changes.

\begin{definition}[$\eps$-bad-index]
We call index $k$ of string $S$ an $\eps$-bad-index if there exists a factor $S[i, j]$ of $S$ with $i\le k\le j$ where $S[i, j]$ does not satisfy the $\eps$-self-matching property. In this case, we also say that index $k$ blames interval $[i, j]$.
\end{definition}

Using the notion of $\eps$-bad indices, we now present Lemma~\ref{lemma:EpsGammaSelfMatching}. This lemma suggests that a string containing limited fraction of $\eps$-bad indices would still be an $\eps'$-self matching string for some $\eps' > \eps$. An important consequence of this result is that if one changes a limited number of elements in a given $\eps$-self matching string, the self matching property will be essentially preserved to a lesser extent. Note that $\eps$-synchronization property do not satisfy any such robustness quality.

\begin{lemma}\label{lemma:EpsGammaSelfMatching}
If the fraction of $\eps$-bad indices in string $S$ is less than $\gamma$, then $S$ satisfies $(\eps+2\gamma)$-self matching property. 
\end{lemma}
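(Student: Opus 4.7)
The plan is to prove the contrapositive-style bound directly: fix any monotone matching $M=\{(a_1,b_1),\dots,(a_m,b_m)\}$ of $S$ to itself and bound its number of bad pairs (those with $a_i\neq b_i$) by $(\eps+2\gamma)|S|$. The overall idea is to split the bad pairs into two groups according to whether at least one of their two coordinates is an $\eps$-bad index of $S$, and bound each group separately.

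First I would handle the bad pairs that touch an $\eps$-bad index. Let $B\subseteq\{1,\dots,|S|\}$ be the set of $\eps$-bad indices, so $|B|<\gamma|S|$ by hypothesis. Because $M$ is monotone, the sequences $a_1<a_2<\cdots$ and $b_1<b_2<\cdots$ are each strictly increasing, so any fixed index $x$ of $S$ appears as an $a_i$ for at most one $i$ and as a $b_i$ for at most one $i$. Hence each element of $B$ participates in at most two pairs of $M$, giving at most $2|B|<2\gamma|S|$ bad pairs with at least one $\eps$-bad coordinate.

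Next I would handle the remaining bad pairs, i.e., those $(a'_1,b'_1),\dots,(a'_k,b'_k)$ whose coordinates all lie outside $B$. Let $l=\min(a'_1,b'_1)$ and $r=\max(a'_k,b'_k)$, and consider the factor $S[l,r]$. The $k$ pairs still form a monotone matching from $S[l,r]$ to itself after re-indexing, and they are still bad (since re-indexing translates both coordinates by the same amount). Now here is the key step: if $S[l,r]$ failed the $\eps$-self-matching property, then by definition every index in $[l,r]$, in particular $a'_1$, would be $\eps$-bad (since $a'_1$ could blame the interval $[l,r]$), contradicting $a'_1\notin B$. Therefore $S[l,r]$ does satisfy the $\eps$-self-matching property, which forces $k<\eps(r-l+1)\le\eps|S|$.

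Combining the two groups gives at most $\eps|S|+2\gamma|S|=(\eps+2\gamma)|S|$ bad pairs in $M$. Since $M$ was arbitrary, $S$ satisfies the $(\eps+2\gamma)$-self-matching property. The only subtlety I expect is book-keeping in the reduction of the second group to a factor of $S$: one must verify that the renaming preserves the ``bad'' status of each pair (trivial, as both coordinates shift by the same amount) and that blaming $[l,r]$ is indeed legitimate under the definition of $\eps$-bad-index. Everything else is direct counting.
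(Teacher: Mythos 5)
Your proof is correct and follows essentially the same decomposition as the paper's: split the bad pairs according to whether they touch an $\eps$-bad index, bound the first group by $2\gamma|S|$ because each index can occur at most once as a left and once as a right coordinate of a monotone matching, and bound the second group by $\eps|S|$. Where the paper merely asserts the second bound ``by definition,'' you correctly supply the missing justification — the bad pairs with both coordinates non-$\eps$-bad lie in some factor $S[l,r]$, and that factor must itself satisfy the $\eps$-self-matching property, since otherwise $a_1'\in[l,r]$ would be $\eps$-bad by blaming $[l,r]$ — so your write-up is, if anything, a cleaner version of the paper's argument.
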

\begin{proof}
Consider a matching from $S$ to itself. The number of bad matches whose both ends refer to non-$\eps$-bad indices of $S$ is at most $|S|(1-\gamma)\eps$ by definition. Further, each $\eps$-bad index can appear at most once in each end of bad pairs. Therefore, the number of bad pairs in $S$ can be at most:
$$|S|(1-\gamma)\eps + 2|S|\gamma \le |S|(\eps+2\gamma)$$
which, by definition, implies that $S$ satisfies the $(\eps+2\gamma)$-self-matching property.
\end{proof}

On the other hand, in the following theorem, we will show that within a given $\eps$-self matching string, there can be a limited number of $\eps'$-bad indices for sufficiently large $\eps' > \eps$.

\begin{lemma}\label{lem:EpsBadIndsInEpsBadStrings}
Let $S$ be an $\eps$-self matching string of length $n$. Then, for any $3\eps<\eps'<1$, at most $\frac{3n\eps}{\eps'}$ many indices of $S$ can be $\eps'$-bad.
\end{lemma}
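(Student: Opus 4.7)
The plan is to argue by contradiction via a Vitali-style covering. Suppose for contradiction that more than $\frac{3n\eps}{\eps'}$ indices of $S$ are $\eps'$-bad. By definition, each such index $k$ blames some factor $S[i,j]$ (containing $k$) that does not satisfy the $\eps'$-self-matching property, i.e., admits a monotone self-matching of $S[i,j]$ with more than $\eps'(j-i+1)$ bad pairs. Collect one such blamed interval for every $\eps'$-bad index; let $\mathcal{I}$ denote this family. The union of the intervals in $\mathcal{I}$ contains all the $\eps'$-bad indices.

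Next I would extract a disjoint subfamily $I_1, I_2, \ldots, I_r$ from $\mathcal{I}$ using the standard Vitali greedy: at each step pick a remaining interval of maximum length, then discard all remaining intervals that intersect it; repeat until $\mathcal{I}$ is exhausted. The key observation is that any interval $J \in \mathcal{I}$ discarded at the step at which $I_\ell$ was picked must intersect $I_\ell$ and satisfy $|J| \le |I_\ell|$, so $J$ is contained in the expansion of $I_\ell$ by $|I_\ell|$ on each side, a window of length at most $3|I_\ell|$. Consequently every $\eps'$-bad index lies in one of these $3|I_\ell|$-windows, yielding
\[
\#\{\eps'\text{-bad indices}\} \;\le\; \sum_{\ell=1}^{r} 3|I_\ell|, \qquad\text{hence}\qquad \sum_{\ell=1}^{r} |I_\ell| \;>\; \frac{n\eps}{\eps'}.
\]

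Now, since the intervals $I_1,\dots,I_r$ are pairwise disjoint, the self-matchings of the individual $I_\ell$'s can be concatenated: choose within each $I_\ell$ a monotone self-matching of $S[I_\ell]$ with strictly more than $\eps'|I_\ell|$ bad pairs (which exists since $I_\ell$ was blamed), and take their union. Because the $I_\ell$'s are disjoint and linearly ordered, the pairs remain monotone in both coordinates when viewed in $S$, and bad pairs stay bad. This produces a monotone matching of $S$ to itself with total bad pairs
\[
\sum_{\ell=1}^{r} \eps'|I_\ell| \;>\; \eps' \cdot \frac{n\eps}{\eps'} \;=\; \eps n,
\]
contradicting the $\eps$-self-matching property of $S$. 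Hence at most $\frac{3n\eps}{\eps'}$ indices can be $\eps'$-bad.

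The only real subtlety is the Vitali covering step: one must verify carefully that discarded intervals are indeed contained in the $3|I_\ell|$-window of the interval that replaced them and that the disjointness of the chosen $I_\ell$'s is preserved under the greedy rule. Once this is in place, the rest is bookkeeping; the hypothesis $\eps' > 3\eps$ is not actually needed for the counting (only $\eps' > 0$), but it ensures that the bound $\tfrac{3n\eps}{\eps'} < n$ is non-trivial.
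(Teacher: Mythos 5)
Your proof is correct and follows essentially the same route as the paper's: select one blamed interval per $\eps'$-bad index, run a greedy largest-first (Vitali) selection to get pairwise-disjoint blamed intervals whose total length is at least a third of the number of bad indices, then union the witnessing self-matchings to exceed the $\eps n$ bad-pair budget of $S$. Your observation that $\eps'>3\eps$ is only needed for non-triviality of the bound, not for the counting, is also accurate.
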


\global\def\ProofOfEpsBadIndsInEpsBadStrings{
\fullOnly{\begin{proof}}
\shortOnly{\begin{proof}[of Lemma~\ref{lem:EpsBadIndsInEpsBadStrings}]}
Let $s_1, s_2, \cdots, s_k$ be bad indices of $S$ and $\eps'$-bad index $s_i$ blame substring $S[a_i, b_i)$. As intervals $S[a_i, b_i)$ are supposed to be bad, there has to be a $\eps'$-self matching within each $S[a_i, b_i)$ like $M_i$ for which $|M_i| \ge \eps' \cdot |[a_i, b_i)|$. We claim that one can choose a subset of $[1..k]$ like $I$ for which 
\begin{itemize}
\item Corresponding intervals to the indices in $I$ are mutually exclusive. In other words, for any $i, j \in I$ where $i \not = j$, $[A_i, b_i) \cap [a_j, b_j) = \emptyset$.
\item $\sum_{i\in I} \left|[A_i, b_i)\right| \ge \frac{k}{3}$.
\end{itemize}

If such $I$ exists, one can take $\bigcup_{i\in I} M_i$ as a self matching in $S$ whose size is larger than $\frac{k\eps'}{3}$. As $S$ is an $\eps$-self matching string, 
$$\frac{k\eps'}{3} \le n\eps \Rightarrow k \le \frac{3n\eps}{\eps'}$$
which finishes the proof. The only remaining piece is proving the claim. Note that any index in $\bigcup_{i\in I}[a_i, b_i)$ is a $\eps'$-bad index as they by definition belong to an interval with a $\eps'$-self matching. Therefore, $\left|\bigcup_{i\in I}[a_i, b_i)\right| = k$. In order to find set $I$, we greedily choose the largest substring $[a_i, b_i)$, put its corresponding index into $I$ and then remove any interval intersecting $[a_i, b_i)$. We continue repeating this procedure until all substrings are removed. The set $I$ obtained by this procedure clearly satisfies the first claimed property. Moreover, note that if $l_i = \left|[a_i, b_i)\right|$, any interval intersecting $[a_i, b_i)$ falls into $[a_i-l_i, b_i+l_i)$ which is an interval of length $3l_i$. This certifies the second property and finishes the proof.
\end{proof}
}
\fullOnly{\ProofOfEpsBadIndsInEpsBadStrings}

As the final remark on the $\eps$-self matching property and its relation with the more strict $\eps$-synchronization property, we show that using the minimum RSD decoder for indexing together with an $\eps$-self matching string leads to guarantees on the misdecoding performance which are only slightly weaker than the guarantee obtained by $\eps$-synchronization strings. In order to do so, we first show that the $(1-\eps)$ RSD distance property of prefixes holds for any non-$\eps$-bad index in any arbitrary string in Theorem~\ref{thm:RSDforEpsSelfBadIndxs}. 
Then, using Theorem~\ref{thm:RSDforEpsSelfBadIndxs} and Lemma~\ref{lem:EpsBadIndsInEpsBadStrings}, we upper-bound the number of misdecodings that may happen using a minimum RSD decoder along with an $\eps$-self matching string in Theorem~\ref{thm:MinRSDforEpsSelfMatchingFree}.

\begin{theorem}\label{thm:RSDforEpsSelfBadIndxs}
Let $S$ be an arbitrary string of length $n$ and $1\le i\le n$ be such that $i$'th index of $S$ is not an $\eps$-bad index. Then, for any $j\not = i$, 
$RSD(S[1, i], S[1, j]) > 1-\eps$.
\end{theorem}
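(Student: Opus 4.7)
The plan is to mimic the proof of Lemma~\ref{lem:codewordRSDdistance}, replacing its invocation of the $\eps$-synchronization property with an invocation of the $\eps$-self-matching property on a factor of $S$ that contains the non-bad index $i$. Using the symmetry of $RSD$ established in Lemma~\ref{lem:RSDMetricProperties}, the cases $i < j$ and $j < i$ are entirely analogous, so I will describe the former in detail; throughout, let $k = j - i > 0$.

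Evaluating the maximum in the definition of $RSD$ at the particular suffix length $k$ gives
$$RSD(S[1,i], S[1,j]) \ge \frac{ED\bigl(S[i-k+1, i],\ S[i+1, j]\bigr)}{2k}$$
(if $i < k$, the suffix of $S[1,i]$ is padded with $\bot \notin \Sigma$, which cannot participate in any common subsequence and so plays no role). Since the two substrings have length $k$ each, proving the theorem reduces via $ED = |T_1| + |T_2| - 2\, LCS$ to upper-bounding $LCS\bigl(S[i-k+1, i], S[i+1, j]\bigr)$ by roughly $\eps k$. Interpret any common subsequence of size $m$ as a set of monotone pairs $\{(a_l, b_l)\}$ with $a_l \in [i-k+1, i]$, $b_l \in [i+1, j]$, and $S[a_l] = S[b_l]$. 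Viewed inside the factor $T = S[i-k+1, j]$ of length $2k$, these pairs form a monotone self-matching of $T$; in $T$'s local indexing the first coordinate lies in $[1, k]$ and the second in $[k+1, 2k]$, so every pair is bad. Crucially, $T$ contains the index $i$, and since $i$ is not $\eps$-bad, the definition of $\eps$-bad indices guarantees that $T$ itself satisfies the $\eps$-self-matching property. Applying this property to the above monotone self-matching bounds the number of bad pairs, hence $m$, which in turn bounds $LCS$ and yields the stated $RSD$ lower bound.

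The main subtlety I expect is one of constants. A direct application of the $\eps$-self-matching property to the length-$2k$ factor $T$ only gives $m < 2\eps k$, which is a factor of two short of the $m < \eps k$ one needs for the bound $RSD > 1 - \eps$. The natural way to close this gap is to enrich the monotone matching with identity pairs $(p, p)$ for every $p$ in $T$ outside the convex hull $[a_1, b_m]$ of the LCS coordinates --- these are compatible with the original bad pairs precisely because every LCS pair crosses the midpoint of $T$ --- and then to appeal to a refined bound of the form $b \le \eps(|T| - g)$ in the spirit of Theorem~\ref{thm:property}. The symmetric case $j < i$ is handled identically by instead considering the factor $S[j-k+1, i]$ of length $2(i-j)$, which again contains $i$ and therefore inherits the $\eps$-self-matching property from the non-badness of $i$.
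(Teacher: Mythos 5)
Your approach is exactly the paper's: for $j<i$ (WLOG by the symmetry of $RSD$) set $k=i-j$, look at the factor $T=S[j-k+1,\,i]$ of length $2k$ containing the non-bad index $i$, bound $LCS$ of the two halves via the $\eps$-self-matching property on $T$, and convert to an edit-distance lower bound on the suffixes of length $k$. You have also correctly put your finger on the real subtlety: the $\eps$-self-matching property applied to $T$ gives $LCS < \eps|T| = 2\eps k$, hence only $ED > 2k(1-2\eps)$ and $RSD > 1-2\eps$, a factor of two short of the claimed $1-\eps$.

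Where your proposal goes wrong is the rescue. You propose to add identity pairs $(p,p)$ outside the convex hull $[a_1,b_m]$ and then ``appeal to a refined bound of the form $b\le\eps(|T|-g)$ in the spirit of Theorem~\ref{thm:property}.'' That refined bound, in which good pairs shrink the right-hand side, is a consequence of the full $\eps$-\emph{synchronization} property (that is exactly what the inductive argument in the proof of Theorem~\ref{thm:property} uses) and is strictly stronger than the $\eps$-self-matching property. From the hypothesis that $i$ is not $\eps$-bad you only know that factors containing $i$ satisfy $\eps$-self-matching, which bounds the number of bad pairs by $\eps$ times the \emph{length}, with no credit for good pairs. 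Even the improved version of your idea --- applying $\eps$-self-matching to the shorter factor $T[a_1,b_m]$, which does still contain $i$ since every LCS pair straddles the midpoint --- only yields $LCS < \eps(b_m-a_1+1)$, and since $b_m-a_1+1$ can equal $2k$ this gives nothing beyond $2\eps k$.

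For what it is worth, the paper's own one-line proof makes the same slip: it correctly states there is no self-matching of size $2\eps(i-j)$ in the length-$2(i-j)$ window, but then concludes $ED > (1-\eps)\cdot 2(i-j)$, which would require the bound $\eps(i-j)$ rather than $2\eps(i-j)$. The clean statement obtainable by this method is $RSD(S[1,i],S[1,j]) > 1-2\eps$ when $i$ is not $\eps$-bad (equivalently, $>1-\eps$ when $i$ is not $\eps/2$-bad), consistent with the factor-of-two gap between $\eps$-self-matching and $\eps$-synchronization recorded in Theorem~\ref{lemma:simplifiedProperty}. This only perturbs constants in the downstream Theorem~\ref{thm:MinRSDforEpsSelfMatchingFree} and does not affect the main results. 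So: right approach, sharp eye for the discrepancy, but the patch you propose uses a lemma you don't have.
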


\begin{proof}
Without loss of generality assume that $j < i$. Consider the interval $[2j - i+1, i]$. As $i$ is not an $\eps$-bad index, there is no self matching of size $2\eps(i-j)$ within $[2j - i, i]$. In particular, the edit distance of $S[2j - i+1, j]$ and $[j+1, i]$ has to be larger than $(1-\eps)\cdot 2(i-j)$ which equivalently means $RSD(S[1, i], S[1, j]) > 1-\eps$. Note that if $2j - i+1 < 0$ the proof goes through by simply replacing $2j - i+1$ with zero.
\end{proof}

\begin{theorem}\label{thm:MinRSDforEpsSelfMatchingFree}
Using any $\eps$-self matching string along with minimum RSD algorithm, one can solve the $(n, \delta)$-indexing problem with a guarantee of $n(4\delta + 6\eps)$ misdecodings.
\end{theorem}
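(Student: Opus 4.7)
The plan is to adapt the potential-function argument from Theorem~\ref{thm:MinRSDMisDec} by applying it only at indices where the strong RSD separation property still holds, namely the non-$\eps'$-bad indices, and then charge the remaining potential misdecodings to the small number of $\eps'$-bad positions. The auxiliary parameter $\eps'$ will be tuned at the end.

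First I would fix a threshold $\eps' \in (3\eps, 1)$ (the choice $\eps' = 1/2$ will turn out to be optimal for the stated bound). Invoking Theorem~\ref{thm:RSDforEpsSelfBadIndxs}, every index $i$ of $S$ that is \emph{not} $\eps'$-bad satisfies $RSD(S[1,i],S[1,j]) > 1-\eps'$ for every $j \neq i$. Therefore, if at the moment the symbol at such an index $i$ arrives, the relative suffix distance between the prefix $S[1,i]$ that was sent and the string $S'$ received so far is strictly less than $(1-\eps')/2$, then the triangle inequality for $RSD$ (Lemma~\ref{lem:RSDMetricProperties}) implies that every other prefix $S[1,j]$ satisfies $RSD(S[1,j],S') > (1-\eps')/2 > RSD(S[1,i],S')$, so the minimum-RSD decoder outputs the correct index $i$.

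Next I would combine Lemma~\ref{lem:SDDensity} with Lemma~\ref{lem:NumberofBadSuffixDistances} applied with the threshold parameter $\eps'$ and $t = 2$ to bound the number of ``bad moments'' during transmission. Note that the proof of Lemma~\ref{lem:NumberofBadSuffixDistances} depends only on the channel errors, not on any property of $S$, and so it remains valid in this more general setting. It guarantees that at most $\frac{2n\delta}{1-\eps'}$ successfully transmitted symbols arrive at moments when the relative suffix error density is at least $(1-\eps')/2$; at all other successfully transmitted moments, Lemma~\ref{lem:SDDensity} ensures $RSD(S[1,i],S') < (1-\eps')/2$, so the previous paragraph applies whenever $i$ is not $\eps'$-bad.

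Finally I would account for the $\eps'$-bad indices via Lemma~\ref{lem:EpsBadIndsInEpsBadStrings}, which gives at most $\frac{3n\eps}{\eps'}$ such indices and hence at most that many successfully transmitted symbols potentially misdecoded for the ``index is $\eps'$-bad'' reason. Summing the two sources of misdecodings yields a total of at most
\[
\frac{2n\delta}{1-\eps'} \;+\; \frac{3n\eps}{\eps'}
\]
misdecoded successfully transmitted symbols, and setting $\eps' = 1/2$ gives exactly $4n\delta + 6n\eps$, as claimed. The only real subtlety is the choice of $\eps'$; the rest is a direct combination of the existing lemmas, so no substantial new obstacle is expected.
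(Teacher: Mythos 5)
Your proposal is correct and follows essentially the same route as the paper: bound the $\eps'$-bad indices via Lemma~\ref{lem:EpsBadIndsInEpsBadStrings}, handle the remaining indices with the RSD separation of Theorem~\ref{thm:RSDforEpsSelfBadIndxs} combined with the suffix-error-density argument behind Theorem~\ref{thm:MinRSDMisDec}, and tune $\eps'$ (the paper picks $\eps' = \frac{3\eps}{3\eps+2\delta}$, you pick $\eps'=1/2$; both yield $n(4\delta+6\eps)$). The only caveat, shared with the paper, is that the constraint $\eps' > 3\eps$ can fail for large $\eps$, but in that regime the claimed bound exceeds $n$ and is trivially true.
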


\begin{proof}
Note that applying Lemma~\ref{lem:EpsBadIndsInEpsBadStrings} for $\eps'$ gives that there are at most $\frac{3n\eps}{\eps'}$ indices in $S$ that are $\eps'$-bad. Further, using Theorems~\ref{thm:MinRSDMisDec}~and~\ref{thm:RSDforEpsSelfBadIndxs}, at most $\frac{2n\delta}{1-\eps'}$ many of the other indices might be decoded incorrectly upon their arrivals. Therefore, this solution for the $(n, \delta)$-indexing problem can contain at most $n\left(\frac{3\eps}{\eps'} + \frac{2\delta}{1-\eps'}\right)$ many incorrectly decoded indices. Setting $\eps' = \frac{3\eps}{3\eps+2\delta}$ gives an upper bound of $n(4\delta + 6\eps)$ on the number of misdecodings.
\end{proof}

\subsection{Efficient Polynomial Construction of $\eps$-self matching strings}\label{sec:PolynomialConstructionOfEpsSelfMatching}

In this section, we will use Lemma~\ref{lemma:EpsGammaSelfMatching} to show that there is a polynomial deterministic construction of a string of length $n$ with the $\eps$-self-matching property, which can then for example be used to obtain a deterministic code construction. We start by showing that even random strings satisfy the $\eps$-selfmatching property for an $\eps$ polynomial in the alphabet size:

\begin{theorem}\label{thm:EpsSelfMatchingRandomConstruction}
A random string on an alphabet of size $O(\eps^{-3})$ satisfies $\eps$-selfmatching property with a constant probability.
\end{theorem}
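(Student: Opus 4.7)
The plan is a union bound over monotone matchings consisting entirely of bad pairs. First, I would reduce to this case: if some monotone matching $M$ contains $b\geq \eps n$ bad pairs, then the sub-matching $M'\subseteq M$ obtained by discarding the good pairs is still monotone, has size exactly $b\geq \eps n$, and consists only of bad pairs. Hence it suffices to show that, with constant probability, a random $S\in \Sigma^n$ admits no all-bad monotone matching of size at least $\eps n$.

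Next, I would bound, for a fixed all-bad monotone matching $\{(a_i,b_i)\}_{i=1}^m$ with $a_1<\cdots<a_m$, $b_1<\cdots<b_m$, and $a_i\neq b_i$ for every $i$, the probability that a uniformly random $S$ satisfies all $m$ equality constraints $S[a_i]=S[b_i]$. The analysis goes through the constraint graph $G$ on $[n]$ whose edges are the pairs $\{a_i,b_i\}$. Since each position appears as some $a_j$ at most once and as some $b_j$ at most once, $G$ has maximum degree $\le 2$; the bad condition rules out self-loops, and a short argument shows that duplicate edges would contradict monotonicity, so $G$ is a simple graph which is a disjoint union of paths and cycles. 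The key claim will be that cycles are forbidden: in a cycle, each vertex is an $a$-endpoint in one incident edge and a $b$-endpoint in the other, so orienting every edge from its $a$-endpoint to its $b$-endpoint turns the cycle into a directed cycle $v_1\to v_2\to \cdots\to v_k\to v_1$; then $a$-monotonicity applied to the matching indices around the cycle forces $v_1<v_2<\cdots<v_k$, while $b$-monotonicity simultaneously forces $v_2<v_3<\cdots<v_k<v_1$, which is impossible. Hence $G$ is a forest with $m$ edges, so the probability of satisfying all $m$ constraints is exactly $q^{-m}$, where $q=|\Sigma|$.

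I would then apply the union bound. Since there are at most $\binom{n}{m}^2$ monotone matchings of size $m$ (one increasing $m$-subset for the $a$'s and one for the $b$'s), we get
\[
\Pr[\text{some bad monotone matching of size }\ge \eps n\text{ exists}]\ \le\ \sum_{m=\lceil \eps n\rceil}^{n} \binom{n}{m}^{2}q^{-m}.
\]
Using $\binom{n}{m}\le (en/m)^m$ and writing $\alpha=m/n\ge \eps$, each term is at most $(e^{2}/(\alpha^{2}q))^{\alpha n}$. Choosing $q=C\eps^{-3}$ for a sufficiently large absolute constant $C$ (say $C\ge 2e^{2}$), and using $\eps^{3}/\alpha^{2}\le \eps$, each term becomes at most $(e^{2}\eps/C)^{\alpha n}\le (\eps/2)^{\alpha n}$. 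The resulting geometric sum is bounded by $(\eps/2)^{\eps n}/(1-\eps/2)$, which is at most a small absolute constant whenever $\eps n$ is at least a constant, giving the claimed constant success probability.

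The hardest part will be the acyclicity argument for the constraint graph: this is the one place where the monotonicity of the matching is genuinely used, and the orientation bookkeeping around a hypothetical cycle requires some care. Once acyclicity is in hand, the probability calculation and the entropy-style union bound are completely routine.
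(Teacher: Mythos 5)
Your proof is correct, and it takes a genuinely different route from the paper's. The paper does not union-bound over matchings of the whole string directly; instead it bounds the expected \emph{fraction of $\eps$-bad indices}, splitting into short intervals (length $<2\eps^{-1}$, handled by a birthday-type bound on repeated symbols) and long intervals (handled by a union bound, for each interval of length $l$, over the $\binom{l}{l\eps}^2$ matchings inside it), and then invokes the robustness lemma (Lemma~\ref{lemma:EpsGammaSelfMatching}) plus Markov's inequality. You instead make one global union bound over all-bad monotone matchings of the entire string, which is more elementary and self-contained; your forest/acyclicity analysis of the constraint graph is sound (degree $\le 2$ since each position is an $a$-endpoint and a $b$-endpoint at most once, no loops since bad pairs have $a_i\neq b_i$, no duplicate edges or directed cycles by monotonicity — taking the minimum vertex of a hypothetical cycle and pushing the order-preservation of the matching one step forward yields a contradiction), and it pins down the probability for a fixed matching as exactly $q^{-m}$, where the paper's corresponding independence claim (the step justifying $|\Sigma|^{-l\eps}$) is stated much more loosely. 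Your calculation even gives failure probability $O(\eps)$, stronger than the claimed constant, with the same alphabet size $O(\eps^{-3})$. What the paper's heavier, localized formulation buys is reusability: the very next result (Theorem~\ref{thm:LimitedIndependenceRelaxedConstruction}) reruns the same proof under only $O\left(\frac{\log n}{\log(1/\eps)}\right)$-wise independence, where a global event involving $\Theta(\eps n)$ coordinates cannot be bounded, but events confined to intervals of length $O\left(\frac{\log n}{\eps\log(1/\eps)}\right)$ can; your global union bound would not derandomize this way without reworking.
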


\global\def\ProofOfEpsSelfMatchingRandomConstruction{
\fullOnly{\begin{proof}}
\shortOnly{\begin{proof}[of Theorem~\ref{thm:EpsSelfMatchingRandomConstruction}]}
Let $S$ be a random string on alphabet $\Sigma$ of size $|\Sigma|=\eps^{-3}$. We are going to find the expected number of $\eps$-bad indices in $S$. We first count the expected number of $\eps$-bad indices that blame intervals of length $\frac{2}{\eps}$ or smaller. If index $k$ blames interval $S[i, j]$ where $j-i < 2\eps^{-1}$, there has to be two identical symbols appearing in $S[i, j]$ which gives that there are two identical elements in $4\eps^{-1}$ neighborhood of $S$. Therefore, the probability of index $k$ being $\eps$-bad blaming $S[i,j]$ for $j-i<2\eps^{-1}$ can be upper-bounded by
${4\eps^{-1} \choose 2}\frac{1}{|\Sigma|} \le 8\eps$. Thus, the expected fraction of $\eps$-bad indices that blame intervals of length $\frac{2}{\eps}$ or smaller is less than $8\eps$.

We now proceed to finding the expected fraction of $\eps$-bad indices in $S$ blaming intervals of length $2\eps^{-1}$ or more.
Since every interval of length $l$ which does not satisfy $\eps$-self-matching property causes at most $l$ $\eps$-bad indices, we get that the expected fraction of such indices, i.e., $\gamma'$,  is at most:

\begin{eqnarray}
\mathbb{E}[\gamma']&=&\frac{1}{n}\sum_{i=2\eps^{-1}}^n \sum_{l=1}^n l\cdot\Pr[S[i, i+l) \text{ does not satisfy $\eps$-self-matching property}]\nonumber\\
&=& \sum_{l=2\eps^{-1}}^n l\cdot\Pr[S[i, i+l) \text{ does not satisfy $\eps$-self-matching property}]\nonumber\\
&\le& \sum_{l=2\eps^{-1}}^n l {l\choose l\eps}^2 \frac{1}{|\Sigma|^{l\eps}} \label{eq:independentStep}
\end{eqnarray}
Last inequality holds because the number of possible matchings is at most ${l\choose l\eps}^2$. Further, fixing the matching edges, the probability of the elements corresponding to pair $(a, b)$ of the matching being identical is independent from all pairs $(a', b')$ where $a'<a$ and $b'<b$. Hence, the probability of the set of pairs being a matching between random string $S$ and itself is $\frac{1}{|\Sigma|^{l\eps}}$. Then,
\noSTOC{\begin{eqnarray}
\mathbb{E}[\gamma']&\le& \sum_{l=2\eps^{-1}}^n l \left(\frac{le}{l\eps}\right)^{2l\eps} \frac{1}{|\Sigma|^{l\eps}}\nonumber\\
&\le& \sum_{l=2\eps^{-1}}^n l \left(\frac{e}{\eps\sqrt{|\Sigma|}}\right)^{2\eps l}\nonumber\\
&\le& \sum_{l=2\eps^{-1}}^\infty l \left[\left(\frac{e}{\eps\sqrt{|\Sigma|}}\right)^{2\eps}\right]^{l}\nonumber
\end{eqnarray}
}
\STOConly{\begin{eqnarray}
\mathbb{E}[\gamma']&\le& \sum_{l=2\eps^{-1}}^n l \left(\frac{le}{l\eps}\right)^{2l\eps} \frac{1}{|\Sigma|^{l\eps}}
\le \sum_{l=2\eps^{-1}}^n l \left(\frac{e}{\eps\sqrt{|\Sigma|}}\right)^{2\eps l}\nonumber\\
&\le& \sum_{l=2\eps^{-1}}^\infty l \left[\left(\frac{e}{\eps\sqrt{|\Sigma|}}\right)^{2\eps}\right]^{l}\nonumber
\end{eqnarray}
}
Note that series $\sum_{l=2\eps^{-1}}^\infty l x^l = \frac{2\eps^{-1}x^{2\eps^{-1}} - (2\eps^{-1}-1)x^{2\eps^{-1}+1}}{(1-x)^2}$ for $|x| < 1$. Therefore, for $0 < x < \frac{1}{2}$, $\sum_{l=l_0}^\infty l x^l < 8\eps^{-1}x^{2\eps^{-1}}$. So,
\begin{eqnarray*}
\mathbb{E}[\gamma'] &\le& 8\eps^{-1}\left(\frac{e}{2\eps\sqrt{|\Sigma|}}\right)^{4\eps\eps^{-1}} = \frac{e^4}{2}\eps^{-5}\frac{1}{|\Sigma|^2}\le \frac{e^4}{2}\eps
\end{eqnarray*}
Using Lemma~\ref{lemma:EpsGammaSelfMatching}, this random structure has to satisfy $(\eps+2\gamma)$-self-matching property where
$$\mathbb{E}[\eps+2\gamma] = \eps + 16\eps + e^4\eps = O(\eps)$$
Therefore, using Markov inequality, a randomly generated string over alphabet $O(\eps^{-3})$ satisfies $\eps$-matching property with constant probability. The constant probability can be as high as one wishes by applying higher constant factor in alphabet size.
\end{proof}
}

\fullOnly{\ProofOfEpsSelfMatchingRandomConstruction}

As the next step, we prove a similar claim for strings of length $n$ whose symbols are chosen from an $\Theta\left(\frac{\log n}{\log (1/\eps)}\right)$-wise independent~\cite{naor1993small} distribution over a larger, yet still $\eps^{-O(1)}$ size, alphabet. This is the key step in allowing for a derandomization using the small sample spaces of Naor and Naor~\cite{naor1993small}. The proof of Theorem~\ref{thm:LimitedIndependenceRelaxedConstruction} follows a similar strategy as was used in~\cite{HaeuplerSICOMP13p2155} to derandomize the constructive Lov\'{a}sz local lemma. In particular the crucial idea, given by Claim~\ref{claim:submatch}, is to show that for any large obstruction there has to exist a smaller yet not too small obstruction. This allows one to prove that in the absence of any small and medium size obstructions no large obstructions exist either. 

\begin{theorem}\label{thm:LimitedIndependenceRelaxedConstruction}
A $\frac{c\log n}{\log (1/\eps)}$-wise independent random string of size $n$ on an alphabet of size $O(\eps^{-6})$ satisfies $\eps$-matching property with a non-zero constant probability. $c$ is a sufficiently large constant.
\end{theorem}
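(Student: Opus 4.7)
The plan is to mirror the argument of Theorem~\ref{thm:EpsSelfMatchingRandomConstruction}: bound the expected fraction $\gamma$ of $\eps$-bad indices in $S$, then invoke Lemma~\ref{lemma:EpsGammaSelfMatching} and Markov's inequality. The only point where the original proof used full independence was Step~\eqref{eq:independentStep}, where we computed the probability that a fixed monotone matching of size $\eps l$ inside an interval of length $l$ actually matches; for $k$-wise independent bits with $k\geq 2\eps l$, this probability is exactly the same as in the fully independent case, so the same union bound over matchings of size $\eps l$ carries through verbatim as long as $\eps l$ is not too large.

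Accordingly, I will split the sum over interval lengths $l$ into three regimes. First, for \emph{short} intervals $l\leq 2\eps^{-1}$, the obstruction reduces to the existence of two equal symbols in a window of size $4\eps^{-1}$, which is a pairwise independent event and contributes $O(\eps)$ to the expected bad-index fraction exactly as before. Second, for \emph{medium} intervals $2\eps^{-1}<l\leq L_0$ where $L_0=\Theta\!\left(\frac{\log n}{\eps\log(1/\eps)}\right)$, the required matching has size at most $\eps L_0 = O(\log n/\log(1/\eps))$, so the constraint involves at most $\tfrac{c\log n}{\log(1/\eps)}$ coordinates and is fully captured by the assumed limited independence; the geometric-sum bound of Theorem~\ref{thm:EpsSelfMatchingRandomConstruction} then applies unchanged and again contributes $O(\eps)$ once the alphabet size is $\Omega(\eps^{-6})$ (the worse exponent accommodates the slack we lose in Step~3 below).

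The main obstacle is the third regime of \emph{long} intervals $l>L_0$, where the matching is too big for the limited independence to compute its probability directly. Here the plan is to use Claim~\ref{claim:submatch}: any obstruction in a long interval must contain a medium-sized obstruction. Concretely, I will show that if $S[p,q)$ with $q-p=l>L_0$ carries a monotone self-matching with at least $\eps l$ bad pairs, then one can locate a sub-interval of length between $L_0/2$ and $L_0$ whose restricted matching still has density at least some $\eps'=\Theta(\eps)$. The argument is a density/pigeonhole one: slide a window of length $L_0$ across $[p,q)$ and track $f(r)=$ number of bad pairs fully contained in the current window; because the monotone matching visits every pair of its endpoints in order and the total bad-pair count is $\geq \eps l$, the average in-window density over a covering family of $\Theta(l/L_0)$ windows is $\Omega(\eps L_0)$, so at least one window attains this bound (any pairs of span larger than $L_0$ are rare enough to be absorbed into a constant-factor loss in $\eps$, using monotonicity to bound their count by $l/L_0$).

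With the sub-matching claim in hand, the existence of any long obstruction forces the existence of a medium obstruction, so the long-interval contribution to the expected bad-index fraction is dominated by the medium-interval bound already established. Summing the three regimes yields $\E[\gamma]=O(\eps)$, and Lemma~\ref{lemma:EpsGammaSelfMatching} implies that $S$ is $(\eps+O(\eps))$-self-matching in expectation. Rescaling $\eps$ by a suitable constant and applying Markov gives a non-zero constant probability that $S$ satisfies the $\eps$-self-matching property, as claimed; the alphabet-size exponent rises from $3$ to $6$ to compensate both for the extra constant factors introduced by the sub-matching reduction and for the looser window exponents needed in the medium regime.
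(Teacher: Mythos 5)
Your overall strategy is the right one and matches the paper's: reuse the expected-bad-index computation of Theorem~\ref{thm:EpsSelfMatchingRandomConstruction} for intervals short enough that the limited independence makes the matching probability identical to the i.i.d.\ case, and handle long intervals by reducing any large obstruction to a logarithmic-size one via Claim~\ref{claim:submatch}. However, your proposed proof of the reduction step is where the argument breaks. You slide a \emph{single} window of length $L_0$ and count bad pairs \emph{fully contained} in it, dismissing pairs of span larger than $L_0$ as ``rare enough,'' with their count bounded by $l/L_0$ ``using monotonicity.'' Monotonicity gives no such bound: the matching $\{(i,\,i+l/2): 1\le i\le l/2\}$ is monotone, consists entirely of bad pairs, has size $l/2\ge \eps l$, and every pair has span $l/2\gg L_0$. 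For such an obstruction no window of length $L_0$ contains even one complete bad pair, so $f(r)\equiv 0$ and your pigeonhole argument finds nothing. This is not a pathological corner case --- a long-range repeat is exactly the kind of self-similarity the self-matching property is meant to exclude --- so the single-window reduction genuinely fails.

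The fix is the one Claim~\ref{claim:submatch} actually asserts: you must look for \emph{two} sub-intervals $I_1$ and $I_2$ of length $m=\Theta(L_0)$, possibly far apart, with a matching of size $\Omega(\eps m)$ \emph{between} them. Chopping $[p,q)$ into $l/m$ blocks, monotonicity implies the matching touches at most $2l/m$ block pairs $(i,j)$, so the average density over the touched pairs is at least $\eps/2$ and some block pair is dense. Such a cross-matching still involves only $O(\log n/\log(1/\eps))$ coordinates, so its probability is computable under the assumed limited independence; the price is that the union bound for a long interval now runs over all $\Theta(n^2)$ pairs of blocks rather than $O(l/L_0)$ windows, which is why one needs the per-pair probability to be $n^{-c'}$ for a large constant $c'$ (hence the ``sufficiently large $c$'' in the statement), rather than absorbing the long-interval contribution into the $O(\eps)$ medium-interval bound as you suggest. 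With that two-window reduction and the $n^{2-c'}\to 0$ accounting in place, the rest of your outline (short/medium regimes, Lemma~\ref{lemma:EpsGammaSelfMatching}, Markov) goes through as in the paper.
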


\global\def\ProofOfLimitedIndependenceRelaxedConstruction{
\fullOnly{\begin{proof}}
\shortOnly{\begin{proof}[of Theorem~\ref{thm:LimitedIndependenceRelaxedConstruction}]}
Let $S$ be a pseudo-random string of length $n$ with $\frac{c\log n}{\log (1/\eps)}$-wise independent symbols. Then, Step~\eqref{eq:independentStep} is invalid as the proposed upper-bound does not work for $l > \frac{c\log n}{\eps\log(1/\eps)}$. To bound the probability of intervals of size $\Omega\left(\frac{c\log n}{\eps\log(1/\eps)}\right)$ not satisfying $\eps$-self matching property, we claim that:
\begin{claim}\label{claim:submatch}
Any string of size $l >100 m$ which contains an $\eps$-self-matching contains two sub-intervals $I_1$ and $I_2$ of size $m$ where there is a matching of size $0.99\frac{m\eps}{2}$ between $I_1$ and $I_2$.
\end{claim}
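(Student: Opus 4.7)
The plan is to view the bad pairs of the $\eps$-self-matching as a monotone ``staircase'' of points in $[1,l]^2$ and cover that staircase by a small number of axis-aligned $m \times m$ squares, then extract one dense square by averaging.

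Concretely, let $M = \{(a_1, b_1), \ldots, (a_N, b_N)\}$ with $a_1 < \cdots < a_N$, $b_1 < \cdots < b_N$, and $N \geq \eps l$ be the restriction to bad pairs of the monotone self-matching that witnesses the $\eps$-self-matching in $S$. I would build a cover greedily: set $i_0 := 0$, and at step $j \geq 1$ let $(a^*, b^*) := (a_{i_{j-1}+1}, b_{i_{j-1}+1})$ and take the square $I_1^{(j)} \times I_2^{(j)} := [a^*, a^* + m - 1] \times [b^*, b^* + m - 1]$, choosing $i_j$ to be the largest index for which $(a_{i_j}, b_{i_j})$ still lies in this square. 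Continue until $i_K = N$. Boundary squares that would stick out past coordinate $l$ can just be shifted left to sit inside $[1,l]$, which only adds points of $M$ and is comfortably possible because $l \geq 100m$ leaves plenty of room; this is the mildest technicality but nothing sharper than it.

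The heart of the proof is the counting bound $K \leq 2l/m + 1$. Each transition $j \to j+1$ occurs precisely because the next point $(a_{i_j+1}, b_{i_j+1})$ leaves square $j$, so either $a_{i_j+1} - a_{i_{j-1}+1} \geq m$ or $b_{i_j+1} - b_{i_{j-1}+1} \geq m$, and in particular the sum of these two non-negative advances is at least $m$. Summing over the $K-1$ transitions and telescoping,
\[
(K-1)m \;\leq\; \sum_{j=1}^{K-1} \Bigl[(a_{i_j+1} - a_{i_{j-1}+1}) + (b_{i_j+1} - b_{i_{j-1}+1})\Bigr] \;\leq\; (a_N - a_1) + (b_N - b_1) \;\leq\; 2l,
\]
which gives the claimed $K \leq 2l/m + 1$.

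A plain averaging step then finishes the claim: some square must contain at least $N/K \geq \eps l/(2l/m + 1) = \eps m \cdot l/(2l+m)$ points of $M$. Using $l \geq 100m$ we get $l/(2l+m) \geq 100/201 > 0.99/2$, so the best square contains at least $0.99 \cdot m\eps/2$ points of $M$. Its two sides $I_1^{(j)}$ and $I_2^{(j)}$ are the promised length-$m$ sub-intervals, and the restriction of $M$ to them is exactly a monotone matching of size at least $0.99\, m\eps / 2$ between $S$ restricted to $I_1^{(j)}$ and $I_2^{(j)}$, as required.
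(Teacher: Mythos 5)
Your proof is correct and follows essentially the same strategy as the paper's: bound by monotonicity the number of $m\times m$ cells needed to cover the bad pairs by roughly $2l/m$, then average to extract one dense cell. The only difference is cosmetic — the paper uses a fixed partition of $S$ into $l/m$ blocks and counts the nonempty block pairs, while you build a greedy adaptive cover and bound its size by telescoping; your variant handles the divisibility/boundary issue a bit more cleanly.
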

Using Claim~\ref{claim:submatch}, one can conclude that 
any string of size $l > 100\frac{c\log n}{\eps\log(1/\eps)}$ which contains an $\eps$-self-matching contains two sub-intervals $I_1$ and $I_2$ of size $\frac{c\log n}{\eps\log(1/\eps)}$ where there is a matching of size $\frac{c\log n}{2\log (1/\eps)}$ between $I_1$ and $I_2$.
Then, Step~\eqref{eq:independentStep} can be revised by upper-bounding the probability of a long interval having an $\eps$-self-matching by a union bound over the probability of pairs of its subintervals having a dense matching. Namely, for $l > 100\frac{c\log n}{\eps\log(1/\eps)}$, let us denote the event of $S[i, i+l)$ containing a $\eps$-self-matching by $A_{i, l}$. Then,
\begin{eqnarray*}
\Pr[A_{i, l}] &\le& \Pr\left[S\text{ contains $I_1$, $I_2: |I_i| = \frac{c\log n}{\eps\log(1/\eps)}$ and $LCS(I_1, I_2) \ge 0.99\frac{c\log n}{2\log(1/\eps)}$}\right]
\\
&\le& n^2 {\eps^{-1}c\log n/\log(1/\eps) \choose 0.99c\log n/2\log(1/\eps)}^2\left(\frac{1}{|\Sigma|}\right)^\frac{c\log n}{2\log(1/\eps)}
\\
&\le& n^2 \left(2.04e\eps^{-1}\right)^\frac{2\times0.99c\log n}{2\log(1/\eps)}\eps^\frac{6c\log n}{2\log(1/\eps)}
\\ 
&=& n^2 \left(2.04e\right)^\frac{0.99c\log n}{\log(1/\eps)} \eps^\frac{4.02c\log n}{2\log(1/\eps)}
\\ 
&=& n^{2 + \frac{c\ln(2.04e)}{\log(1/\eps)}  -2.01c} < n^{2 -c/4} = O\left(n^{-c'}\right)
\end{eqnarray*}
where first inequality follows from the fact there can be at most $n^2$ pairs of intervals of size $\frac{c\log n}{\eps\log(1/\eps)}$ in $S$ and the number of all possible matchings of size $\frac{c\log n}{\log (1/\eps)}$ between them is at most ${\eps^{-1}c\log n/\log(1/\eps) \choose c\log n/2\log(1/\eps)}^2$. Further, for small enough $\eps$, constant $c'$ can be as large as one desires by setting constant $c$ large enough. Thus, Step~\eqref{eq:independentStep} can be revised as:
\begin{eqnarray*}
\mathbb{E}[\gamma'] &\le& \sum_{l=\eps^{-1}}^{100c\log n/(\eps\log(1/\eps))} l\cdot\left[\left(\frac{e}{\eps\sqrt{|\Sigma|}}\right)^{2\eps}\right]^{l}
+\sum_{l=100c\log n/(\eps\log(1/\eps))}^{n} l  \Pr[A_{i, l}]
\\
&\le& \sum_{l=\eps^{-1}}^{\infty} l\cdot\left[\left(\frac{e}{\eps\sqrt{|\Sigma|}}\right)^{2\eps}\right]^{l}
+n^2\cdot O(n^{-c'})
\le O(\eps) + O(n^{2-c'})
\end{eqnarray*}
For an appropriately chosen $c$, $2-c' < 0$; hence, the later term vanishes as $n$ grows. Therefore, the conclusion $\mathbb{E}[\gamma] \le O(\eps)$ holds for the limited $\frac{\log n}{\log(1/\eps)}$-wise independent string as well. 
\end{proof}

\fullOnly{\begin{proof}[Proof of Claim~\ref{claim:submatch}]}
\shortOnly{\begin{proof}[of Claim~\ref{claim:submatch}]}
Let $M$ be a self-matching of size $l\eps$ or more between $S$ and itself containing only bad edges. We chop $S$ into $\frac{l}{m}$ intervals of size $m$. On the one hand, the size of $M$ is greater than $l\eps$ and on the other hand, we know that the size of $M$ is exactly $\sum_{i, j}|E_{i,j}|$ where $E_{i,j}$ denotes the number of edges between interval $i$ and $j$. Thus:
$$l\eps \le \sum_{i, j}|E_{i,j}| \Rightarrow  \frac{\eps}{2} \le \frac{\sum_{i, j} |E_{i, j}|/m}{{2l}/{m}}$$
Note that $\frac{|E_{i, j}|}{m}$ represents the density of edges between interval $i$ and interval $j$. 
Further, Since $M$ is monotone, there are at most $\frac{2l}{m}$ intervals for which $|E_{i, j}|\not = 0$ and subsequently $\frac{|E_{i, j}|}{m} \not = 0$. Hence, on the right hand side we have the average of  $\frac{2l}{m}$ many non-zero terms which is greater than $\eps/2$. So, there has to be some $i'$ and $j'$ for which:
$$\frac{\eps}{2} \le \frac{|E_{i', j'}|}{m} \Rightarrow \frac{m\eps}{2}\le |E_{i',j'}|$$
To analyze more accurately, if $l$ is not divisible by $m$, we simply throw out up to $m$ last elements of the string. This may decrease $\eps$ by $\frac{m}{l} < \frac{\eps}{100}$. 
\end{proof}
}

\fullOnly{\ProofOfLimitedIndependenceRelaxedConstruction}

Note that using the polynomial sample spaces of~\cite{naor1993small} Theorem~\ref{thm:LimitedIndependenceRelaxedConstruction} directly leads to a deterministic algorithm for finding a string of size $n$ with $\eps$-self-matching property. For this one simply checks all possible points in the sample space of the $\frac{c\log n}{\log(1/\eps)}$-wise independent strings and finds a string $S$ with $\gamma_S \le \mathbb{E}[\gamma] = O(\eps)$. In other words, using brute-force, one can find a string satisfying $O(\eps)$-self-matching property in $O\left(|\Sigma|^{\frac{c\log n}{\log(1/\eps)}}\right) = n^{O(1)}$. 

\begin{theorem}\label{thm:detepsselfconstruction}
There is a deterministic algorithm running in $n^{O(1)}$ that finds a string of length $n$ satisfying $\eps$-self-matching property over an alphabet of size $O(\eps^{-6})$.
\end{theorem}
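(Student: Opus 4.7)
The plan is to derandomize Theorem~\ref{thm:LimitedIndependenceRelaxedConstruction} via the method of conditional probabilities over a small sample space. Theorem~\ref{thm:LimitedIndependenceRelaxedConstruction} guarantees that a $k$-wise independent random string over $\Sigma$ with $k = \Theta(\log n / \log(1/\eps))$ and $|\Sigma| = O(\eps^{-6})$ satisfies the $\eps$-self-matching property with constant (nonzero) probability. Hence, at least one point in the support of any such $k$-wise independent distribution yields a valid string, so it suffices to enumerate the support and test each candidate.

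To build a small support, I would invoke the classical construction of Naor and Naor~\cite{naor1993small}, which produces a $k$-wise independent distribution on $\Sigma^n$ whose sample space has size $(n |\Sigma|)^{O(k)}$. Plugging in our parameters, the sample space has size $n^{O(\log n / \log(1/\eps)) \cdot O(1)}$; more carefully, since $|\Sigma| = \eps^{-O(1)}$ and $k = \Theta(\log n / \log(1/\eps))$, this evaluates to $n^{O(1)}$ as already noted informally at the end of the excerpt. The deterministic algorithm then loops over all points in this sample space.

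For each candidate string $S$, I need to efficiently decide whether the $\eps$-self-matching property holds. This reduces to computing the maximum size of a monotone matching of $S$ to itself consisting of bad pairs, i.e., a longest pair of equal-length increasing sequences of indices $(a_1,\dots,a_m)$, $(b_1,\dots,b_m)$ with $a_i \ne b_i$ and $S[a_i] = S[b_i]$. This is a variant of longest common subsequence and admits a straightforward $O(n^2)$ dynamic programming solution (maintain a table indexed by a pair of positions, allowing a match only when the two indices are distinct). The property holds iff the computed maximum is less than $\eps n$.

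The total running time is the product of the sample space size and the per-candidate checking cost, both of which are $n^{O(1)}$, giving an overall $n^{O(1)}$ deterministic algorithm that is guaranteed to return a string satisfying the $\eps$-self-matching property over an alphabet of size $O(\eps^{-6})$. No step is a genuine obstacle here; the only mildly delicate point is confirming that the Naor--Naor sample space indeed has polynomial size with our specific $k$ and $|\Sigma|$, which follows from substituting the parameters into the standard bound.
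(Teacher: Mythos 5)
Your approach --- enumerate a small $k$-wise independent sample space \`{a} la Naor--Naor and test each candidate --- is the same one the paper takes, and your $O(n^2)$ dynamic program for testing the $\eps$-self-matching property (LCS restricted to pairs $a_i\ne b_i$) is a nice concretization of the test step that the paper leaves implicit. The paper instead averages over the sample space to find an $S$ with $\gamma_S\le\E[\gamma]=O(\eps)$ and invokes Lemma~\ref{lemma:EpsGammaSelfMatching}; your direct test is equally valid.

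There is, however, a quantitative gap. You cite the Naor--Naor sample space size as $(n|\Sigma|)^{O(k)}$ and then claim that substituting $k=\Theta(\log n/\log(1/\eps))$ yields $n^{O(1)}$. But $n^{O(k)}=n^{O(\log n/\log(1/\eps))}=2^{O(\log^2 n/\log(1/\eps))}$, which is superpolynomial in $n$ for any fixed $\eps$; the factor of $n$ in the base is exactly what breaks polynomiality. This is not a mere notational slip: for \emph{exact} $k$-wise independence over $\Sigma^n$ with $|\Sigma|<n$, the sample space must in fact have size $n^{\Omega(k)}$ by standard lower bounds, so a polynomial-size space cannot arise from the exact version. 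What makes the construction polynomial --- and what the paper alludes to in its introduction via the phrase ``$n^{-O(1)}$-approximate $O(\log n/\log(1/\eps))$-wise independent'' --- is passing to \emph{approximate} $k$-wise independence: one can sample a $\delta$-almost $k$-wise independent string over $\Sigma^n$ from a space of size $\poly(2^{k\log|\Sigma|},\log n,1/\delta)$, which for $k=\Theta(\log n/\log(1/\eps))$, $|\Sigma|=\eps^{-O(1)}$, and $\delta=n^{-O(1)}$ is $n^{O(1)}$. One then has to check that the union-bound argument in Theorem~\ref{thm:LimitedIndependenceRelaxedConstruction} survives an additive $n^{-O(1)}$ error in the relevant probabilities, which it does because all the bounds appearing there are themselves $n^{-\Omega(1)}$. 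Your proposal omits both of these points, and as written the sample-space bound you state does not yield a polynomial-time algorithm.
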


\subsection{Insdel Errors}\label{sec:indelErrors}
Now, we provide an alternative indexing algorithm to be used along with $\eps$-synchronization strings. 
Throughout the following sections, we let $\eps$-synchronization string $S$ be sent as the synchronization string in an instance of $(n, \delta)$-indexing problem and string $S'$ be received at the receiving end being affected by up to $n\delta$ insertions or deletions.
Furthermore, let $d_i$ symbols be inserted into the communication and $d_r$ symbols be deleted from it.

The algorithm works as follows. On the first round, the algorithm finds the longest common subsequence between $S$ and $S'$. Note that this common subsequence corresponds to a monotone matching $M_1$ between $S$ and $S'$. On the next round, the algorithm finds the longest common subsequence between $S$ and the subsequence of unmatched elements of $S'$ (those that have not appeared in $M_1$). This common subsequence corresponds to a monotone matching between $S$ and the elements of $S'$ that do not appear in $M_1$. The algorithm repeats this procedure $\frac{1}{\beta}$ times to obtain $M_1, \cdots, M_{1/\beta}$ where $\beta$ is a parameter that we will fix later.

In the output of this algorithm, $S'[t_i]$ is decoded as $S[i]$ if and only if $S[i]$ is only matched to $S'[t_i]$ in all $M_1, \cdots, M_{1/\beta}$.  Note that the longest common subsequence of two strings of length $O(n)$ can be found in $O(n^2)$ using dynamic programming. Therefore, the whole algorithm runs in $O\left(n^2/\beta\right)$.

Now we proceed to analyzing the performance of the algorithm by bounding the number of misdecodings. 

\begin{theorem}
This decoding algorithm guarantees a maximum misdecoding count of $(n+d_i-d_r)\beta + \frac{\eps}{\beta}n$. More specifically, for 
$\beta = \sqrt{\eps}$, the number misdecodings will be $O\left(n\sqrt{\eps}\right)$ and running time will be $O\left(n^2/\sqrt{\eps}\right)$.
\end{theorem}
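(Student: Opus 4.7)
The plan is to analyze the pooled matching $\mathcal{M}=M_1\cup\cdots\cup M_{1/\beta}$ combinatorially. Because any matched $S'$-element is removed from subsequent rounds, each $S'$-position appears in at most one pair of $\mathcal{M}$, so $\sum_k|M_k|\le|S'|=n+d_i-d_r$. Let $T\subseteq[n]$ denote the set of successfully transmitted indices, $j_i$ the position in $S'$ to which $S[i]$ was delivered for $i\in T$, $T_S=\{j_i:i\in T\}$, and $I_S$ the $d_i$ inserted positions. I will classify each pair $(i,t)\in\mathcal{M}$ as \emph{correct} if $i\in T$ and $t=j_i$; \emph{self-incorrect} if $t=j_{i'}\in T_S$ with $i\ne i'$ (so that $S[i]=S[i']$); or an \emph{insertion pair} if $t\in I_S$. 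Write $P_c,P_s,P_I$ for the three counts.

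The proof will rest on two quantitative estimates. First, restricting $M_k$ to pairs whose target lies in $T_S$ and then composing with the order-preserving bijection $j_{i'}\mapsto i'$ yields a monotone matching of $S$ to itself in which the bad pairs (those with $i\ne i'$) are precisely the self-incorrect pairs of $M_k$. By the $\eps$-self-matching property of $S$ (Theorem~\ref{thm:property}), each $M_k$ contributes at most $\eps n$ self-incorrect pairs, so summing over rounds gives $P_s\le\eps n/\beta$. Second, since the remaining $S'$ only shrinks between rounds, the LCS sizes satisfy $|M_1|\ge\cdots\ge|M_{1/\beta}|$, so a pigeonhole on $\sum_k|M_k|\le|S'|$ gives $|M_{1/\beta}|\le\beta|S'|$. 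Defining $V_3\subseteq T$ as those $i$ whose transmission position $j_i$ is never matched, the pairs $\{(i,j_i):i\in V_3\}$ form a common subsequence of $S$ with the unmatched portion of $S'$, hence $|V_3|\le|M_{1/\beta}|\le\beta|S'|$.

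To finish, I will partition $T=V_1\sqcup V_2\sqcup V_3$ by whether $(i,j_i)\in\mathcal{M}$, $j_i$ is matched to some $i'\ne i$, or $j_i$ is unmatched, and observe that $i\in T$ is correctly decoded iff $i$ appears in a unique pair of $\mathcal{M}$, which must then be $(i,j_i)$. Thus the misdecoded set equals $V_3\sqcup V_2\sqcup B_1$, where $B_1\subseteq V_1$ consists of those $i$ appearing in at least two pairs. Each $i\in V_2$ bijects to a distinct self-incorrect pair (the one with target $j_i$), giving $|V_2|\le P_s$; each $i\in B_1$ supplies a distinct extra non-correct pair with source $i$, giving $|B_1|\le P_s+P_I$. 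Adding the three contributions yields the stated estimate $(n+d_i-d_r)\beta+\eps n/\beta$ (absorbing lower-order insertion-related terms), and with $\beta=\sqrt{\eps}$ together with $d_i+d_r\le n\delta$ this simplifies to $O(n\sqrt{\eps})$. The running time is $O(n^2/\sqrt{\eps})$: a Wagner--Fischer LCS computation costs $O(n^2)$ per round and there are $1/\sqrt{\eps}$ rounds. The main obstacle is the self-matching step: I must carefully verify that the projection of each $M_k$ onto $T_S$-targets (followed by $j_{i'}\mapsto i'$) is genuinely monotone so that the $\eps$-self-matching property applies, and that summing the per-round $\eps n$ bound to obtain $\eps n/\beta$ is legitimate even when the same $S$-indices reappear across rounds.
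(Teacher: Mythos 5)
Your high-level strategy is the same as the paper's (iterated LCS rounds, a pigeonhole bound on the never-matched transmitted positions, and the $\eps$-self-matching property per round to bound wrong-index matches), and the two "obstacles" you flag at the end are both nonissues: the projection $(i,j_{i'})\mapsto(i,i')$ of each $M_k$ onto successfully-transmitted targets is monotone because $j_{i'}\mapsto i'$ is order-preserving on the positions that survive the channel, and summing the per-round $\eps n$ bound over $1/\beta$ rounds is perfectly legitimate no matter how often an $S$-index recurs.

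The genuine gap is in your $|B_1|$ bound. You correctly observe that taking "$S[i]$ is \emph{only} matched to $S'[t_i]$" literally creates a third class of misdecoded positions (those $i$ for which the correct pair $(i,j_i)$ is found but the source $i$ also appears in some other pair), and you bound it by $P_s+P_I$. You then dismiss $P_I$ as a "lower-order insertion-related term," but this is false: $P_I$ is bounded only by $d_i$, which can be $\Theta(n)$ and is not $O(n\sqrt\eps)$. So the estimate you actually prove, $\beta|S'| + 2P_s + P_I$, does not simplify to the claimed $(n+d_i-d_r)\beta+\eps n/\beta$. (Concretely, with $d_r=0$, $d_i=n/2$ and $S'=S\cdot S[1,n/2]$, the greedy LCS gives $M_1=\{(i,i)\}_{i\le n}$ and $M_2\approx\{(i,n+i)\}_{i\le n/2}$, so roughly $n/2$ sources $i$ have two matches and land in your $B_1$.) The paper's proof counts only your $V_3$ and $V_2$ and never a $B_1$ term, which is consistent with an index decoder that simply reports whatever match it finds and leaves the multiplicity/collision check to Algorithm~\ref{alg:NewECCsDecoder}. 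Under that reading, $B_1$-positions are not misdecodings of the \emph{indexing} algorithm, the bound closes as $|V_3|+|V_2|\le\beta(n+d_i-d_r)+\eps n/\beta$, and the spurious factor of two on $P_s$ also vanishes. The fix is therefore to drop the uniqueness filter from the indexing decoder (which is what the proof actually requires), not to hunt for a bound on $P_I$ that does not exist.
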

\begin{proof}
First, we claim that at most $(n+d_i-d_r)\beta$ many of the symbols that have been successfully transmitted are not matched in any of $M_1, \cdots, M_{1/\beta}$.
Assume by contradiction that more than $(n+d_i-d_r)\beta$ of the symbols that pass through the channel successfully are not matched in any of $M_1, \cdots, M_{1/\beta}$. Then, there exists a monotone matching of size greater than $(n+d_i-d_r)\beta$ between the unmatched elements of $S'$ and $S$ after $\frac{1}{\beta}$ rounds of finding longest common substrings. Hence, size of any of $M_i$s is at least $(n+d_i-d_r)\beta$. So, the summation of their sizes exceeds $(n+d_i-d_r)\beta \times \frac{1}{\beta} = n+d_i-d_r = |S'|$ which brings us to a contradiction.

Furthermore, as a result of Theorem~\ref{lemma:simplifiedProperty}, any of $M_i$s contain at most $\eps n$ many incorrectly matched elements. Hence, at least $\frac{\eps}{\beta}n$ many of the matched symbols are matched to incorrect index.

Hence, the total number of misdecodings can be bounded by $(n+d_i-d_r)\beta + \frac{\eps}{\beta}n$.
\end{proof}

\noSTOC{
\subsection{Deletion Errors Only}\label{sec:delOnly}
We now introduce a very simple linear time streaming algorithm that decodes a received synchronization string of length $n$ which can be affected by up to $n\delta$ many deletions. Our scheme is guaranteed to have less than $\frac{\eps}{1-\eps}\cdot n\delta$ misdecodings.

Before proceeding to the algorithm description, let $d_r$ denote the number of symbols removed by adversary. As adversary is restricted to symbol deletion, each symbol received at the receiver corresponds to a symbol sent by the sender. Hence, there exists a monotone matching of size $|S'| = n' = n-d_r$ like
$M = \{(t_1, 1), (t_2, 2), \cdots, (t_{n-d_r}, n-d_r)\}$
between $S$ and $S'$ which matches each of the received symbols to their actual indices.

Our simple streaming algorithm greedily matches $S$ to the left-most possible subsequence of $S$. To put it another words, the algorithm matches $S'[1]$ to $S[t'_1]$ where $S[t'_1] = S'[1]$ and $t'_1$ is as small as possible, then matches $S'[2]$ to the smallest $t'_2>t'_1$ where $S[t'_2] = S'[2]$ and construct the whole matching $M'$ by repeating this procedure. Note that as there is a matching of size $|S'|$ between $S$ and $S'$, the size of $M'$ will be $|S'|$ too.

This algorithm clearly works in a streaming manner and runs in linear time. To analyze the performance, we basically make use of the fact that $M$ and $M'$ are both monotone matchings of size $|S'|$ between $S$ and $S'$. Therefore, 
$\bar{M} = \{(t_1, t'_1), (t_2, t'_2), \cdots, (t_{n-d_r}, t'_{n-d_r})\}$
is a monotone matching between $S$ and itself.
Note that if $t_i \not = t'_i$, then the algorithm has decoded the index $t_i$ incorrectly. Let $p$ be the number of indices $i$ where $t_i \not = t'_i$. Then matching $\bar{M}$ consists of $n - d_r - p$ good pairs and $p$ bad pairs.
Therefore, using Theorem~\ref{thm:property}
\begin{eqnarray*}
&&n - (n-d_r-p) -p > (1-\eps)(n - (n-d_r-p)) 
\shortOnly{\\&}\Rightarrow\shortOnly{&}
 d_r > (1-\eps) (d_r+p) \Rightarrow p < \frac{\eps}{1-\eps}\cdot d_r
\end{eqnarray*}
This proves the following theorem:
\begin{theorem}
Any $\eps$-synchronization string along with the algorithm described in Section~\ref{sec:delOnly} form a linear-time streaming solution for deletion-only $(n, \delta)$-indexing problem guaranteeing $\frac{\eps}{1-\eps}\cdot n\delta$ misdecodings.
\end{theorem}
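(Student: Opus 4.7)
The plan is to prove the theorem by analyzing the simple greedy streaming algorithm already described: scan $S'$ left to right and, for each symbol $S'[j]$, decode it to the smallest index $t'_j > t'_{j-1}$ such that $S[t'_j] = S'[j]$. Implementing this with a single forward pointer into $S$ clearly gives linear time and is streaming. The first task is to argue the algorithm never fails to produce a decoded index for any received symbol. This uses the fact that in the deletion-only setting every $S'[j]$ originated from some $S[t_j]$ with $t_1 < t_2 < \cdots < t_{n-d_r}$, so the ``true'' matching $M = \{(t_j, j)\}_{j=1}^{n-d_r}$ is a monotone matching of size $n-d_r$ between $S$ and $S'$. Its existence guarantees that the greedy choice can always be made, and also guarantees that the algorithm's matching $M' = \{(t'_j, j)\}_{j=1}^{n-d_r}$ has the same size $n-d_r$.

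Next I would bound the number of misdecodings by comparing $M$ and $M'$ through a self-matching of $S$. Both are monotone matchings on the same symbols of $S'$, so the composition
\[
\bar{M} = \{(t_j, t'_j) : 1 \leq j \leq n - d_r\}
\]
is a monotone matching of size $n-d_r$ from $S$ to itself, and by construction $S[t_j] = S'[j] = S[t'_j]$. A misdecoding of $S'[j]$ happens precisely when $t_j \neq t'_j$, i.e., when $(t_j, t'_j)$ is a bad pair in $\bar{M}$. Let $p$ denote the number of misdecodings, so $\bar{M}$ has exactly $p$ bad pairs and $n - d_r - p$ good pairs.

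Now the key step is to invoke Theorem~\ref{thm:property} on $\bar{M}$: with $g = n - d_r - p$ good pairs and $b = p$ bad pairs in a monotone self-matching of the $\eps$-synchronization string $S$, we obtain $p \leq \eps(n - g) = \eps(d_r + p)$, which rearranges to $p \leq \tfrac{\eps}{1-\eps} d_r$. Since $d_r \leq n\delta$, we conclude $p \leq \tfrac{\eps}{1-\eps} n\delta$, completing the argument.

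I do not expect any real obstacle here: existence of $M$ is immediate in the deletion-only setting, the composition $\bar{M}$ is straightforward once one notes that the algorithm's choices are forced to be monotone, and the key quantitative bound is a direct application of the already established $\eps$-self-matching property of $\eps$-synchronization strings (Theorem~\ref{thm:property}). The only subtlety worth double-checking is that $M'$ genuinely has size $|S'|$, which relies on the existence of $M$ as a certificate that the greedy pointer never runs past the end of $S$ prematurely.
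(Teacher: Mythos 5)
Your proof is correct and follows essentially the same route as the paper: both use the true matching $M$ to certify that the greedy (leftmost) matching $M'$ saturates $S'$, form the composed monotone self-matching $\bar{M}$ on $S$, and apply the self-matching bound $b \leq \eps(n-g)$ from Theorem~\ref{thm:property} with $g = n-d_r-p$, $b = p$ to get $p \leq \tfrac{\eps}{1-\eps}d_r \leq \tfrac{\eps}{1-\eps}n\delta$. The paper phrases the final inequality as $n-g-b \geq (1-\eps)(n-g)$ rather than $b \leq \eps(n-g)$, but this is algebraically identical.
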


\subsection{Insertion Errors Only}\label{sec:insErrors}
We now depart to another simplified case where adversary is restricted to only insert symbols. We propose a decoding algorithm whose output is guaranteed to be error-free and contain less than $\frac{n\delta}{1-\eps}$ misdecodings.

Assume that $d_i$ symbols are inserted into the string $S$ to turn it in into $S'$ of size $n+d_i$ on the receiving side. Again, it is clear that there exists a monotone matching $M$ of size $n$ like 
$M = \{(1, t_1), (2, t_2), \cdots, ({n},t_{n})\}$
between $S$ and $S'$ that matches each symbol in $S$ to its actual index when it arrives at the receiver. 

The decoding algorithm we present, matches $S[i]$ to $S'[t'_i]$ in its output, $M'$, if and only if in all possible monotone matchings between $S$ and $S'$ that saturate $S$ (i.e., are of size $|S| = n$), $S[i]$ is matched to $S'[t'_i]$.  Note that any symbol $S[i]$ that is matched to $S'[t'_i]$ in $M'$ has to be matched to the same element in $M$; therefore, the output of this algorithm does not contain any incorrectly decoded indices; therefore, the algorithm is error-free.

Now, we are going to first provide a linear time approach to implement this algorithm and then prove an upper-bound of $\frac{d_i}{1-\eps}$ on the number of misdecodings. To this end, we make use of the following lemma:
\begin{lemma}\label{lem:leftright}
Let $M_L = \{(1, l_1), (2, l_2), \cdots, (n, l_n)\}$ be a monotone matching between $S$ and $S'$ such that $l_1, \cdots, l_{n}$ has the smallest possible value lexicographically. We call $M_L$ the \emph{left-most} matching between $S$ and $S'$. Similarly, let $M_R = \{(1, r_1), \cdots, (n, r_n)\}$ be the monotone matching such that $r_n, \cdots, r_{1}$ has the largest possible lexicographical value. Then $S[i]$ is matched to $S'[t'_i]$ in all possible monotone matchings of size $n$ between $S$ and $S'$ if and only if $(i, t'_i) \in M_R\cap M_L$.
\end{lemma}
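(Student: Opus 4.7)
The plan is to prove both implications of the iff. The forward direction is essentially tautological: if $S[i]$ is matched to $S'[t'_i]$ in every monotone matching of size $n$, then in particular this holds for $M_L$ and $M_R$, so $(i, t'_i) \in M_L \cap M_R$.

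The substantive direction is the converse. I will establish a sandwich property: for any monotone matching $M = \{(1, t_1), \ldots, (n, t_n)\}$ of size $n$ between $S$ and $S'$, the coordinates satisfy $l_i \le t_i \le r_i$ for every $i \in \{1,\ldots,n\}$. Given this, whenever $(i, t'_i) \in M_L \cap M_R$ we have $l_i = r_i = t'_i$, and the sandwich forces $t_i = t'_i$ in every such $M$, which is exactly what the converse asserts.

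To prove $l_i \le t_i$ I will run a swap argument. Suppose for contradiction some $M$ violates it; let $k$ be the smallest index with $t_k < l_k$. Minimality of $k$ (or $k=1$, vacuously) gives $l_{k-1} \le t_{k-1}$, hence $l_{k-1} \le t_{k-1} < t_k$. Now the hybrid
\[
M' = \{(1, l_1), \ldots, (k-1, l_{k-1}), (k, t_k), (k+1, t_{k+1}), \ldots, (n, t_n)\}
\]
is a monotone matching of size $n$ between $S$ and $S'$: each pair is inherited from either $M_L$ or $M$ and is therefore individually valid, and the only new adjacency to check at the splice is $l_{k-1} < t_k$, which we just observed. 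But $M'$ agrees with $M_L$ on its first $k-1$ coordinates while having a strictly smaller $k$-th coordinate, contradicting the lexicographic minimality of $(l_1, \ldots, l_n)$. A symmetric swap, obtained by taking the largest $k$ with $t_k > r_k$ and splicing the prefix of $M$ up through index $k$ onto the suffix of $M_R$ starting at $k+1$, establishes $t_i \le r_i$; this time the contradiction is with the fact that $(r_n, r_{n-1}, \ldots, r_1)$ is lexicographically largest, since $M'$ agrees with $M_R$ on coordinates $k+1, \ldots, n$ and exceeds it at coordinate $k$.

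I do not anticipate any real obstacle here. The only subtlety is the book-keeping around monotonicity of the spliced matchings (which reduces to a single inequality at the splice point) and being careful to phrase the contradiction in the correct lexicographic ordering---from the left for $M_L$ and from the right for $M_R$.
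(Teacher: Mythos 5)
Your proof is correct and is exactly the ``simple contradiction argument'' the paper gestures at without writing out: the paper provides no details for this lemma, and your sandwich property $l_i \le t_i \le r_i$, established by the two splice-and-compare contradictions, is the natural way to make that argument precise. The splice monotonicity checks ($l_{k-1}\le t_{k-1}<t_k$ and $t_k<t_{k+1}\le r_{k+1}$) are the only non-trivial points and you handle them correctly, including the boundary cases $k=1$ and $k=n$.
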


This lemma can be proved by a simple contradiction argument.
Our algorithm starts by computing left-most and right-most monotone matchings between $S$ and $S'$ using the trivial greedy algorithm introduced in Section~\ref{sec:delOnly} on $(S, S')$ and them reversed. It then outputs the intersection of these two matching as the answer. This algorithm clearly runs in linear time.

To analyze this algorithm, we bound the number of successfully transmitted symbols that the algorithm refuses to decode, denoted by $p$. To bound the number of such indices, we make use of the fact that $n-p$ elements of $S'$ are matched to the same element of $S$ in both $M_L$ and $M_R$. As there are $p$ elements in $S$ that are matched to different elements in $S'$ and there is a total of $n+d_i$ elements in $S'$, there has to be at least $2p -[(n+d_i)-(n-p)] = p-d_i$ elements in $S'$ who are matched to different elements of $S$ in $M_L$ and $M_R$.

Consider the following monotone matching from $S$ to itself as follows:
\begin{eqnarray*}
\fullOnly{M& =& \{(i, i): \text{If $S[i]$ is matched to the same position of $S'$ in both $M$ and $M'$} \} \\}
\shortOnly{M& =& \{(i, i): \text{If $S[i]$ is matched to the same position of }
\\&&\text{$S'$ in both $M$ and $M'$} \} \\}
&&\cup\ \{(i, j): \exists k \text{ s.t. } (i,k)\in M_L, (j,k) \in M_R\}
\end{eqnarray*}
Note that monotonicity follows the fact that both $M_L$ and $M_R$ are both monotone matchings between $S$ and $S'$.
We have shown that the size of the second set is at least $p - d_i$ and the size of the first set is by definition $n-p$. Also, all pairs in the first set are good pairs and all in the second one are bad pairs. Therefore, by Theorem~\ref{thm:property}:
$$(n - (n-p) - (p-d_i)) > (1-\eps)(n-(n-p)) \Rightarrow p < \frac{d_i}{1-\eps}$$
which proves the efficiency claim and gives the following theorem.
\begin{theorem}\label{thm:errorFreeInsOnly}
Any $\eps$-synchronization string along with the algorithm described in Section~\ref{sec:insErrors} form a linear-time error-free solution for insertion-only $(n, \delta)$-indexing problem guaranteeing $\frac{1}{1-\eps}\cdot n\delta$ misdecodings.
\end{theorem}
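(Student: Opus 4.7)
The plan is to prove the three claims embedded in the statement separately: error-freeness, linear running time, and the $\frac{n\delta}{1-\eps}$ bound on misdecodings. Error-freeness is essentially definitional: the algorithm outputs the pair $(i, t'_i)$ only when every monotone matching of size $n$ between $S$ and $S'$ matches $S[i]$ to $S'[t'_i]$. Since the true correspondence between the $n$ sent symbols and their received positions is one particular such matching $M$, any decoded pair must coincide with the corresponding pair in $M$, so there are no incorrect decodings.

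For the linear running time, I would invoke Lemma~\ref{lem:leftright}, which characterizes the intersection of all size-$n$ matchings as $M_L \cap M_R$, where $M_L$ and $M_R$ are the lexicographically left-most and right-most monotone matchings. Both can be computed in linear time by the same greedy procedure used in Section~\ref{sec:delOnly}, once forward (for $M_L$) and once on the reversed strings (for $M_R$). Taking their intersection is also linear, so the whole decoder runs in time $O(n)$.

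The main work is bounding the misdecoding count $p$, defined as the number of indices $i$ with $l_i \neq r_i$. First I would show that $S'$-positions used by both $M_L$ and $M_R$ number at least $n - d_i$, by inclusion-exclusion against $|S'| = n + d_i$. Of these shared positions, exactly $n - p$ correspond to indices $i \in A := \{i : l_i = r_i\}$, giving rise to good pairs $(i, i)$ in a candidate self-matching $\bar M$ of $S$. The remaining shared positions, of which there are at least $p - d_i$, give bad pairs $(i, j)$ with $i \neq j$ determined by $l_i = r_j$. I would then verify that the union of these good and bad pairs is monotone, which follows directly from the monotonicity of $M_L$ and $M_R$: both coordinates in any pair of $\bar M$ are read off monotonically as the underlying $S'$-position increases.

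Finally, applying Theorem~\ref{thm:property} to $\bar M$ with $g = n - p$ good pairs and $b \geq p - d_i$ bad pairs yields $p - d_i \leq \eps(n - (n-p)) = \eps p$, hence $p \leq \frac{d_i}{1-\eps} \leq \frac{n\delta}{1-\eps}$. The main subtlety I anticipate is confirming the monotonicity of the combined good/bad self-matching and making sure the counting of shared $S'$-positions distinguishes correctly between the $A$-part and the non-$A$-part; the rest is a clean inclusion-exclusion calculation together with the general self-matching bound established earlier in the paper.
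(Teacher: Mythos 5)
Your proposal is correct and follows essentially the same route as the paper's own proof: error-freeness by comparison with the true matching, linear-time implementation via Lemma~\ref{lem:leftright} and the two greedy passes, and the misdecoding bound via the inclusion-exclusion count of shared $S'$-positions feeding into the self-matching bound of Theorem~\ref{thm:property}. The only difference is that you spell out the monotonicity of the induced self-matching $\bar M$ a bit more explicitly than the paper does, but the decomposition, the key lemma, and the final inequality $p - d_i \leq \eps p$ are identical.
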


Finally, we remark that a similar non-streaming algorithm can be applied to the case of deletion-only errors. Namely, one can compute the left-most and right-most matchings between the received string and string that is supposed to be received and output the common edges. By a similar argument as above, one can prove the following:
\begin{theorem}\label{thm:errorFreeDelOnly}
Any $\eps$-synchronization string along with the algorithm described in Section~\ref{sec:insErrors} form a linear-time error-free solution for deletion-only $(n, \delta)$-indexing problem guaranteeing $\frac{\eps}{1-\eps}\cdot n\delta$ misdecodings.
\end{theorem}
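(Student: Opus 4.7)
The plan is to mirror the proof of Theorem~\ref{thm:errorFreeInsOnly} with the roles of sender and receiver strings switched, since deletions to $S$ leaving $S' $ of length $n - d_r$ is structurally symmetric to insertions turning $S$ into a longer string. First I would set up the notation: let $d_r \le n\delta$ denote the number of deletions and let $M^{*} = \{(t_1,1),\dots,(t_{n-d_r},n-d_r)\}$ be the true monotone matching between $S$ and $S'$ pairing each received symbol with its actual index in $S$. The algorithm, as indicated in the remark after Theorem~\ref{thm:errorFreeInsOnly}, is to greedily compute the left-most saturating monotone matching $M_L = \{(l_j,j)\}_{j=1}^{n-d_r}$ and the right-most one $M_R = \{(r_j,j)\}_{j=1}^{n-d_r}$ between $S$ and $S'$, using the linear-time greedy scheme of Section~\ref{sec:delOnly} (once on $(S,S')$ and once on their reverses), and output the index $l_j$ for exactly those $j$ with $l_j = r_j$.

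Error-freeness is established via a direct deletion-side analog of Lemma~\ref{lem:leftright}: a pair $(i,j)$ lies in every saturating monotone matching between $S$ and $S'$ if and only if $(i,j) \in M_L \cap M_R$. Since $M^{*}$ is itself a saturating matching (of size $n - d_r$), any output pair $(l_j,j)$ with $l_j = r_j$ must coincide with $(t_j,j)$, so the decoded index is always the genuine one; this is what makes the algorithm error-free. The Lemma~\ref{lem:leftright} proof carries over verbatim by a standard exchange argument, which I expect to be routine.

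The quantitative misdecoding bound is the one place that requires a small observation. Let $p$ be the number of indices $j$ at which $l_j \ne r_j$; these are exactly the successfully transmitted symbols the algorithm refuses to decode. I would then note that the collection
\[
M := \{(l_j, r_j) : 1 \le j \le n - d_r\}
\]
is itself a monotone self-matching on $S$, because both sequences $(l_j)_j$ and $(r_j)_j$ are strictly increasing in $j$ and $S[l_j] = S'[j] = S[r_j]$. Within $M$ there are exactly $g = n - d_r - p$ good pairs (those with $l_j = r_j$) and $b = p$ bad pairs. Applying Theorem~\ref{thm:property} yields
\[
p = b \le \eps(n - g) = \eps(d_r + p),
\]
so $(1-\eps)\, p \le \eps d_r \le \eps n\delta$, i.e., $p \le \frac{\eps}{1-\eps}\, n\delta$, which is the promised bound.

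The main (and only mild) obstacle is verifying the left-most/right-most characterization in the saturating-receiver regime; the rest is a direct reuse of Theorem~\ref{thm:property} together with the accounting above, exactly parallel to the insertion-only argument. Linear time follows because both $M_L$ and $M_R$ are computed by one pass each of the greedy scheme from Section~\ref{sec:delOnly}, and intersecting them and reading off the output is a single linear scan.
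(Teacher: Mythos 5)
Your proof is correct and fills in exactly the details the paper leaves implicit when it says ``by a similar argument as above.'' The algorithm (compute left-most and right-most $S'$-saturating monotone matchings, output the agreeing edges), the error-freeness via the deletion-side analog of Lemma~\ref{lem:leftright}, and the misdecoding count via Theorem~\ref{thm:property} are all exactly what is intended. Worth noting: in the deletion regime the accounting is actually \emph{simpler} than in the insertion-only Theorem~\ref{thm:errorFreeInsOnly}, because $M_L$ and $M_R$ both saturate $S'$, so $\{(l_j,r_j)\}_{j=1}^{n-d_r}$ is already a monotone self-matching on $S$ (this is the same device used by the streaming algorithm of Section~\ref{sec:delOnly}); one does not need the ``at least $p-d_i$ positions of $S'$ matched differently'' counting step that the insertion-only proof required. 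Your derivation $p \le \eps(d_r+p) \Rightarrow p \le \tfrac{\eps}{1-\eps}d_r \le \tfrac{\eps}{1-\eps}n\delta$ is exactly right.
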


In the same manner as Theorem~\ref{thm:main}, we can derive the following theorem concerning deletion-only and insertion-only codes based on Theorems~\ref{thm:errorFreeInsOnly} and \ref{thm:errorFreeDelOnly}.

\begin{theorem}\label{thm:InsOnlyDelOnlyCodes}
For any $\eps>0$ and $\delta \in (0,1)$:
\begin{itemize}
\item There exists an encoding map $E: \Sigma^k \rightarrow \Sigma^n$ and a decoding map $D: \Sigma^* \rightarrow \Sigma^k$ such that if $x$ is a subsequence of $E(m)$ where $|x| \ge n-n\delta$ then $D(x) = m$. Further $\frac{k}{n} > 1 - \delta - \eps$, $|\Sigma|=f(\eps)$, and $E$ and $D$ are explicit and have linear running times in $n$. 
\item There exists an encoding map $E: \Sigma^k \rightarrow \Sigma^n$ and a decoding map $D: \Sigma^* \rightarrow \Sigma^k$ such that if $E(m)$ is a subsequence of $x$ where $|x| \le n+n\delta$ then $D(x) = m$. Further $\frac{k}{n} > 1 - \delta - \eps$, $|\Sigma|=f(\eps)$, and $E$ and $D$ are explicit and have linear running times in $n$. 
\end{itemize}
\end{theorem}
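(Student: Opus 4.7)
The plan is to mirror the proof of Theorem~\ref{thm:main} but replace each ingredient with its linear-time analogue, and to exploit the error-freeness of the new indexing algorithms to sidestep the cost of corruption-correcting ECCs. Given $\eps > 0$ and $\delta \in (0,1)$, I pick $\eps' = \Theta(\eps)$ small enough that $\frac{\delta}{1-\eps'} + \eps' \le \delta + \eps$, use Theorem~\ref{thm:detepsselfconstruction} to construct in polynomial preprocessing time a string $S$ of length $n$ with the $\eps'$-self-matching property over an alphabet $\Sigma_S$ of size $\eps^{-O(1)}$, and use as the base code a linear-time near-MDS expander code $\mathcal{C}$ from~\cite{guruswami2005linear} over an alphabet $\Sigma_{\mathcal{C}}$ large enough that $\log|\Sigma_S|/\log|\Sigma_{\mathcal{C}}| \le \eps'$, with rate at least $1 - \frac{\delta}{1-\eps'} - \eps'$ and the ability to efficiently correct a $\frac{\delta}{1-\eps'}$ fraction of erasures.

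For the deletion-only code I would pair $S$ with the linear-time error-free deletion-only indexing algorithm from Section~\ref{sec:insErrors}, which by Theorem~\ref{thm:errorFreeDelOnly} guarantees at most $k_d = \frac{\eps'}{1-\eps'}\, n\delta$ misdecodings. Because the indexing is error-free, the indexing procedure outputs only erasures and no corruptions, and by Theorem~\ref{thm:MisDecodeToHalfError} their number is at most $n\delta + k_d = \frac{n\delta}{1-\eps'}$. This fits the erasure-decoding budget of $\mathcal{C}$, so composing the linear-time indexing with the linear-time erasure decoder of $\mathcal{C}$ (via Theorem~\ref{thm:mainECC}'s error-free clause) yields the claimed code with rate $\ge 1-\delta-\eps$, constant alphabet $\Sigma_{\mathcal{C}} \times \Sigma_S$, and linear-time encoder and decoder.

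For the insertion-only code the analogous plan uses the linear-time error-free insertion-only indexing algorithm of Section~\ref{sec:insErrors}, which by Theorem~\ref{thm:errorFreeInsOnly} produces at most $k_i \le \frac{n\delta}{1-\eps'}$ misdecodings. Here the main subtlety, and the step I expect to require the most care, is that the generic $n\delta + k$ half-error bound of Theorem~\ref{thm:MisDecodeToHalfError} would yield roughly $2n\delta$ erasures, blocking the Singleton-optimal rate. The fix is a tighter case-specific accounting: in the error-free insertion-only setting each of the $n$ sent symbols is received and is either correctly decoded or labeled $\bot$, while every inserted symbol is also labeled $\bot$ (any other label would be an incorrect decoding, violating error-freeness), so no inserted symbol occupies or duplicates a valid index. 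Consequently the indexing procedure produces exactly $k_i \le \frac{n\delta}{1-\eps'}$ erasures and no corruptions, again matching the erasure budget of $\mathcal{C}$; the same composition then delivers a linear-time insdel code of rate $\ge 1-\delta-\eps$ over a constant-size alphabet, completing the proof.
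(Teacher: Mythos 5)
Your proposal is correct and follows essentially the same route as the paper, which derives this theorem in a single remark from Theorems~\ref{thm:errorFreeInsOnly} and~\ref{thm:errorFreeDelOnly} by pairing the linear-time error-free indexing algorithms with a linear-time near-MDS erasure code, exactly as you do. Your explicit tightening of the half-error count in the insertion-only case (observing that error-freeness forces every inserted symbol to be labeled $\bot$ and prevents duplicated indices, so only the at most $\frac{n\delta}{1-\eps'}$ misdecoded indices become erasures rather than the generic $n\delta + k$ of Theorem~\ref{thm:MisDecodeToHalfError}) is a refinement the paper leaves implicit but which is genuinely needed to reach rate $1-\delta-\eps$ there.
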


Finally, we remark that since indexing solutions offered in Theorems~\ref{thm:errorFreeInsOnly} and \ref{thm:errorFreeDelOnly} are error free, it suffices to use good erasure codes along with synchronization strings to obtain Theorem~\ref{thm:InsOnlyDelOnlyCodes}.

\global\def\RSPD{
\fullOnly{\subsection{Decoding Using the Relative Suffix Pseudo-Distance (RSPD)}\label{sec:improvedStreaming}}
\shortOnly{\section{Decoding Using the Relative Suffix Pseudo-Distance (RSPD)}\label{sec:improvedStreaming}}

In this section, we show how one can slightly improve the constants in the results obtained in Section~\ref{sec:sync_decoding} by replacing RSD with a related notion of ``distance'' between two strings introduced in \cite{braverman2015coding}. We call this notion \emph{relative suffix pseudo-distance} or RSPD both to distinguish it from our RSD relative suffix distance and also because RSPD is not a metric distance per se -- it is neither symmetric nor satisfies the triangle inequality.

\begin{definition}[Relative Suffix Pseudo-Distance (RSPD)]
Given any two strings $c,\tilde{c} \in \Sigma^*$, the \emph{suffix distance} between $c$ and $\tilde{c}$ is 
$$RSPD\left(c,\tilde{c}\right) = \min_{\tau: c \to \tilde{c}}\left\{\max_{i = 1}^{|\tau_1|} \left\{ \frac{sc\left(\tau_1\left[i,|\tau_1|\right]\right) + sc\left(\tau_2\left[i,|\tau_2|\right]\right)}{|\tau_1| - i + 1 - sc\left(\tau_1\left[i,|\tau_1|\right]\right)}\right\}\right\}$$
\end{definition}

We derive our algorithms by proving a useful property of synchronization strings: 

\begin{lemma}\label{lemma:SD_unique}
Let $S \in \Sigma^n$ be an $\eps$-synchronization string and $\tilde{c}\in\Sigma^m$. Then there exists at most one $c \in \bigcup_{i=1}^n S[1..i]$ such that $RSPD(c, \tilde{c}) \le 1-\eps$.
\end{lemma}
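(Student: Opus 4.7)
The plan is to argue by contradiction. Suppose two distinct prefixes $c_1 = S[1..i]$ and $c_2 = S[1..j]$ with $i < j$ both satisfy $RSPD(c_\alpha, \tilde{c}) \le 1-\eps$, witnessed by optimal matchings $\tau^{(1)}: c_1 \to \tilde{c}$ and $\tau^{(2)}: c_2 \to \tilde{c}$. My plan is to splice these two matchings together through $\tilde{c}$ into a matching between a suffix of $c_1$ and the ``new'' tail $S[i+1,j]$ of $c_2$, short enough to violate the $\eps$-synchronization property of $S$.

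Writing $L = j-i$, I would first locate in $\tau^{(2)}$ the position $p_0$ at which the suffix $c_2[i+1,j] = S[i+1,j]$ begins being processed. The RSPD bound applied at $p_0$ then says that the total number of $*$'s in $\tau^{(2)}[p_0,|\tau^{(2)}|]$ is at most $(1-\eps)L$, and this tail matches $S[i+1,j]$ to some suffix $\tilde{c}[q_0,m]$ whose length $\tilde{L}$ is strictly positive (since strictly fewer than $L$ symbols of $S[i+1,j]$ can be deleted). Next, in $\tau^{(1)}$ I would locate the unique position $q_1$ at which $\tilde{c}[q_0]$ appears in $\tau^{(1)}_2$; the suffix $\tau^{(1)}[q_1,|\tau^{(1)}|]$ then matches a suffix $S[l,i]$ of $c_1$ of length $L' = i-l+1$ against the same string $\tilde{c}[q_0,m]$, incurring at most $(1-\eps)L'$ many $*$'s by the RSPD bound at $q_1$.

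Composing these two matchings via the common string $\tilde{c}[q_0,m]$ and invoking the triangle inequality for edit distance yields
\[
ED\bigl(S[l,i],\ S[i+1,j]\bigr) \;\le\; (1-\eps)L' + (1-\eps)L \;=\; (1-\eps)(L+L').
\]
On the other hand, the $\eps$-synchronization property applied to the triple $l < i+1 < j+1$ gives $ED(S[l,i+1),S[i+1,j+1)) > (1-\eps)((j+1)-l) = (1-\eps)(L+L')$, which, upon observing $S[l,i+1) = S[l,i]$ and $S[i+1,j+1) = S[i+1,j]$, is the desired contradiction.

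The one point I expect to require care is ensuring $L' \ge 1$, so that the $\eps$-synchronization property actually applies to a nonempty left interval. This will follow from the observation that $RSPD(c_1,\tilde{c}) < \infty$ forces the denominator in the RSPD formula to be positive at every position of $\tau^{(1)}$; equivalently, $\tau^{(1)}_1$ cannot terminate in a $*$, so the final symbol of $c_1$ must lie in the range $[q_1,|\tau^{(1)}|]$, yielding $L' \ge 1$. The analogous observation applied to $\tau^{(2)}$ simultaneously guarantees $\tilde{L} \ge 1$, so $q_0$ and hence $q_1$ are well-defined.
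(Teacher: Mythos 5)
Your proof is correct and takes essentially the same route as the paper's: you inline the content of the paper's Lemma~\ref{lemma:suffix} (extracting from the longer prefix's matching a bound $ED(S[i+1,j],\tilde{c}[q_0,m])\le(1-\eps)L$ and from the shorter prefix's matching a bound $ED(S[l,i],\tilde{c}[q_0,m])\le(1-\eps)L'$ against the same suffix of $\tilde{c}$) and then conclude via the triangle inequality for edit distance against the $\eps$-synchronization property. Your additional care about the nonemptiness of $\tilde{c}[q_0,m]$ and of $S[l,i]$ addresses edge cases the paper's write-up glosses over.
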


Before proceeding to the proof of Lemma~\ref{lemma:SD_unique}, we prove the following lemma:
\begin{lemma}\label{lemma:suffix}
Let $RSPD(S, T) \le 1-\eps$, then:
\begin{enumerate}
\item For every $1 \le s \le |S|$, there exists $t$ such that $ED\left(S[s,|S|], T\left[t, |T|\right]\right) \le (1-\eps)(|S| -  s + 1)$.
\item For every $1 \le t \le |T|$, there exists $s$ such that $ED\left(S[s,|S|], T\left[t, |T|\right]\right) \le (1-\eps)(|S| -  s + 1)$.
\end{enumerate}
\end{lemma}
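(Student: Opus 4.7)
The plan is to fix an optimal string matching $\tau = (\tau_1,\tau_2): S \to T$ that attains the minimum in the definition of $RSPD(S,T)$, so that by hypothesis, for every index $i \in \{1,\ldots,|\tau_1|\}$,
\[
sc\bigl(\tau_1[i,|\tau_1|]\bigr) + sc\bigl(\tau_2[i,|\tau_2|]\bigr) \le (1-\eps)\cdot\Bigl(|\tau_1| - i + 1 - sc\bigl(\tau_1[i,|\tau_1|]\bigr)\Bigr).
\]
A first observation is that the denominator $|\tau_1| - i + 1 - sc(\tau_1[i,|\tau_1|])$ is the number of non-$*$ entries in $\tau_1[i,|\tau_1|]$, and since the ratio must be finite and $\le 1-\eps$, this count is strictly positive for every $i$; in particular, $\tau_1[|\tau_1|] \ne *$.

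Next I would set up the natural correspondence between positions of $\tau$ and suffixes of $S$ and $T$. For each $i$, define $s(i) := 1 + |\{j < i : \tau_1[j]\ne *\}|$, so that $del(\tau_1[i,|\tau_1|]) = S[s(i),|S|]$, and define $t(i)$ analogously via $\tau_2$. The pair $\bigl(\tau_1[i,|\tau_1|],\,\tau_2[i,|\tau_2|]\bigr)$ is itself a valid string matching between $S[s(i),|S|]$ and $T[t(i),|T|]$, so the standard edit distance bound gives
\[
ED\bigl(S[s(i),|S|],\,T[t(i),|T|]\bigr) \le sc\bigl(\tau_1[i,|\tau_1|]\bigr) + sc\bigl(\tau_2[i,|\tau_2|]\bigr).
\]
Combining this with the RSPD bound and noting that the denominator equals $|S| - s(i) + 1$ yields
\[
ED\bigl(S[s(i),|S|],\,T[t(i),|T|]\bigr) \le (1-\eps)\bigl(|S| - s(i) + 1\bigr) \qquad \text{for every } i.
\]

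To finish part~(1), given $s \in \{1,\ldots,|S|\}$, I would pick $i$ to be the position of the $s$-th non-$*$ symbol in $\tau_1$, so that $s(i) = s$; the conclusion then holds with $t := t(i)$. For part~(2), given $t \in \{1,\ldots,|T|\}$, I would pick $i$ to be the position of the $t$-th non-$*$ symbol in $\tau_2$, so that $t(i) = t$, and set $s := s(i)$. The earlier observation that $\tau_1[|\tau_1|] \ne *$ guarantees that $\tau_1[i,|\tau_1|]$ contains at least one non-$*$ entry, hence $s(i) \le |S|$ and the asserted inequality is meaningful (rather than vacuously demanding that a nonempty suffix of $T$ have edit distance $0$ from the empty string).

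I do not anticipate any substantive obstacle: the entire argument is essentially an unpacking of the RSPD definition plus the trivial upper bound on edit distance in terms of the skip counts of a witnessing matching. The only mildly delicate point is the trailing-$*$ edge case in part~(2), which is handled by the uniformity of the RSPD bound forcing the denominator to stay positive throughout $\tau$.
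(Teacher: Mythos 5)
Your proposal is correct and follows essentially the same route as the paper: fix the witnessing matching $\tau$ achieving $RSPD(S,T)$, locate the position $i$ (the paper calls it $r$) at which the relevant suffix of $S$ (respectively, of $T$) begins, restrict $\tau$ to $[i,|\tau_1|]$ to get a witnessing matching for the two suffixes, apply the trivial edit-distance upper bound via skip counts, and then plug in the uniform RSPD bound after noting that the denominator equals $|S|-s+1$. The only difference is that you explicitly flag and discharge the trailing-$*$ edge case in part (2) (via the observation that $\tau_1[|\tau_1|]\ne *$ whenever the RSPD is finite), which the paper handles implicitly; this is a welcome bit of extra care rather than a deviation in approach.
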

\begin{proof}
\begin{figure}
\centering
\includegraphics[scale=.40]{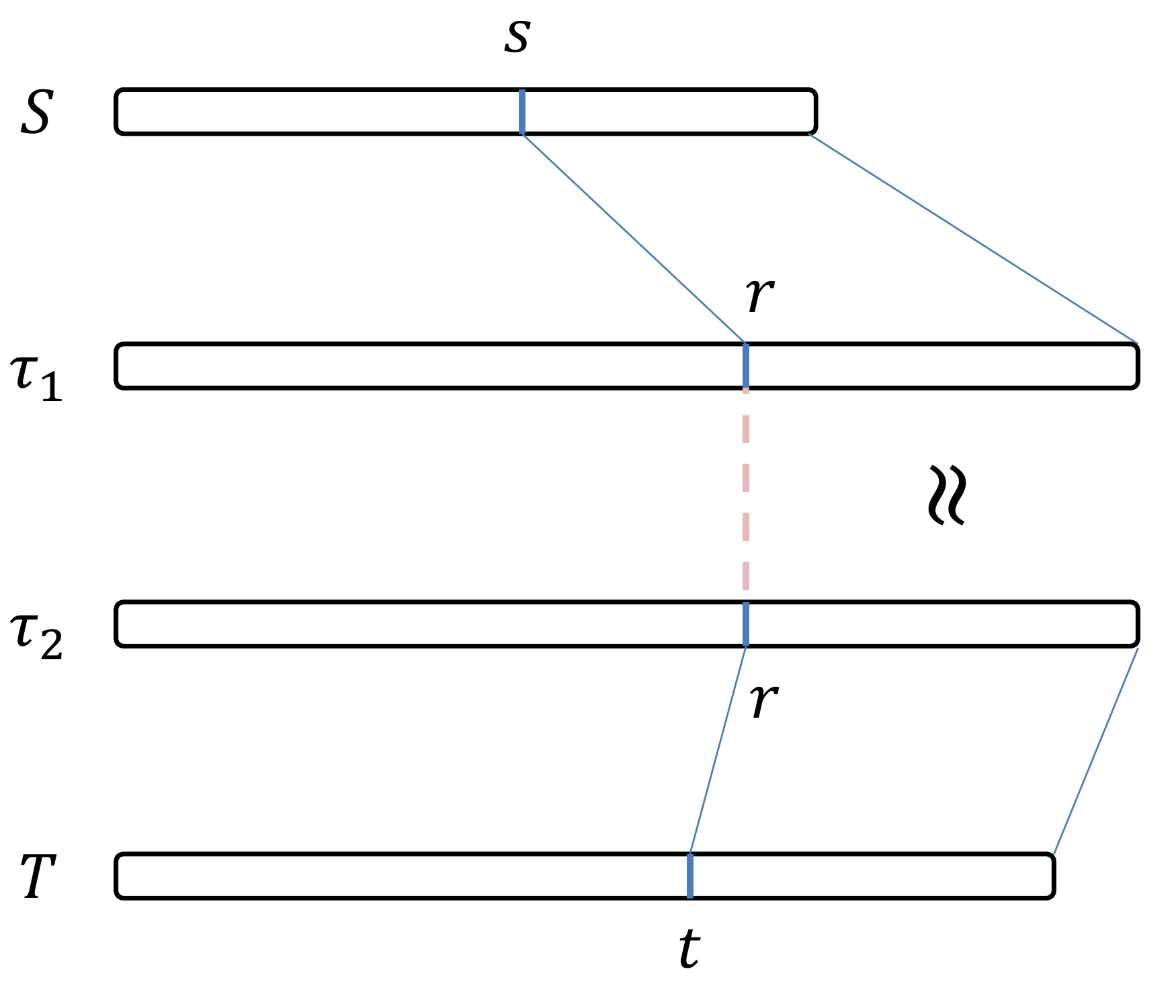}
\caption{Pictorial representation of the notation used in Lemma~\ref{lemma:suffix}}
\label{fig:suffix_distance}
\end{figure}
\-\paragraph{Part 1}
Let $\tau$ be the string matching chosen in $RSPD(S, T)$. There exist some $r$ such that $del(\tau_1\left[r, |\tau_1|\right]) = S[s, |S|]$. Note that $del(\tau_2[r, |\tau_2|])$ is a suffix of $T$. Therefore, there exists some $t$ such that $T[t, |T|] = del(\tau_2[r, |\tau_2|])$. Now,
\begin{eqnarray}
ED(S[s, |S|], T[t, |T|]) &\le& sc(del(\tau_1\left[r, |\tau_1|\right])) + sc(del(\tau_2\left[r, |\tau_1|\right]))\nonumber\\
&=& \frac{sc(del(\tau_1\left[r, |\tau_1|\right])) + sc(del(\tau_2\left[r, |\tau_1|\right]))}{|\tau_1| - r + 1 - sc(\tau_1[r, |\tau_1|])} \cdot (|\tau_1| - r + 1 - sc(\tau_1[r, |\tau_1|]))\nonumber\\
&\le& RSPD(S, T) \cdot (|S| - s + 1)\nonumber\\
&\le& (1-\eps) \cdot (|S| - s + 1)\label{eq:SD-eq3}
\end{eqnarray}
\-\paragraph{Part 2}
Similarly, let $\tau$ be the string matching chosen in $RSPD(S, T)$. There exists some $r$ such that $del(\tau_2\left[r, |\tau_2|\right]) = T[t, |T|]$. Now, $del(\tau_1[r, |\tau_1|])$ is a suffix of $S$. Therefore, there exists some $s$ such that $S[s, |S|] = del(\tau_1[r, |\tau_1|])$. Now, all the steps we took to prove equation \eqref{eq:SD-eq3} hold and the proof is complete.
\end{proof}

\begin{algorithm}
\caption{Synchronization string decode}
\begin{algorithmic}[1]\label{alg:SyncDecode}
\REQUIRE A received message $\tilde{c} \in \Sigma^m$ and an $\eps$-synchronization string $S \in \Sigma^n$

\STATE $ ans \leftarrow \emptyset$
\FOR{Any prefix $c$ of $S$}
\STATE $d[i][j][l] \leftarrow \min_{\substack{\tau:c(i)\rightarrow \tilde{c}(j)\\sc\left(\tau_1\right) = l}} \max_{k=1}^{|\tau_1|} 
\frac{sc\left(\tau_1\left[k..\left|\tau_1\right|\right]\right)+sc\left(\tau_2\left[k..\left|\tau_2\right|\right]\right)}
{|\tau_1|-k+1+sc\left(\tau_1\left[k..\left|\tau_1\right|\right]\right)}$
\medskip
\STATE $RSPD(c, \tilde{c})\leftarrow \min_{l' = 0}^{|\tilde{c}|}d[\texttt{i}][|\tilde{c}|][l']$ \label{step:minimize}
\IF {$RSPD(c, \tilde{c}) \leq 1-\eps$}
\STATE $ans \leftarrow c$
\ENDIF
\ENDFOR
\ENSURE $ans$
\end{algorithmic}
\end{algorithm}

\fullOnly{\begin{proof}[Proof of Lemma~\ref{lemma:SD_unique}]}
\shortOnly{\begin{proof}[of Lemma~\ref{lemma:SD_unique}]}
For a contradiction, suppose that there exist a $\tilde{c}$, $l$ and $l'$ such that $l < l'$ and $RSPD(S[1, l], \tilde{c}) \leq 1-\eps$ and $RSPD(S[1, l'], \tilde{c}) \leq 1-\eps$. Now, using part 1 of Lemma~\ref{lemma:suffix}, there exists $k$ such that $ED\left(S[l+1, l'], \tilde{c}[k, |\tilde{c}|]\right) \le (1-\eps) (l' - l)$. Further, part 2 of Lemma~\ref{lemma:suffix} gives that there exist $l''$ such that $ED\left(S[l''+1, l], \tilde{c}[k, |\tilde{c}|]\right) \le (1-\eps)(l - l'')$. Hence, $$ED(S[l+1, l'], S[l'+1, l'']) \le ED(S[l+1, l'], \tilde{c}[k, |\tilde{c}|]) + ED(S[l'+1, l''], \tilde{c}[k, |\tilde{c}|]) \le (1-\eps) (l' - l'')$$ which contradicts the fact that $S$ is an $\eps$-synchronization string.
\end{proof}


Lemma~\ref{lemma:SD_unique} implies a natural algorithm for decoding $\tilde{c}$: simply search over all prefixes of $S$ for the one with small enough suffix distance from $\tilde{c}$. We prove that this is possible in $O(n^5)$ via dynamic programming. \shortOnly{See Algorithm~\ref{alg:SyncDecode} for the pseudocode.}

\begin{theorem}\label{thm:EfficientDecoding}
Let $S\in\Sigma^n$ be an $\eps$-synchronization string, and $\tilde{c}\in \Sigma^m$. Then Algorithm~\ref{alg:SyncDecode}, given input $S$ and $\tilde{c}$, either returns the unique prefix $c$ of $S$ such that $RSPD(c, \tilde{c})\le 1-\eps$ or returns $\emptyset$ if no such prefix exists. Moreover, Algorithm~\ref{alg:SyncDecode} runs in time $O(n^5)$; spending $O(n^4)$ for each received symbol.
\end{theorem}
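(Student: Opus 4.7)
The plan is to split the proof into correctness and runtime, with most of the work concentrated on verifying the dynamic program.

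For correctness, uniqueness of the returned answer is immediate from Lemma~\ref{lemma:SD_unique}: at most one prefix $c$ of the $\eps$-synchronization string $S$ can satisfy $RSPD(c,\tilde{c}) \le 1-\eps$, so whenever Algorithm~\ref{alg:SyncDecode} assigns $ans \leftarrow c$ it is necessarily this unique prefix, and if no such prefix exists then $ans$ remains $\emptyset$. It therefore suffices to show that for each prefix $c$ the algorithm computes $RSPD(c,\tilde{c})$ exactly. I would do this by proving by induction on $i+j$ that $d[i][j][l]$ equals the minimum over matchings $\tau : c[1..i] \to \tilde{c}[1..j]$ with $sc(\tau_1) = l$ of the max-suffix-ratio expression, so that the final minimization over $l$ in step~\ref{step:minimize} delivers $RSPD(c,\tilde{c})$.

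The inductive step rests on a three-way recurrence corresponding to the possible moves at an endpoint of $\tau$: match $c[i]$ with $\tilde{c}[j]$ (legal only when $c[i]=\tilde{c}[j]$), insert a $*$ into $\tau_2$ to delete $c[i]$, or insert a $*$ into $\tau_1$ to absorb $\tilde{c}[j]$. Each move extends an optimal solution on a smaller subproblem and introduces exactly one new candidate suffix-ratio, so the updated DP value is the maximum of the previously stored max-ratio and this new candidate. The subtle point — and what I expect to be the main obstacle — is that the $\max$ in the definition of RSPD ranges over \emph{all} suffix starting positions of $\tau$, and appending a new entry on one side changes the length, the star counts, and hence the ratios associated with every previously-considered suffix simultaneously. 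I would handle this by orienting the DP so that $\tau$ is built from right to left: prepending an entry creates one brand-new suffix while leaving every previously measured suffix ratio unchanged, making the recurrence ``new max $=$ $\max$(old max, new candidate)'' literally correct and straightforward to verify.

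For the runtime, fix a prefix $c$ of $S$. The table $d[\cdot][\cdot][\cdot]$ has indices $(i,j,l)$ each ranging over $O(n)$ values, so $O(n^3)$ entries each filled in $O(1)$ time by the recurrence, plus an $O(n)$ minimization in step~\ref{step:minimize}. Summing across the $n$ prefixes of $S$ gives $O(n^4)$ in a one-shot batch setting. To obtain the streaming per-symbol bound stated in the theorem, I would observe that when the received string is extended by one new symbol $\tilde{c}[j]$, only the $O(n^2)$ entries in the $j$-slice of each prefix's table need updating; across all $n$ prefixes this is $O(n^3)$ of recurrence work, to which the per-prefix $O(n)$ minimizations add $O(n^2)$ more, yielding the claimed $O(n^4)$ per received symbol and $O(n^5)$ total over $m=O(n)$ symbols once the cross-prefix bookkeeping is accounted for.
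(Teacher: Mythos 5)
Your proposal matches the paper's proof in all essentials: uniqueness is delegated to Lemma~\ref{lemma:SD_unique}, each $RSPD(c,\tilde{c})$ is computed by a three-dimensional dynamic program $d[i][j][l]$ indexed by suffix positions and the star count $sc(\tau_1)$, with the same three-way recurrence (match, star in $\tau_1$, star in $\tau_2$) built right-to-left precisely so that prepending an entry adds one new suffix ratio without disturbing the old ones, followed by a minimization over $l$. The runtime accounting ($O(n^3)$ per prefix, $O(n^4)$ per received symbol over $n$ prefixes, $O(n^5)$ total) is also the paper's.
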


\begin{proof}
To find $c$, we calculate the RSPD of $\tilde{c}$ and all prefixes of $S$ one by one. We only need to show that the RSPD of two strings of length at most $n$ can be found in $O(n^3)$. We do this using dynamic programming. Let us try to find $RSPD(s, t)$. Further, let $s(i)$ represent the suffix of $s$ of length $i$ and $t(j)$ represent the suffix of $t$ of length $j$. Now, let $d[i][j][l]$ be the minimum string matching $(\tau_1,\tau_2)$ from $s(i)$ to $t(j)$ such that $sc(\tau_1) = l$. In other words,
$$d[i][j][l] = \min_{\substack{\tau:s(i)\rightarrow t(j)\\sc\left(\tau_1\right) = l}} \max_{k=1}^{|\tau_1|} 
\frac{sc\left(\tau_1\left[k..\left|\tau_1\right|\right]\right)+sc\left(\tau_2\left[k..\left|\tau_2\right|\right]\right)}
{|\tau_1|-k+1+sc\left(\tau_1\left[k..\left|\tau_1\right|\right]\right)},$$ where $\tau$ is a string matching for $s(i)$ and $t(j)$. Note that for any $\tau:s(i)\rightarrow t(j)$, one the following three scenarios might happen:
\begin{enumerate}
\item $\tau_1(1) = \tau_2(1) = s\left(|s| - (i - 1)\right) = t(|t| - (j - 1))$: In this case, removing the first elements of $\tau_1$ and $\tau_2$ gives a valid string matching from $s(i-1)$ to $t(j-1)$.
\item $\tau_1(1) = * \text{ and } \tau_2(1) = t(|t| - (j - 1))$: In this case, removing the first element of $\tau_1$ and $\tau_2$ gives a valid string matching from $s(i)$ to $t(j-1)$.
\item $\tau_2(1) = * \text{ and } \tau_1(1) = s(|s| - (i - 1))$: In this case, removing the first element of $\tau_1$ and $\tau_2$ gives a valid string matching from $s(i-1)$ to $t(j)$.
\end{enumerate}

This implies that
\begin{eqnarray*}
d[i][j][l] = \min\Bigg\{&& 
d[i-1][j-1][l] \text{ (Only if }s(i) =t(j)\text{)},\\
&&\max\left\{d[i][j-1][l-1], \frac{l+(j-(i-l))}{(i+l)+l}\right\},\\
&&\max\left\{d[i-1][j][l], \frac{l+(j-(i-l))}{(i+l)+l}\right\}\Bigg\}.
\end{eqnarray*}
Hence, one can find $RSPD(s,t)$ by minimizing $d[|s|][|t|][l]$ over all possible values of $l$, as Algorithm~\ref{alg:SyncDecode} does in Step~\ref{step:minimize} for all prefixes of $S$. Finally, Algorithm~\ref{alg:SyncDecode} returns the prefix $c$ such that $RSPD(s,t) \leq \frac{1-\eps}{2}$ if one exists, and otherwise it returns $\emptyset$.
\end{proof}


We conclude by showing that if an $\eps$-synchronization string of length $n$ is used along with the minimum RSPD algorithm, the number of misdecodings will be at most $\frac{n\delta}{1-\eps}$.

\begin{theorem}\label{thm:DecodingGuarantee}
Suppose that $S$ is an $\eps$-synchronization string of length $n$ over alphabet $\Sigma$ that is sent over an insertion-deletion channel with $c_i$ insertions and $c_d$ deletions.
By using Algorithm~\ref{alg:SyncDecode} for decoding the indices, the outcome will contain less than $\frac{c_i}{1-\eps} + \frac{c_d\eps}{1-\eps}$ misdecodings.
\end{theorem}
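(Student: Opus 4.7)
The plan is to adapt the potential-function argument of Lemma~\ref{lem:NumberofBadSuffixDistances} (which was tuned to $RSD$ with threshold $\tfrac{1-\eps}{2}$) to $RSPD$ with threshold $1-\eps$, and then to carry out the counting in a way that absorbs the $-1$ term appearing in the error-step update of the potential so that deletions are charged only a factor $\eps$ rather than unit weight. First I reduce a misdecoding to a statement about the \emph{natural} matching. If $S[i]$ is successfully transmitted and arrives at time $t$, then by Lemma~\ref{lemma:SD_unique} Algorithm~\ref{alg:SyncDecode} returns either the unique prefix of $S$ whose $RSPD$ to $\tilde c[1..t]$ is at most $1-\eps$ or $\emptyset$, so any misdecoding at $S[i]$ forces $RSPD(S[1,i],\tilde c[1..t])>1-\eps$. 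Plugging the natural string matching $\tau^\star$ induced by the actual insdel history into the minimization defining $RSPD$, the misdecoding condition becomes the statement that some suffix of $\tau^\star$ covering the last $s$ sent $S$-symbols satisfies $\mathcal{E}(i-s,i)>(1-\eps)s$, where $\mathcal{E}(i-s,i)$ counts the insertions and deletions occurring between the sendings of $S[i-s]$ and $S[i]$, in the convention of Lemma~\ref{lem:NumberofBadSuffixDistances}.

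With this reduction in hand, introduce the potential
\[\Phi(i)=\max_{s\ge 1}\Big\{\tfrac{\mathcal{E}(i-s,i)}{1-\eps}-s\Big\},\qquad \Phi(0)=0.\]
Let $\iota_i$ be the number of insertions at step $i$ and let $[\mathrm{del}_i]\in\{0,1\}$ indicate whether $S[i]$ is deleted, so $k_i=\iota_i+[\mathrm{del}_i]$. A case analysis that exactly mirrors Claims~1 and~2 in the proof of Lemma~\ref{lem:NumberofBadSuffixDistances}, specialised to the parameter value $t=1$, gives the update rules $\Phi(i)=\max(0,\Phi(i-1)-1)$ when $k_i=0$ and $\Phi(i)=\Phi(i-1)+\tfrac{k_i}{1-\eps}-1$ when $k_i\ge 1$. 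Since the reduction above says that a misdecoding at a successfully transmitted $S[i]$ forces $\Phi(i)>0$, it suffices to bound the count $N$ of successfully transmitted indices $i$ with $\Phi(i)>0$.

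Partition the steps into $C_1=\{i:k_i=0\}$ (pure successes), $C_2=\{i:\iota_i\ge 1,\ [\mathrm{del}_i]=0\}$ (insertion accompanied by success), and $C_3=\{i:[\mathrm{del}_i]=1\}$ (deletion steps), and set $E=|C_2|+|C_3|$, noting that $|C_3|=c_d$. Telescoping the two update rules gives
\[0\le \Phi(n)=\tfrac{c_i+c_d}{1-\eps}-E-\sum_{i\in C_1}\min(1,\Phi(i-1)),\]
so $\sum_{i\in C_1}\min(1,\Phi(i-1))\le \tfrac{c_i+c_d}{1-\eps}-E$. The pure-success steps contributing to $N$ are exactly those $i\in C_1$ with $\Phi(i-1)>1$, whose number is dominated by this sum; meanwhile every $i\in C_2$ has $\Phi(i)\ge \tfrac{1}{1-\eps}-1>0$ automatically and contributes an additional $|C_2|=E-c_d$. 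Summing,
\[N\le \Big(\tfrac{c_i+c_d}{1-\eps}-E\Big)+(E-c_d)=\tfrac{c_i}{1-\eps}+\tfrac{\eps c_d}{1-\eps},\]
which is the claimed bound.

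The main obstacle, and the only reason the naively symmetric bound $\tfrac{c_i+c_d}{1-\eps}$ improves to the asymmetric $\tfrac{c_i+\eps c_d}{1-\eps}$, lies in the final bookkeeping step: the $-1$ appearing in every error-step update of $\Phi$ must be charged either to a deletion step (which is excluded from $N$ by definition) or to an insertion-and-success step (whose contribution to $N$ has already been counted in $|C_2|$). This cancellation is exactly what makes $RSPD$-based indexing strong enough to yield near-MDS insdel codes in the deletion-only regime, and it is the place where the argument departs from the symmetric counterpart for $RSD$ in Section~\ref{sec:sync_decoding}.
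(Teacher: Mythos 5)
Your proof is correct and follows essentially the same route as the paper: you reduce a misdecoding at a successfully transmitted $S[i]$ to the event that $RSPD(S[1,i],\tilde c[1..j])>1-\eps$ via Lemma~\ref{lemma:SD_unique}, upper-bound $RSPD$ by the relative suffix error density by plugging in the natural matching $\tau^\star$, and then bound the number of positions where this density exceeds $1-\eps$ by a potential-function argument. The paper does exactly this, but simply invokes Lemma~\ref{lem:NumberofBadSuffixDistances} with $t=1$, whose conclusion $\frac{t(c_i+c_d)}{1-\eps}-c_d$ specializes to $\frac{c_i+\eps c_d}{1-\eps}$; you instead re-derive the same potential bookkeeping from scratch with the explicit $C_1,C_2,C_3$ partition. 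That telescoping, $0\le\Phi(n)=\frac{c_i+c_d}{1-\eps}-E-\sum_{i\in C_1}\min(1,\Phi(i-1))$, combined with $N\le|\{i\in C_1:\Phi(i-1)>1\}|+|C_2|$ and $|C_2|=E-c_d$, is a clean and correct way to see why the $-1$ term in the error-step update exactly cancels the error-step count $E$ and leaves a $-c_d$. One small caveat on your closing remark: the $-c_d$ subtraction is already present in Lemma~\ref{lem:NumberofBadSuffixDistances} for every $t$ (including the $t=2$ case governing $RSD$), so it is not by itself what distinguishes $RSPD$ from $RSD$; the genuine departure is that Lemma~\ref{lemma:SD_unique} lets the $RSPD$ decoder succeed below the full threshold $1-\eps$, avoiding the triangle-inequality halving to $\frac{1-\eps}{2}$ that forces $t=2$ for $RSD$. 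Only the combination of $t=1$ with the $-c_d$ savings yields the near-MDS deletion-only guarantee.
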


\begin{proof}
The proof of this theorem is similar to the proof of Theorem~\ref{thm:MinRSDMisDec}. Let prefix $S[1, i]$ be sent through the channel $S_\tau[1, j]$ be received on the other end as the result of adversary's set of actions $\tau$. Further, assume that $S_\tau[j]$ is successfully transmitted and is actually $S[i]$ sent be the other end. We first show that $RSPD(S[1, i], S'[1, j])$ is less than the relative suffix error density:
\begin{eqnarray*}
RSPD(S[1, i], S'[1, j]) & = & \min_{\tilde\tau: c \to \tilde{c}}\left\{\max_{k = 1}^{|\tilde\tau_1|} \left\{ \frac{sc\left(\tilde\tau_1\left[k,|\tilde\tau_1|\right]\right) + sc\left(\tilde\tau_2\left[k,|\tilde\tau_2|\right]\right)}{|\tilde\tau_1| - k + 1 - sc\left(\tilde\tau_1\left[k,|\tilde\tau_1|\right]\right)}\right\}\right\}\\
&\le& \max_{k = 1}^{|\tau_1|} \left\{ \frac{sc\left(\tau_1\left[k,|\tau_1|\right]\right) + sc\left(\tau_2\left[k,|\tau_2|\right]\right)}{|\tau_1| - k + 1 - sc\left(\tau_1\left[k,|\tau_1|\right]\right)}\right\}\\
&=& \max_{j\le i}\frac{\mathcal{E}(j, i)}{i - j} = \text{ Relative Suffix Error Density}
\end{eqnarray*}

Now, using Theorem~\ref{lem:NumberofBadSuffixDistances}, we know that the relative suffix error density is smaller than $1-\eps$ upon arrival of all but at most $\frac{c_i+d_d}{1-\eps}-c_d$ of successfully transmitted symbols. 
Along with Lemma~\ref{lemma:SD_unique}, this results into the conclusion that the minimum RSPD decoding guarantees $\frac{c_i}{1-\eps} + c_d \left(\frac{1}{1-\eps}-1\right)$ misdecodings.
This finishes the proof of the theorem.
\end{proof}
}
\fullOnly{\RSPD}

}

\section*{Acknowledgements}
The authors thank Ellen Vitercik and Allison Bishop for valuable discussions in the early stages of this work. 
\newpage
\shortOnly{
\begin{center}
\bfseries \huge Appendices
\end{center}

\appendix

\shortOnly{
\section{Proof of Theorems in Section~\ref{sec:codings}}\label{app:ECCviaIndexing}
\ProofOfMisDecodeToHalfError
\EncDecAlgorithms
\section{Proof of Theorems in Section~\ref{sec:sync}}\label{app:chapter3}

\ProofOfMetricPropertiesOfRSD

We now move on to proving Theorem~\ref{thm:existence}. Its prove requires the general Lov\'{a}sz local lemma which we recall first here:

\StatementofLemLLL

\ProofofThmExistence

\AlphabetSizeRemarks

\ProofOfInfiniteSynchExistence

\subsection{Decoding}\label{app:MinRSDDecoding}

\RSDMinDistanceDecoding

\RSPD

\section{Proof of Theorems in Section~\ref{sec:improved_decoding}}\label{app:improved_decoding}
\SynchToSelfMatchingProperty

\ProofOfEpsBadIndsInEpsBadStrings

\ProofOfEpsSelfMatchingRandomConstruction

\ProofOfLimitedIndependenceRelaxedConstruction

\exclude{
\ProofofThmFiniteSyncConstruction

\ProofofThminfiniteSync

}

}

}

\bibliographystyle{plain}
\bibliography{bibliography}

\begin{thebibliography}{10}

\bibitem{backurs2015edit}
Arturs Backurs and Piotr Indyk.
\newblock Edit distance cannot be computed in strongly subquadratic time
  (unless seth is false).
\newblock In {\em Proceedings of the Forty-Seventh Annual ACM on Symposium on
  Theory of Computing}, pages 51--58. ACM, 2015.

\bibitem{braverman2015coding}
Mark Braverman, Ran Gelles, Jieming Mao, and Rafail Ostrovsky.
\newblock Coding for interactive communication correcting insertions and
  deletions.
\newblock In {\em Proceedings of the International Conference on Automata,
  Languages, and Programming (ICALP)}, 2016.

\bibitem{HaeuplerSICOMP13p2155}
Karthekeyan Chandrasekaran, Navin Goyal, and Bernhard Haeupler.
\newblock Deterministic algorithms for the lovász local lemma.
\newblock {\em SIAM Journal on Computing (SICOMP)}, pages 2132--2155, 2013.

\bibitem{gelles2015coding}
Ran Gelles.
\newblock Coding for interactive communication: A survey, 2015.

\bibitem{HaeuplerSODA15p1311}
Ran Gelles and Bernhard Haeupler.
\newblock Capacity of interactive communication over erasure channels and
  channels with feedback.
\newblock {\em Proceeding of the ACM-SIAM Symposium on Discrete Algorithms
  (SODA)}, pages 1296--1311, 2015.

\bibitem{HaeuplerFOCS14p403}
Mohsen Ghaffari and Bernhard Haeupler.
\newblock Optimal error rates for interactive coding {II}: Efficiency and list
  decoding.
\newblock {\em Proceeding of the IEEE Symposium on Foundations of Computer
  Science (FOCS)}, pages 394--403, 2014.

\bibitem{HaeuplerSTOC14p803}
Mohsen Ghaffari, Bernhard Haeupler, and Madhu Sudan.
\newblock Optimal error rates for interactive coding {I}: Adaptivity and other
  settings.
\newblock {\em Proceeding of the ACM Symposium on Theory of Computing (STOC)},
  pages 794--803, 2014.

\bibitem{golomb1963synchronization}
SW~Golomb, J~Davey, I~Reed, H~Van~Trees, and J~Stiffler.
\newblock Synchronization.
\newblock {\em IEEE Transactions on Communications Systems}, 11(4):481--491,
  1963.

\bibitem{guruswami2005linear}
Venkatesan Guruswami and Piotr Indyk.
\newblock Linear-time encodable/decodable codes with near-optimal rate.
\newblock {\em IEEE Transactions on Information Theory}, 51(10):3393--3400,
  2005.

\bibitem{GL-isit16}
Venkatesan Guruswami and Ray Li.
\newblock Efficiently decodable insertion/deletion codes for high-noise and
  high-rate regimes.
\newblock In {\em Proceedings of the 2016 IEEE International Symposium on
  Information Theory}, 2016.

\bibitem{guruswami2008explicit}
Venkatesan Guruswami and Atri Rudra.
\newblock Explicit codes achieving list decoding capacity: Error-correction
  with optimal redundancy.
\newblock {\em IEEE Transactions on Information Theory}, 54(1):135--150, 2008.

\bibitem{guruswami2015entropy}
Venkatesan Guruswami and Ameya Velingker.
\newblock An entropy sumset inequality and polynomially fast convergence to
  shannon capacity over all alphabets.
\newblock {\em Proceedings of the 30th Conference on Computational Complexity},
  pages 42--57, 2015.

\bibitem{GW-random15}
Venkatesan Guruswami and Carol Wang.
\newblock Deletion codes in the high-noise and high-rate regimes.
\newblock In {\em Proceedings of the 19th International Workshop on
  Randomization and Computation (RANDOM)}, pages 867--880, 2015.

\bibitem{guruswami2015polar}
Venkatesan Guruswami and Patrick Xia.
\newblock Polar codes: Speed of polarization and polynomial gap to capacity.
\newblock {\em IEEE Transactions on Information Theory}, 61(1):3--16, 2015.

\bibitem{HaeuplerFOCS14p235}
Bernhard Haeupler.
\newblock Interactive channel capacity revisited.
\newblock {\em Proceeding of the IEEE Symposium on Foundations of Computer
  Science (FOCS)}, pages 226--235, 2014.

\bibitem{HaeuplerJACM12p28}
Bernhard Haeupler, Barna Saha, and Aravind Srinivasan.
\newblock New constructive aspects of the lovász local lemma.
\newblock {\em Journal of the ACM (JACM)}, pages 28:1--28:28, 2012.

\bibitem{kol2013interactive}
Gillat Kol and Ran Raz.
\newblock Interactive channel capacity.
\newblock In {\em "Proceedings of the Annual Symposium on Theory of Computing
  (STOC)"}, pages 715--724, 2013.

\bibitem{Levenshtein65}
Vladimir Levenshtein.
\newblock Binary codes capable of correcting deletions, insertions, and
  reversals.
\newblock {\em Doklady Akademii Nauk SSSR 163}, 4:845--848, 1965.

\bibitem{li2003linear}
S-YR Li, Raymond~W Yeung, and Ning Cai.
\newblock Linear network coding.
\newblock {\em IEEE transactions on information theory}, 49(2):371--381, 2003.

\bibitem{LTcodes}
Michael Luby.
\newblock {LT} codes.
\newblock {\em Proceedings of the IEEE Symposium on Foundations of Computer
  Science (FOCS)}, pages 271--282, 2002.

\bibitem{mercier2010survey}
Hugues Mercier, Vijay~K Bhargava, and Vahid Tarokh.
\newblock A survey of error-correcting codes for channels with symbol
  synchronization errors.
\newblock {\em IEEE Communications Surveys \& Tutorials}, 1(12):87--96, 2010.

\bibitem{mitzenmacher2009survey}
Michael Mitzenmacher.
\newblock A survey of results for deletion channels and related synchronization
  channels.
\newblock {\em Probability Surveys}, 6:1--33, 2009.

\bibitem{MoserTardos2010}
Robin~A. Moser and Gabor Tardos.
\newblock A constructive proof of the general lov\'{a}sz local lemma.
\newblock {\em Journal of the ACM (JACM)}, 57(2):11, 2010.

\bibitem{naor1993small}
Joseph Naor and Moni Naor.
\newblock Small-bias probability spaces: Efficient constructions and
  applications.
\newblock {\em SIAM journal on computing}, 22(4):838--856, 1993.

\bibitem{schulman1999asymptotically}
Leonard~J. Schulman and David Zuckerman.
\newblock Asymptotically good codes correcting insertions, deletions, and
  transpositions.
\newblock {\em IEEE Transactions on Information Theory (TransInf)},
  45(7):2552--2557, 1999.

\bibitem{sloane2002single}
Neil~JA Sloane.
\newblock On single-deletion-correcting codes.
\newblock {\em Codes and Designs, de Gruyter, Berlin}, pages 273--291, 2002.

\bibitem{spielman1995linear}
Daniel~A Spielman.
\newblock Linear-time encodable and decodable error-correcting codes.
\newblock {\em Proceedings of the ACM Symposium on Theory of Computing (STOC)},
  pages 388--397, 1995.

\bibitem{thue1977uber}
A~Thue.
\newblock Uber die gegenseitige lage gleicher teile gewisser zeichenreihen
  (1912).
\newblock {\em Selected mathematical papers of Axel Thue,
  Universitetsforlaget}, 1977.

\bibitem{tsfasman2013algebraic}
Michael Tsfasman and Serge~G Vladut.
\newblock {\em Algebraic-geometric codes}, volume~58.
\newblock Springer Science \& Business Media, 2013.

\end{thebibliography}

\end{document}